\newcommand{\Z}{\mathbb{Z}}
\newcommand{\EX}{\mathbf{E}}
\newcommand{\PR}{\mathbf{P}}
\newcommand{\R}{\mathbb{R}}
\newcommand{\vlambda}{\bm{\lambda}}
\newcommand{\C}{\mathcal{C}}
\newcommand{\bu}{\mathbf{u}}
\newtheorem{theorem}{Theorem}
\newtheorem{lemma}{Lemma}
\newtheorem{definition}{Definition}
\def\mc{\mathcal}
\def\mbf{\mathbf}
\def\algo#1{
\ifnum#1=1{
  \text{{\sc step\_flex}}
}      
\else \ifnum#1=2{
  \text{{\sc step\_semiflex}}
}
\else{
  \text{{\sc step\_inflex}}
}
\fi
\fi
}
\begin{document}

\title{Efficient and Flexible Crowdsourcing of Specialized Tasks with Precedence Constraints} 
\author{Avhishek~Chatterjee, Michael~Borokhovich, Lav~R.~Varshney, and Sriram~Vishwanath
\thanks{A.~Chatterjee and L.~R. Varshney are with the Coordinated Science Laboratory, University of Illinois at Urbana-Champaign, Urbana, Illinois, USA. (email: \{avhishek,varshney\}@illinois.edu).}
\thanks{M.~Borokhovich is with the AT\&T Labs, New Jersey, USA. (email: michaelbor@utexas.edu).}
\thanks{S.~Vishwanath is with the Wireless Networking and Communication Group, The University of Texas at Austin, Austin, Texas, USA. (email: sriram@austin.utexas.edu).}
\thanks{Part of the material in this paper will be presented at IEEE INFOCOM 2016, San Francisco, USA \cite{ChatterjeeBVV2016}.}
}

\maketitle
\begin{abstract}
Many companies now use crowdsourcing to leverage external (as well as internal) crowds to perform specialized work,
and so methods of improving efficiency are critical.  Tasks in crowdsourcing systems with specialized work have multiple steps and each 
step requires multiple skills. Steps may have different flexibilities in terms of obtaining service from one or multiple agents, due to varying 
levels of dependency among parts of steps. Steps of a task may have precedence constraints among them. Moreover, there are variations in loads 
of different types of tasks requiring different skill-sets and availabilities of different types of agents with different skill-sets. 
Considering these constraints together necessitates the design of novel schemes to allocate steps to agents. In addition, large crowdsourcing 
systems require allocation schemes that are simple, fast, decentralized and offer customers (task requesters) the freedom to choose agents. 
In this work we study the performance limits of such crowdsourcing systems and propose efficient allocation schemes that provably meet the 
performance limits under these additional requirements.  We demonstrate our algorithms on data from a crowdsourcing platform run by a 
non-profit company and show significant improvements over current practice.
\end{abstract}

\section{Introduction}
\label{sec:intro}

The nature of knowledge work has changed to the point nearly all large companies use \emph{crowdsourcing} 
approaches, at least to some extent \cite{Cuenin2015}.  The idea is to draw on the cognitive energy of people, either 
within a company or outside of it \cite{TapscottW2006}. A particularly notable example is the non-profit \emph{impact sourcing} service 
provider, Samasource, which relies on a marginalized population of workers to execute work, operating under the notion \emph{give work, not aid}
\cite{GinoS2012,MarcusP2015}.

There are multifarious crowdsourcing structures \cite{MaloneLD2010,BoudreauL2013} that each require different strategies 
for matching work to agents \cite{DustdarG2011}. Contest-based platforms such as TopCoder and InnoCentive put out open 
calls for participation, and best submissions win prizes \cite{DiPalantinoV2009}.  Microtask platforms such as Amazon Mechanical Turk 
allocate simple tasks on a first-come-first-serve basis to any available crowd agent.  When considering platforms with skilled crowds and
specialized work, such as oDesk (now upWork) \cite{BoudreauL2013}, IBM's Application Assembly Optimization platform \cite{VarshneyACSOLR2014_arXiv},
and to a certain extent Samasource's SamaHub platform \cite{GinoS2012}, efficient allocation algorithms are needed.

In these skill-based crowdsourcing platforms, the specialized tasks have multiple steps, each requiring one or more skills. 
For example, software development tasks may first be planned (architecture), then developed (programming), and finally tested 
(testing and quality assurance), perhaps with several iterations.  Even in skilled microtasking platforms like SamaHub, most 
jobs have more than one step. Task steps often have \emph{precedence constraints} between them, implying that a particular step of a 
task can only be performed after another set of steps has been completed.

To serve a step requiring multiple skills, we need either a single agent that has all of the skills or a group of agents that collectively
do so. Whether multiple agents can be pooled to serve a step or not depends on the \emph{flexibility} of the step: if there are strong 
interdependencies between different parts of a step, the step may require a single agent.  Notions of flexibility and precedence constraints 
are central to this paper.

Allocating tasks to servers is a central problem in computer science \cite{KleinbergT2006}, communication networks \cite{SrikantY2014},
and operations research \cite{Pinedo2012}. The skill-based crowdsourcing setting, however, poses new challenges for task allocation in terms of 
vector-valued service requirements, random and time-varying resource (agents) availability, large system size, a need for simple decentralized 
schemes requiring minimal actions from the platform provider, and the freedom of customers (task requesters) to choose agents without compromising
system performance.  Some of these issues have been addressed in recent work \cite{pang2014service,ChatterjeeVV2015}, but previous work does not address 
precedence constraints or step flexibility.  The notion of flexibility in \cite{ChatterjeeVV2015} is based on agent-categories and is different from here.

Task allocation with precedence constraints has been studied in theoretical computer science, as follows.  Given several tasks, precedence 
constraints among them, and one or more machines (either same or different speed), allocate tasks to minimize the weighted sum of completion 
times or maximum completion time \cite{ChudakS1999}. In crowdsourcing, we have a stream of tasks arriving over time and so we are interested 
in dynamics. 

Dynamic task allocation with precedence constraints 
has recently been studied in  \cite{Pedarsani2015} for Bernoulli task arrivals.
This is different from crowdsourcing scenarios, 
and the optimal scheme is required to search over
the set of possible allocations, which is not suitable for crowdsourcing 
systems due to their inherent high-dimensionality 
(many types of tasks). Additional challenges in a 
crowdsourcing platform are: (i) random and time-varying agent availability; (ii) vector-valued service requirements; 
(iii) fast computation requirements for scaling; and (iv) freedom of choice for customers.

Here we address the above issues for various flexibilities of steps and agents, to characterize limits of crowd 
systems and develop optimal, computationally-efficient, centralized allocation schemes.  Based on insights garnered, we further present
fast decentralized greedy schemes with good performance guarantees. To complement our theoretical results, we also present numerical studies 
on real and synthetic data, drawn from Samasource's SamaHub platform. 

The remainder of the paper is organized as follows. Sec.~\ref{sec:model} describes the system model for crowdsourcing platforms with different 
precedence and flexibility constraints. Sec.~\ref{sec:capacity} presents a generic characterization of the system limits and a generic
centralized optimal allocation scheme. Secs.~\ref{sec:inflexAgentsFlexSteps}--\ref{sec:flexAgentsInflexSteps} address particular 
systems with different flexibility constraints to yield fast decentralized schemes that meet crowdsourcing platform requirements mentioned above.
Sec.~\ref{sec:evaluations} presents numerical studies on real and synthetic data. Detailed proofs of theoretical results are given in 
Appendix~\ref{sec:proofs}.

\section{System Model}
\label{sec:model}
There are a total of $S$ kinds of skills available in the crowdsourcing system, numbered $[S]=\{1, 2, \dots, S\}$.
We define types of agents by skills, and denote the total number of types of agents by $M$. An agent of type $m$ has 
skills $S_m \subset [S]$.

Tasks posted on the platform are of $N$ types. Each type of task $j$ has one or multiple steps associated with it, denoted 
$K_j$. A step $k \in \{1, 2, \dots, K_j\}$ of a job type $j$---a $(j,k)$-step---needs a skill-hour service vector 
$r_{j,k} \in \R_+^{S}$ (non-negative orthant), i.e.\ $r_{j,k,s}$ hours of skill $s$. A part of a step of type $(j,k)$ involving skill $s$ is called a $(j,k,s)$-substep 
if $r_{j,k,s}>0$, the size of this substep.

In the platform, allocations of work to available agents happen at regular time intervals, $t  = 1, 2, \dots$. 
Tasks that arrive after an epoch $t$ are considered for allocation at epoch $t+1$, based on the available agents at that epoch. Tasks
or parts of tasks that remain unallocated due to insufficient available skilled agents are considered again in the next epoch. We assume that 
for any substep $(j,s)$, the time requirement is less than the duration between two 
allocation epochs.

Tasks arrive according to a stochastic process in $\mathbb{Z}_+^N$ (non-negative orthant), $\mathbf{A}(t)=\left(A_1(t), A_2(t), \dots, A_N(t)\right)$, where
$A_i(t)$ is the number of tasks of type $i$ that arrive between epochs $t-1$ and $t$. The stochastic process of available agents at epoch $t$ is 
$\mbf{U}(t)=({U}_1(t), {U}_2(t), \dots, {U}_M(t))$. We assume $\mathbf{A}(t)$ and $\mathbf{U}(t)$ are independent of each other and that each 
of the processes are i.i.d.\ for each $t$, with bounded second moments. Let $\Gamma(\cdot)$ be the distribution 
function of $\mbf{U}(t)$, and let $\bm{\lambda}=\EX[\mathbf{A}(t)]$ and $\bm{\mu}=\EX[\mathbf{U}(t)]$ be the means of the processes.

An agent is \emph{inflexible} if it has pre-determined how much time to spend on each of its skills. Inflexible agents bring a vector 
$\mbf{h}_m = (h_{m,1}, h_{m,2}, \dots, h_{m,S})$ where $h_{m,s} > 0$ if and only if $s \in S_m$ and an inflexible agent spends no more than 
$h_{m,s}$ time for skill $s$.  Contrariwise, \emph{flexible} agents bring a total time $h_m$ which can be arbitrarily split across skills in $S_m$.

A step is \emph{flexible} if it can be served by any collection of agents pooling their service-times. All substeps of \emph{inflexible} steps 
must be allocated to one agent.  At any epoch $t$ only an integral allocation of a step is possible. Hence, in any 
system for a step to be allocated, all of its substeps must be allocated.

A set of flexible substeps $sst_1, sst_2, \dots, sst_n$ of size $x_1, x_2, \dots, x_n$ with skill $s$ can be allocated to agents $1, 2, \dots, m$ if
the \emph{available} skill-hours\footnote{Available skill-hour is determined by the availability of 
the agent, system state, and whether agents are flexible or inflexible.}
 for skill $s$ of these agents, $y_1, y_2, \dots, y_m$, satisfy the following
for some $\{v_{pq} \ge 0:1\le p\le n, 1 \le q \le m\}$,

\begin{equation}
\sum_{p=1}^n v_{ip} \le y_i\ i \in [m],  \sum_{q=1}^m v_{qj} \ge x_j\ j \in [n]\mbox{,}
\label{eq:taskAlloCons1}
\end{equation}
\noindent
where the $\{v_{pq}\}$ capture how agents split their available skill hours across substeps.

For inflexible steps, a set of steps $st_1, st_2, \dots, st_n$ of size (vectors) $\mbf{x}_1, \mbf{x}_2, \dots, \mbf{x}_n$ can be allocated 
to an agent with available skill-hours (vector) $\mbf{y}$ if
\begin{equation}
\sum_{i=1}^n \mbf{x}_i \le \mbf{y} \mbox{.}
\label{eq:taskAlloCons2}
\end{equation}

There may also be precedence constraints on the order in which different steps of a 
task of type $j$ can be served. For any task of type $j$, this constraint 
is given by a directed rooted tree $T_j$ on $K_j$ nodes where a directed edge $(k\to k'), k, k' \in [K_j]$ implies step $k'$ of a task of type $j$ can only be 
served after step $k$ of the same task has been completed.

Scalings of several crowd system parameters are as follows. Task arrival rate $\lambda(N) = \sum_{j=1}^N \lambda_j$ scales faster than number of task types 
$N$, i.e.\ $\lim_{N \to \infty} N/\lambda(N)=0$. Number of skills $S$ scales slower than $N$, i.e.\ $S=o(N)$. In practice, a task
requires a constant number of skills $d$, which implies $\Omega(S^d)$ possible types of 
steps. Number of skills of an 
agent is $d'=O(1)$ implying $M = \sum_l M^l=O(S^{d'})$, implying
$M=O(N^\alpha_1)$ and $M=\Omega(N^{\alpha_2})$ for $0<\alpha_2<\alpha_1$. 
Further, the length of tasks and availability of agents do not scale with 
the size of crowdsourcing systems. 

Beyond these practical system scalings, we make the following mild assumptions: 
$\lambda_j = \omega(1)$ for all $j$ and 
$\lambda^s(N)) = \sum_{j: r_{j,1,s}>0} \lambda_j(N) = \Omega\left(N^c\right)$ $\forall s \in [S]$, for some $c > 0$. These assumptions mean the
arrival rate of every type of job and the total number of jobs requiring a particular skill scale with the system.
We call these scaling patterns \emph{crowd-scaling}.

\section{Notions of Optimality}
\label{sec:capacity}

To formally characterize the maximal supportable arrival rate of tasks, we introduce some more notation and invoke some well-accepted notions used in this regard.

For each $j \in [N]$, let the number of unfinished tasks in the system \emph{just after} allocation epoch $t-1$ be $Q_j(t)$. $A_j(t)$ is the 
number of tasks of type $j$ arriving between epochs $t-1$ and $t$. The number of tasks of type $j$ completely allocated (all steps)
at epoch $t$ is $D_j(t)$. Thus $Q_j(t)$ evolves as:
\begin{equation}
Q_j(t+1) = Q_j(t) + A_j(t) - D_j(t)\mbox{.} 
\label{eq:Qevolution}
\end{equation}
Clearly $D_j(t) \le Q_j(t)+A_j(t)$ at any epoch $t$, since at most $Q_j(t)+A_j(t)$ type $j$ tasks are available. Hence $Q_j(t) \ge 0$ for all $t$. 
Note that due to additional precedence constraints, typically $D_j(t) < Q_j(t)+A_j(t)$. 

\begin{definition}
A scheme of allocation of tasks is called a \emph{policy} if it allocates tasks at a time epoch $t$ based on knowledge of statistics of the processes 
$\mbf{A}$ and $\mbf{U}$ and their realizations up to time $t$, but does not depend on future values.
\end{definition}
\begin{definition}
A crowd system is \emph{stable} under policy $\mathcal{P}$ if the process $\mathbf{Q}(t)=\left(Q_j(t), j \in [N]\right)$ has a finite 
expectation in steady-state under that policy, i.e., $\lim\sup_{t \to \infty} \EX[Q_j(t)] < \infty$, for all $j$ for any initial condition.
\end{definition}
\begin{definition}
An arrival rate $\vlambda$ is \emph{stabilizable} if there exists a policy $\mathcal{P}$ under which $\mathbf{Q}(t)=\left(Q_j(t), j \in [N]\right)$ 
is stable.
\end{definition}
\begin{definition}
The \emph{capacity region} of a crowd system for a given distribution $\Gamma$ of the agent-availability process $\mathbf{U}(t)$ is the closure of the set 
$\mathcal{C}_{\Gamma}=\{\bm{\lambda}: {\vlambda} \ \mbox{is stabilizable}\}$.
\end{definition}
We aim to propose statistics-agnostic, computationally simple and decentralized schemes that offer customers freedom of choice while stabilizing 
any arrival rate in the system's capacity region. Stronger than stability, often we give high probability bounds on number of unallocated tasks.

\section{Capacity and Centralized Allocation Routine}
Here we present a generic characterization of the capacity region of a crowd system for all combinations of agent- and task-flexibility. 
We also give a generic centralized allocation routine that can be easily adapted to a particular system.

For any given set of available agents $\bu=\left(u_i: 1 \le i \le M \right)$, 
define the number of different types of steps ($\{a_{j,k}\}$) that can \emph{potentially} be allocated in a crowd system by $C(\bu) \subset \R_+^{\sum_j K_j}$.
When we say $\{a_{j,k}\}$ is the number of steps of different types that can potentially be allocated, we consider the following scenario
that satisfies the allocation constraints in Sec.~\ref{sec:model}.
\begin{enumerate}
\item[A1] An infinite number of steps of each type $(j,k)$, $k \in [K_j]$  for a $j \in [N]$ are available for allocation, i.e., the limitation only 
comes from the available resource $\bu$.

\item[A2] Precedence constraints among the steps are already satisfied, i.e., all corresponding $(j,k)$-steps of the available $(j,k')$-steps have already 
been allocated previously.  This is equivalent to an absence of precedence constraints.

\item[A3] Integral steps must be allocated, i.e., all substeps of a step need to be allocated for allocation of the step.

\item[A4]  To allocate $a_{j,k}$ steps of different types to a collection of $R$ agents of type $\{m_r: r \in [R]\}$ and available hours 
$\{y_{m_r,s}:r \in [R]\}$ (which is a function of $\{\mbf{h}_{m_r}: r \in [R]\}$ depending on the system), we need to satisfy either 
\eqref{eq:taskAlloCons1} or \eqref{eq:taskAlloCons2} depending on system type.
\end{enumerate}

Let $C^{\text{cvx}}(\bu)$ be the convex hull of the set $C(\bu)$, and define another set $\mbf{C} \subset \R_+^{\sum_j K_j}$ as follows.
\[
\mbf{C} = \left\{\sum_{\bu} \Gamma(\bu) \mbf{a}(\bu) : \mbf{a}(\bu) \in C^{\text{cvx}}(\bu)\right\}
\]
Based on this we define another set $\mc{C} \subset \R_+^{N}$. Let for any $\mbf{a} \in \R_+^N$, 
${\mbf{a}}^E := ((a_1, a_1, \ldots, K_1 \mbox{ times}), (a_2, a_2, \ldots, K_2 \mbox{ times})$, 
$\ldots, (a_N, a_N, \ldots, K_N \mbox{ times}))$.  Then $\mc{C} = \{\mbf{a}: {\mbf{a}}^E \in \mbf{C}\}$. 
This set characterizes the capacity region of the crowd system.
\begin{theorem}
\label{thm:capacity}
Any arrival rate $\vlambda$ is stabilizable if for some $\epsilon > 0$,
$\vlambda + \epsilon \mbf{1} \in \mc{C}$ and no arrival rate $\vlambda$
can be stabilized if $\vlambda$ is outside the closure of the set $\mc{C}$.
\end{theorem}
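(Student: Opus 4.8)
The plan is to prove the two directions separately: a \emph{converse} showing no $\vlambda$ outside $\overline{\mc C}$ is stabilizable, via a rate-conservation (flow-balance) argument combined with the law of large numbers; and an \emph{achievability} part exhibiting, for any $\vlambda$ with $\vlambda+\epsilon\mbf 1\in\mc C$, a concrete stationary randomized policy that stabilizes it, whose stability is verified by cascading an elementary single-queue argument along each precedence tree $T_j$. Two structural facts will be used repeatedly: $C(\bu)$ is closed under coordinatewise decrease of step counts (un-allocating whole steps only slackens \eqref{eq:taskAlloCons1}--\eqref{eq:taskAlloCons2} and preserves A3), hence so are $C^{\text{cvx}}(\bu)$ and $\mbf C$; and each $\mbf a(\bu)\in C^{\text{cvx}}(\bu)$ is, by Carath\'eodory, a finite convex combination of integral points of $C(\bu)$, i.e.\ the mean of a distribution $\phi_\bu$ supported on feasible allocations.

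\emph{Converse.} Suppose $\vlambda$ is stabilizable under a policy $\mathcal P$, and let $B_{j,k}(t)$ be the number of $(j,k)$-steps $\mathcal P$ allocates at epoch $t$. Since $\mathcal P$ must satisfy the allocation constraints of Sec.~\ref{sec:model} with the actual finite supply of ready steps, and since passing to infinite supply and dropping precedence (A1--A2) only enlarges the feasible set, $(B_{j,k}(t))_{j,k}\in C(\mbf U(t))$ at every $t$ (using downward-closedness of $C(\bu)$). From \eqref{eq:Qevolution} and $\EX[Q_j(t)]$ bounded we get $\tfrac1T\sum_{t\le T}D_j(t)\to\lambda_j$ along a subsequence, and since each completed type-$j$ task contributes exactly one allocation of each of its $K_j$ steps at or before completion, $\sum_{t\le T}B_{j,k}(t)\ge\sum_{t\le T}D_j(t)$, so $\liminf_T\tfrac1T\sum_{t\le T}B_{j,k}(t)\ge\lambda_j$ for all $k$. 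Grouping epochs by the value of $\mbf U(t)$, the LLN gives that the fraction with $\mbf U(t)=\bu$ tends to $\Gamma(\bu)$ while the running average of $(B_{j,k}(t))$ over those epochs stays in $C^{\text{cvx}}(\bu)$; passing to a further subsequence along which all averages converge (finite second moments give the uniform integrability needed to interchange), the time-average allocation converges to some $\mbf c=\sum_\bu\Gamma(\bu)\mbf a(\bu)\in\overline{\mbf C}$ with $c_{j,k}\ge\lambda_j$ for all $(j,k)$, i.e.\ $\vlambda^E\le\mbf c$. Downward-closedness of $\overline{\mbf C}$ then forces $\vlambda^E\in\overline{\mbf C}$, so $\vlambda\in\overline{\mc C}$; contrapositively, $\vlambda\notin\overline{\mc C}$ is not stabilizable.

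\emph{Achievability.} Fix $\epsilon>0$ with $(\vlambda+\epsilon\mbf1)^E\in\mbf C$, and pick $\mbf a(\bu)\in C^{\text{cvx}}(\bu)$ with $\sum_\bu\Gamma(\bu)\mbf a(\bu)=(\vlambda+\epsilon\mbf1)^E$; let $\phi_\bu$ be the corresponding distributions on feasible integral allocations. Define the base policy $\mathcal P^\ast$: at epoch $t$, observe $\mbf U(t)$, draw a target $\mbf b\sim\phi_{\mbf U(t)}$ independently of everything else, and for each step type $(j,k)$ allocate as many ready $(j,k)$-steps as available, up to $b_{j,k}$ --- a feasible action, since this allocates a vector $\le\mbf b\in C(\mbf U(t))$. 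For each $j$ track $Q_{j,k}(t)$, the number of type-$j$ tasks whose $T_j$-parent step has been allocated but whose step $k$ has not; the root queue sees the external arrivals $A_j(t)$ (rate $\lambda_j$), each non-root $Q_{j,k}$ sees as arrivals the one-epoch-delayed departures of its parent queue, and in all cases the potential service offered to step $(j,k)$ is $b_{j,k}(t)$, which is i.i.d.\ across $t$ with mean $\lambda_j+\epsilon$, bounded second moment, and independent of the arrival and queue processes. Each queue is thus a discrete-time single-server queue with i.i.d.\ arrivals and i.i.d.\ potential service; starting from the root (arrival rate $\lambda_j<\lambda_j+\epsilon$) it is positive recurrent with bounded steady-state mean and passes its arrival rate $\lambda_j$ through as its departure rate, so its children again face arrival rate $\lambda_j$ below their service rate $\lambda_j+\epsilon$, and inducting down the finite tree $T_j$ makes every $Q_{j,k}$ stable. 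Since an unfinished type-$j$ task is counted in at least one $Q_{j,k}$, $Q_j(t)\le\sum_k Q_{j,k}(t)$ has bounded steady-state expectation, so $\vlambda$ is stabilized.

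\emph{Main obstacle.} The delicate point is achievability: precedence couples the step-queues, and one must preclude a downstream step from starving. The acyclicity of each $T_j$ together with the $\epsilon$ slack is exactly what permits cascading the single-queue stability result edge by edge; the residual work is technical --- the Carath\'eodory step that turns the convex-hull description of $\mbf C$ into an implementable randomized rule, and carefully invoking (and checking the finite-second-moment hypotheses of) the fact that a stable discrete-time queue passes its arrival rate through unchanged. On the converse side the only subtlety is rigorizing ``the time-average allocation lies in $\overline{\mbf C}$'' through the subsequence/uniform-integrability argument while exploiting downward-closedness of $C(\bu)$.
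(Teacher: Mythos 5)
Your proof takes a genuinely different route from the paper's on both halves. For the converse, the paper does not use ergodic time-averaging at all: it first reduces to a system without precedence constraints (whose backlog lower-bounds the original sample-path-wise under the same policy), establishes that $\mc{C}$ is coordinate convex, extracts a nonnegative separating vector $h$ with $h^T\vlambda>\sup_{\mbf{x}\in\mc{C}}h^T\mbf{x}$, and then shows via Jensen's inequality that $\EX[h^T\mbf{\hat Q}(t)]$ has strictly positive drift, hence grows without bound. Your flow-balance/LLN argument reaches the same conclusion and arguably exposes more cleanly \emph{why} $\mbf{C}$ is the right set (the expected allocation at each epoch already lies in $\mbf{C}$ because the conditional mean given $\mbf{U}(t)=\bu$ lies in $C^{\text{cvx}}(\bu)$ --- you could use that observation to avoid the subsequence/uniform-integrability machinery entirely); the paper's separating-hyperplane argument avoids ergodic considerations altogether. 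For achievability, the paper does not construct a static randomized policy: it proves Theorem~\ref{thm:centralized}, i.e.\ that the statistics-agnostic max-weight-style centralized routine with tree weights $l_{j,r}$ stabilizes every $\vlambda$ with $\vlambda+\epsilon\mbf{1}\in\mc{C}$, via a Foster--Lyapunov argument with the Lyapunov function $\sum_{j,k}l_{j,k}Q_{j,k}^2$ and the reordering identity of Lemma~\ref{lem:reordering}. Your randomized policy buys simplicity and makes the role of the Carath\'eodory decomposition explicit, but it requires knowledge of $\Gamma$ and $\vlambda$, whereas the paper's policy is statistics-agnostic (which the authors need anyway for their later results).

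There is one genuine gap in your achievability cascade: you assert that each non-root queue $Q_{j,k}$ is ``a discrete-time single-server queue with i.i.d.\ arrivals,'' but the arrivals to a non-root node are the departures of its parent queue, and the departure process of a stable queue is not i.i.d.\ (it is correlated with the parent's backlog). Knowing only that the parent passes rate $\lambda_j$ through is not sufficient to conclude bounded steady-state expectation for the child. The fix is standard but must be stated: either invoke Loynes' theorem for a queue fed by a \emph{stationary ergodic} arrival process of rate $\lambda_j$ with independent i.i.d.\ service of rate $\lambda_j+\epsilon$ (the stationary departure process of the parent qualifies), or bound the child's arrivals over any window by the external arrivals plus the parent's initial backlog, as the paper effectively does in the induction over depths in the proof of Theorem~\ref{thm:prioGreedy} (cf.\ the inequality bounding $A_{j,k}(\tau{:}0)$ by ancestor queue lengths and external arrivals). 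With that repair, your argument goes through; without it, the induction down $T_j$ is not justified as written.
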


Note that we ignore the precedence constraint in defining $C(\bu)$. This
does not conflict with the fact the capacity region is a subset
of $\mc{C}$, but it may not be obvious $\mc{C}$ is
in fact the capacity region. \emph{A fortiori}, we show this with a scheme
that respects precedence constraints and stabilizes any rate in the interior of 
$\mc{C}$. 
\subsection{Centralized Allocation}
\label{sec:centralized}
Let us develop a statistics-agnostic scheme that stabilizes any arrival rate $\vlambda$.

Let $Q_{j,k}(t)$ be the number of unallocated $(j,k)$ steps just before allocation epoch $t$. This includes steps not allocated at epoch $t-1$ and steps 
that became available for allocation between $t-1$ and $t$. Thus, 
if for any $(j,k)$, $D_{j,k}(t)$ $(j,k)$-steps were allocated at epoch $t$ and 
$A_{j,k}(t+1)$ new $(j,k)$-steps became available between $t$ and $t+1$, 
\[
Q_{j,k}(t+1) = Q_{j,k}(t)-D_{j,k}(t)+A_{j,k}(t+1)\mbox{.}
\]
Note that, for any $j$ and $K_j \ge k >1$, new $(j,k)$-steps become available only when some $(j,k-1)$ steps have been completed. 
Service times $\{r_{j,k,s}\}$ are strictly less than the duration between two allocation
 epochs. So, any step allocated at epoch $t$ 
is completed before epoch $t+1$. Hence, for any $j$ and $K_j \ge k >1$: $A_{j,k}(t+1) = D_{j,k-1}(t)$.
On the other hand, for any $j$ and $k=1$, we have an external arrival $A_j(t+1)$ between epoch $t$ and $t+1$.

At any time $t$, for a given resource availability, an allocation rule determines resources to be allocated for certain number of
$(j,k)$-steps. We denote this by $S_{j,k}(t)$. Note that $D_{j,k}(t) = \min(Q_{j,k}(t), S_{j,k}(t))$.
Our goal is to design a scheme that finds a good $\{S_{j,k}(t)\}$ for a given $\{Q_{j,k}(t)\}$ and $\mbf{U}(t)=\bu$. 

\begin{center}
{\bf Centralized Allocation} 
\vspace{-0.1 in}
\begin{algorithmic}[1]
Input: $\{Q_{j,s}(t):j \in [N],s \in [S]\}$ and $\mathbf{U}(t)$ at $t$\\
Output: $\{S^*_{j,k}(t)\}$ and allocation of steps to agents
\STATE Define: $l_{j,r}:$  number of leaves in the subtree of $T_j$ rooted
at $r$ 
\STATE Obtain $\{S^*_{j,k}(t)\} = \arg \max_{s_{j,k} \in C(\mbf{U}(t))}$
\begin{align}
&  \sum_{j} \sum_{k=1}^{K_j}  
\sum_{r:k\to r \in T_j}
s_{j,k} l_{j,r} (Q_{j,k}(t) - Q_{j,r}(t)) \label{eq:centAllo}
\end{align}
\STATE For each $(j,k)$ allocate $S^*_{j,k}(t)$ $(j,k)$-steps 
\end{algorithmic}
\end{center}
This allocation scheme is statistics-agnostic and explicit in terms of system state. 
Also, note that by the design of the scheme the precedence constraint is 
automatically satisfied.
One important thing to note is that the allocation scheme is generic, 
in the sense that this policy can be easily adapted for different
agent- and step-flexibility. 
Note that $C(\mbf{U}(t))$ comes from the allocation constraints of the system. If
in \eqref{eq:centAllo} we replace $C(\mbf{U}(t))$ by the corresponding allocation set, 
the centralized algorithm becomes a generic allocation routine. 

In fact, the generic statistics-agnostic routine for centralized allocation scheme described above is optimal, in the sense that any arrival rate that can 
possibly be stabilized by any policy can also be stabilized by this scheme. 
\begin{theorem}
\label{thm:centralized}
The centralized allocation routine described above stabilizes any $\vlambda$ if $\vlambda + \epsilon \mbf{1} \in \mc{C}$, the capacity region
of the corresponding system for any $\epsilon>0$.
\end{theorem}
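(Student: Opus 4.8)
The plan is a Lyapunov--drift (max-weight) argument. Since $\mbf{A}(t)$ and $\mbf{U}(t)$ are i.i.d., the step-queue vector $\mbf{Q}(t)=\big(Q_{j,k}(t):j\in[N],\,k\in[K_j]\big)$ is a Markov chain under the centralized routine, so it suffices to exhibit a Lyapunov function whose one-step drift is uniformly negative outside a bounded set; the Lyapunov drift lemma \cite{SrikantY2014} then gives positive recurrence and $\limsup_t\EX[V(\mbf{Q}(t))]<\infty$, hence $\limsup_t\EX[Q_{j,k}(t)]<\infty$ for all $(j,k)$, and since the number of unfinished type-$j$ tasks is at most $\sum_{k=1}^{K_j}Q_{j,k}(t)$ this yields $\limsup_t\EX[Q_j(t)]<\infty$, i.e.\ $\vlambda$ is stabilizable (the same drift inequality also feeds the standard moment bounds behind the high-probability statements). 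I would take the leaf-weighted quadratic
$$V(\mbf{Q}) \;=\; \sum_{j=1}^{N}\sum_{k=1}^{K_j} l_{j,k}\,Q_{j,k}^{2}.$$
The weights $l_{j,k}$ are forced by the tree: when $(j,k)$-steps are allocated, work fans out to every child of $k$ in $T_j$, and the identity $\sum_{r:\,k\to r\in T_j} l_{j,r}=l_{j,k}$ for internal $k$ is precisely what makes this fan-out conservative in the $V$-``mass''.

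First I would compute the drift. Using $Q_{j,k}(t+1)=Q_{j,k}(t)-D_{j,k}(t)+A_{j,k}(t+1)$ together with the precedence-induced relations $A_{j,k}(t+1)=D_{j,p(k)}(t)$ for every non-root $k$ (with $p(k)$ its parent in $T_j$) and $A_{j,\mathrm{root}(j)}(t+1)=A_j(t+1)$, I expand $V(\mbf{Q}(t+1))-V(\mbf{Q}(t))$. The squared increments are bounded by a constant because $\{r_{j,k,s}\}$ and the second moments of $\mbf{A}(t),\mbf{U}(t)$ do not scale with the system. Re-indexing the cross term $2\sum_{j,k}l_{j,k}Q_{j,k}\big(A_{j,k}(t+1)-D_{j,k}(t)\big)$ over the edges of each $T_j$ telescopes it, giving
$$\EX\big[\,V(\mbf{Q}(t+1))-V(\mbf{Q}(t)) \mid \mbf{Q}(t)=\mbf{Q}\,\big]\;\le\;B\;+\;2\sum_{j}l_{j,\mathrm{root}(j)}Q_{j,\mathrm{root}(j)}\lambda_j\;-\;2\,\EX\!\big[\Phi(\mbf{Q},\mbf{D}(t))\mid\mbf{Q}\big],$$
where $\Phi(\mbf{Q},\mbf{s})$ is the objective \eqref{eq:centAllo} together with the terminal terms $\sum_j\sum_{k\text{ leaf of }T_j}l_{j,k}Q_{j,k}s_{j,k}$ recording task completion (equivalently, \eqref{eq:centAllo} applied to $T_j$ extended by a virtual zero-queue child below each leaf). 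Two routine reductions follow: replacing $D_{j,k}(t)=\min(Q_{j,k}(t),S^*_{j,k}(t))$ by $S^*_{j,k}(t)$ changes $\Phi$ by at most a constant; and, $\Phi(\mbf{Q},\cdot)$ being linear, the centralized choice $\mbf{S}^*(t)$ maximizes it over all of $C^{\text{cvx}}(\mbf{U}(t))$, not merely over $C(\mbf{U}(t))$.

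Next I would bring in the hypothesis through a stationary randomized comparison policy. Each $C(\bu)$ is down-closed --- allocating fewer steps is always feasible --- hence so is $\mbf{C}$. Let $\Delta$ be the largest number of nodes on any root-to-leaf path over all $T_j$. Since every such path has at most $\Delta$ nodes, one can pick target step-rates $\bm{\sigma}$ with $\sigma_{j,\mathrm{root}(j)}\ge\lambda_j+\tfrac{\epsilon}{\Delta}$, $\sigma_{j,k}\ge\sigma_{j,p(k)}+\tfrac{\epsilon}{\Delta}$ for non-root $k$, and $\sigma_{j,k}\le\lambda_j+\epsilon$ for all $(j,k)$; then $\bm{\sigma}\le(\vlambda+\epsilon\mbf{1})^E$ componentwise, and since $(\vlambda+\epsilon\mbf{1})^E\in\mbf{C}$ with $\mbf{C}$ down-closed, $\bm{\sigma}\in\mbf{C}$, i.e.\ $\bm{\sigma}=\sum_{\bu}\Gamma(\bu)\mbf{a}(\bu)$ with $\mbf{a}(\bu)\in C^{\text{cvx}}(\bu)$. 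The comparison policy, in state $\bu$, samples an integral allocation from a distribution on $C(\bu)$ with mean $\mbf{a}(\bu)$ (Carath\'eodory), independently of $\mbf{Q}$, so its expected step-allocations are $\bm{\sigma}$. By optimality of $\mbf{S}^*(t)$ we have $\Phi(\mbf{Q},\mbf{S}^*(t))\ge\Phi(\mbf{Q},\widetilde{\mbf{S}}(t))$ pointwise; taking expectations (using independence of $\widetilde{\mbf{S}}(t)$ from $\mbf{Q}$) and regrouping $\EX[\Phi(\mbf{Q},\widetilde{\mbf{S}}(t))]$ by queue, the slacks $\sigma_{j,k}-\sigma_{j,p(k)}\ge\tfrac{\epsilon}{\Delta}$ and $\sigma_{j,\mathrm{root}(j)}-\lambda_j\ge\tfrac{\epsilon}{\Delta}$ absorb the positive $\lambda$-term and leave
$$\EX\big[\,V(\mbf{Q}(t+1))-V(\mbf{Q}(t)) \mid \mbf{Q}(t)=\mbf{Q}\,\big]\;\le\;B'\;-\;\frac{2\epsilon}{\Delta}\sum_{j=1}^{N}\sum_{k=1}^{K_j}l_{j,k}Q_{j,k},$$
which is at most $-1$ for every $\mbf{Q}$ outside a bounded set, closing the drift argument.

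The main obstacle is the drift computation of the second paragraph: performing the telescoping over a general rooted tree $T_j$ and verifying that the leftover is exactly $2\Phi$ up to the two boundary contributions --- the root term $2\sum_j l_{j,\mathrm{root}(j)}Q_{j,\mathrm{root}(j)}\lambda_j$ from external arrivals and the leaf term from task completions --- since it is exactly this matching that both pins down the weights $l_{j,k}$ and forces the $\epsilon/\Delta$ rescaling of the slack (a uniform slack $\epsilon$ at every node would leave the interior queues with a nonpositive but not strictly negative coefficient, which is insufficient). A secondary point requiring care is that $C(\bu)$ takes different forms depending on which of \eqref{eq:taskAlloCons1}/\eqref{eq:taskAlloCons2} governs the system, so the linear-optimization reduction $C(\bu)\to C^{\text{cvx}}(\bu)$, the down-closedness claim, and boundedness of $C(\bu)$ must be verified uniformly across the four flexibility combinations.
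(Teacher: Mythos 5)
Your proposal is correct and shares its skeleton with the paper's proof: the same leaf-weighted quadratic Lyapunov function $\sum_{j,k} l_{j,k} Q_{j,k}^2$, the same tree-telescoping of the cross term (the paper isolates this as Lemma~\ref{lem:reordering}), the same bounding of the quadratic increments via second moments (the paper's Lemma~\ref{lem:xyz1} — your ``changes $\Phi$ by at most a constant'' is doing exactly that work and deserves the explicit inequality), and the same max-weight comparison against a stationary randomized policy built from the Carath\'eodory decomposition of a point of $\mbf{C}$. Where you genuinely diverge is in how the drift bound is closed. The paper compares against the single rate vector $(\vlambda+\epsilon\mbf{1})^E$, obtains drift $\le C_2-\epsilon\sum_{j,k}\bigl|\sum_{r\in c_j(k)} l_{j,r}(Q_{j,k}-Q_{j,r})\bigr|^{+}$, and must then do two extra pieces of work: a case analysis (via the telescoping identity \eqref{eq:centralizedT5}) showing the sum of positive edge-differentials is large whenever some queue is large, and an ergodic time-averaging step to convert finite expected differentials back into finite expected queue lengths. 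You instead grade the comparison rates by depth, $\sigma_{j,k}=\lambda_j+(d(k)+1)\epsilon/\Delta$, which after regrouping gives every queue coefficient exactly $-l_{j,k}\epsilon/\Delta$ and hence drift $\le B'-\tfrac{2\epsilon}{\Delta}\sum_{j,k}l_{j,k}Q_{j,k}$ in one shot; negative drift outside a compact set and $\limsup_t\EX[\sum_{j,k}Q_{j,k}(t)]<\infty$ both follow immediately, with no recovery step. This is the standard device for tandem and tree networks under back-pressure and is, in my view, cleaner. One point in your favor worth keeping explicit: the objective \eqref{eq:centAllo} as written assigns weight zero to leaf steps (the inner sum over children is empty), and your virtual zero-queue child below each leaf is precisely the fix that makes the maximized functional coincide with the one appearing in the drift; the paper glosses over this in the statement of Lemma~\ref{lem:reordering}. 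Your only obligations beyond what you wrote are the brief verifications you already flagged (down-closedness and boundedness of $C(\bu)$ uniformly over the flexibility variants), which the paper carries out in the proof of Theorem~\ref{thm:capacity}.
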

Though the scheme has similarity with back-pressure algorithms \cite{TassiulasE1992,Neely2010,SrikantY2014}; unlike the back-pressure scheme it also uses graph parameter ($l_{j,r}$) in computing the weights. Proof is using a Lyapunov function involving $\{l_{j,r}\}$ and queue-lengths.

Instead of directed rooted tree if the precedence constraint is a directed acyclic graph the same results extend. It would be apparent from proof of Theorem \ref{thm:capacity} that the converse (outer-bound on capacity) depend on the precedence graph. On the other hand, for any precedence constraint given by a directed acyclic graph, there exists a precedence constraint given by a directed rooted tree such that the tree constraint does not violate the directed acyclic graph constraint. Then, by applying the above centralized algorithm for this directed rooted tree capacity can be achieved.

\section{Inflexible Agents and Flexible Steps}
\label{sec:inflexAgentsFlexSteps}
Here we characterize the limits of tasks allocation where all steps are flexible and agents are inflexible. Sec.~\ref{sec:capacity} presented 
a generic capacity characterization and algorithm; this section investigates computational aspects of the generic algorithm for this particular system
and also proposes a simple decentralized scheme that works well under a broad class of assumptions.

Consider $C_{I,F}(\bu)$, the set of possible allocations with inflexible agents $(I)$ and flexible tasks $(F)$ for availability of agents, $\bu$. 
Recall the allocation scenario in Sec.~\ref{sec:capacity} to determine a generic $C(\bu)$: A1--A3 are the same for
any system flexibility, but A4 is specific.  For an $(I, F)$ system we have the following.

To allocate $a_{j,k,s}$ tasks of different types to a collection of $R$ agents
of type $\{m_r: r \in [R]\}$ and available hours $\{h_{m_r,s}:r \in [R]\}$ we must satisfy \eqref{eq:taskAlloCons1}:
\begin{equation}
\sum_{j,k} a_{j,k,s} r_{j,k,s} \le \sum_{r=1}^R h_{m_r,s} \mbox{ for all } s \in [S] \mbox{.}
\label{eq:inflexibleCons}
\end{equation}
Note that whenever a step is allocated, all tasks in it must be allocated simultaneously. Hence, we can only allocate 
$a_{j,k,s}$ tasks with $a_{j,k,s}=a_{j,k,s'} \ \forall  s, s' \in [S]$ when satisfying \eqref{eq:inflexibleCons}.

Given $C_{I,F}(\bu)$, the capacity region $\mc{C}_{I,F}$ is obtained in the same way $\mc{C}$ was obtained from $C(\bu)$ in 
Sec.~\ref{sec:capacity}.

The generic centralized allocation routine can be similarly specialized for $(I,F)$ systems: $C(\mbf{U}(t))$ in \eqref{eq:centAllo} of the routine is 
replaced by $C_{I,F}(\mbf{U}(t))$.  The centralized scheme is computable since $C_{I,F}(\mbf{U}(t))$ can be written explicitly in terms of 
$\mbf{U}(t)$, $\mbf{r}_{j,k}$, and $\mbf{h}_m$, but it cannot always be computed in polynomial time.  Since any allocation in $C_{I,F} (\bu)$ must 
satisfy constraint \eqref{eq:inflexibleCons}, optimization problem \eqref{eq:centAllo} can be written as:
\begin{align}
&\max_{s_{j,k} \in \Z_+} \sum_{j} \sum_{k=1}^{K_{j-1}} w_{j,k} s_{j,k} \ \nonumber \\
&\quad\mbox{s.t. } \sum_{j,k} s_{j,k} r_{j,k,s} \le \sum_m u_m h_{m,s} \mbox{ for all } s \in [S] \mbox{.}
\label{eq:centAlgo}
\end{align}
Note that the solution to the problem does not change if we replace $w_{j,k}$ by $\max(v_{j,k},0)$, as optimal schemes never allocate resources to negative $w_{j,k}$. 
Thus, we assume $w_{j,k} \ge 0$. 

Note that \eqref{eq:centAlgo} is a multi-dimensional knapsack problem, where the number of available items of a given weight and value are unbounded 
\cite{KellererPP2004}. This problem is known to be NP-hard and without any fully polynomial-time approximation scheme (FPTAS). 
A polynomial-time approximation scheme (PTAS) is known, but the complexity is exponential in dimension. Recently extended linear
programming (LP) relaxations have been proposed, but have the same issues (see 
\cite{Pritchard2009} and references therein). 

We aim to find a simple and fast distributed scheme, but first propose the following 
LP relaxation-based, polynomial-time (in $N$ and $M$) scheme that gives nearly 
optimal centralized allocation for a large crowd system (under crowd scaling).
\begin{align}
&\{\hat{S}_{j,k}(t)\}= \mbox{\eqref{eq:centAlgo} with relaxing of} 
\ \{s_{j,k} \in \R_+\} \nonumber \\
&\mbox{Allocate } \{{S}^r_{j,k}(t)=\lfloor \hat{S}_{j,k}(t) \rfloor\} 
\label{eq:centAlgoLP} 
\end{align}

We cannot give performance guarantees for this scheme at each allocation epoch for arbitrary $Q_{j,k}$, but for a sufficiently large crowd system, this scheme
stabilizes almost any arrival rate that can be stabilized.
\begin{theorem}
\label{thm:LPrelax}
Under crowd scaling, for any $\alpha < 1$ there is an $N_0$ such that for any system with $N \ge N_0$, the LP-based scheme \eqref{eq:centAlgoLP} stabilizes
any arrival rate in $\alpha \mc{C} = \{\mbf{a}: \frac{\mbf{a}}{\alpha} \in \mc{C}\}$.
\end{theorem}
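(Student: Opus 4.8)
The plan is to run a Lyapunov drift argument with the \emph{same} quadratic Lyapunov function $V(\mbf{Q})=\sum_{j}\sum_{k}\beta_{j,k}Q_{j,k}^2$ (coefficients $\beta_{j,k}$ built from the leaf counts $\{l_{j,r}\}$) that underlies the proof of Theorem~\ref{thm:centralized}, but now evaluated along the LP-rounding scheme \eqref{eq:centAlgoLP} rather than along an exact maximizer of \eqref{eq:centAlgo}. Reusing the precedence-telescoping step from that proof, the one-step conditional drift of $V$ under any allocation $\{S_{j,k}(t)\}$ takes the form
\[
\EX\big[V(\mbf{Q}(t+1))-V(\mbf{Q}(t))\mid\mbf{Q}(t)=\mbf{q}\big]\le B_0+2\Big(\sum_{j,k}w_{j,k}\lambda^E_{j,k}-\EX_{\bu}\Big[\sum_{j,k}w_{j,k}S_{j,k}(\bu)\Big]\Big),
\]
where $w_{j,k}=\max\big(\sum_{r:k\to r\in T_j}l_{j,r}(q_{j,k}-q_{j,r}),\,0\big)$ are exactly the weights in \eqref{eq:centAlgo}, $\lambda^E_{j,k}=\lambda_j$, and $B_0$ is the usual constant from bounded second moments and non-scaling task/agent sizes. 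So it suffices to lower bound the weighted service $\sum_{j,k}w_{j,k}S^r_{j,k}(\bu)$ delivered by \eqref{eq:centAlgoLP} (for $\mbf{q}$ outside a bounded set we may take $D_{j,k}=\min(Q_{j,k},S^r_{j,k})=S^r_{j,k}$, which only helps the drift).

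First I would control the rounding gap. The feasible polytope of the relaxation \eqref{eq:centAlgoLP} contains $C^{\text{cvx}}_{I,F}(\bu)$, since every integral point of $C_{I,F}(\bu)$ satisfies the knapsack constraints \eqref{eq:inflexibleCons} and that polytope is convex; hence the LP optimum $\sum_{j,k}w_{j,k}\hat S_{j,k}(\bu)$ dominates $\sum_{j,k}w_{j,k}a_{j,k}$ for every $\mbf{a}\in C^{\text{cvx}}_{I,F}(\bu)$. Taking $\{\hat S_{j,k}(\bu)\}$ to be an optimal \emph{basic} solution, at most $S$ of its coordinates are positive (there are only $S$ knapsack rows), so flooring loses at most $\sum_{(j,k):\hat S_{j,k}>0}w_{j,k}\le S\,w_{\max}$, with $w_{\max}:=\max_{j,k}w_{j,k}$. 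Therefore $\sum_{j,k}w_{j,k}S^r_{j,k}(\bu)\ge\max_{\mbf{a}\in C^{\text{cvx}}_{I,F}(\bu)}\sum_{j,k}w_{j,k}a_{j,k}-S\,w_{\max}$.

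Next I would cash in the slack. From $\vlambda\in\alpha\mc{C}_{I,F}$ we get $\vlambda/\alpha\in\mc{C}_{I,F}$, so there is a family $\{\mbf{a}^\star(\bu)\in C^{\text{cvx}}_{I,F}(\bu)\}$ with $\sum_{\bu}\Gamma(\bu)\mbf{a}^\star(\bu)=(\vlambda/\alpha)^E$; averaging the previous inequality over $\bu\sim\Gamma$ and substituting $\mbf{a}^\star(\bu)$ gives $\EX_{\bu}[\sum_{j,k}w_{j,k}S^r_{j,k}(\bu)]\ge\tfrac1\alpha\sum_{j,k}w_{j,k}\lambda^E_{j,k}-S\,w_{\max}$, hence
\[
\EX\big[V(\mbf{Q}(t+1))-V(\mbf{Q}(t))\mid\mbf{Q}(t)=\mbf{q}\big]\le B_0-2\Big(\tfrac1\alpha-1\Big)\sum_{j,k}w_{j,k}\lambda^E_{j,k}+2S\,w_{\max}.
\]
Since $\lambda^E_{j,k}=\lambda_j$ and the $l_{j,r}$ are bounded (tasks do not scale), $\sum_{j,k}w_{j,k}\lambda^E_{j,k}\ge(\min_j\lambda_j)\sum_{j,k}w_{j,k}\ge(\min_j\lambda_j)w_{\max}$ and $w_{\max}\le c_1\max_{j,k}q_{j,k}\le c_1\sum_{j,k}q_{j,k}$. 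Under crowd scaling $S=o(N)$ while the arrival rates grow with the system ($\lambda_j=\omega(1)$ for all $j$, $\lambda(N)/N\to\infty$), so for every $\alpha<1$ there is an $N_0$ such that for $N\ge N_0$ the coefficient $(\tfrac1\alpha-1)\min_j\lambda_j$ exceeds $S$ by a fixed factor; the drift is then at most $B_0-c_2\sum_{j,k}q_{j,k}$ outside a bounded set, and Foster--Lyapunov (as in Theorem~\ref{thm:centralized}) gives $\limsup_t\EX[Q_j(t)]<\infty$ for all $j$, i.e.\ stability.

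The hard part will be exactly this last quantitative comparison: bounding the rounding penalty $S\,w_{\max}$ against the negative drift $(\tfrac1\alpha-1)\sum_{j,k}w_{j,k}\lambda^E_{j,k}$ produced by the $(1-\alpha)$ slack, \emph{uniformly over all queue states} $\mbf{q}$. The tight configurations are those where the queue mass is concentrated on a single task type $j_0$ (so that $\sum_{j,k}w_{j,k}\lambda^E_{j,k}\approx\lambda_{j_0}w_{\max}$), and it is these that dictate how large $N_0$ must be for a given $\alpha$; making this robust is where the crowd-scaling hypotheses are genuinely used, and why the statement is only for $N\ge N_0$ rather than for all $N$. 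By contrast, the telescoping identity, the choice of $\{\beta_{j,k}\}$, and the Foster--Lyapunov conclusion are inherited essentially verbatim from the proof of Theorem~\ref{thm:centralized}, and the only structural fact about linear programming needed is the bound of $S$ on the number of fractional coordinates of a basic optimum.
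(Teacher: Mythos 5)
Your overall architecture matches the paper's: reuse the Lyapunov drift bound from the proof of Theorem~\ref{thm:centralized} (which, as you note, holds for any feasible allocation), lower-bound the weighted service delivered by the rounded LP solution, and absorb the rounding loss into the $(1-\alpha)$ capacity slack. The first two stages of your argument are sound, including the observation that the relaxation's feasible set contains $C^{\text{cvx}}_{I,F}(\bu)$ so that $\sum_{j,k}w_{j,k}\hat{S}_{j,k}\ge\sum_{j,k}w_{j,k}a_{j,k}$ for every $\mbf{a}\in C^{\text{cvx}}_{I,F}(\bu)$.

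The gap is in the final quantitative comparison, exactly where you predicted difficulty. You bound the flooring penalty by $S\,w_{\max}$ (via a basic optimal solution) and the slack from below by $(\tfrac1\alpha-1)(\min_j\lambda_j)\,w_{\max}$, so you need $(\tfrac1\alpha-1)\min_j\lambda_j$ to exceed $S$ for large $N$. Crowd scaling does \emph{not} give you this: the assumptions are $\lambda_j=\omega(1)$ for each $j$ and $S=o(N)$, which are consistent with, say, $\min_j\lambda_j=\Theta(\log N)$ and $S=\Theta(\sqrt{N})$, in which case your inequality fails for all large $N$. Both of your intermediate bounds are too lossy, and in opposite directions: the penalty bound collapses up to $S$ coordinates onto $w_{\max}$, while the slack bound collapses $\sum_{j,k}w_{j,k}\lambda_j$ down to a single $w_{\max}$ term. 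The paper avoids this entirely by doing the accounting coordinate-wise and never invoking a basic solution: flooring loses at most one unit in each coordinate, so
\[
\sum_{j,k}w_{j,k}S^r_{j,k}\;\ge\;\sum_{j,k}w_{j,k}\hat{S}_{j,k}-\sum_{j,k}w_{j,k}\mbox{,}
\]
and since $\lambda_j=\omega(1)$ there is an $N_0$ such that for $N\ge N_0$ one has $(\tfrac1\alpha-1)\lambda_j\ge 1+\epsilon$ for every $j$; coordinate convexity of $\mc{C}$ then gives $\vlambda+(1+\epsilon)\mbf{1}\in\mc{C}$, so the per-coordinate slack $(1+\epsilon)w_{j,k}$ dominates the per-coordinate loss $w_{j,k}$ and the drift argument of Theorem~\ref{thm:centralized} goes through verbatim. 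If you replace your $S\,w_{\max}$ versus $(\min_j\lambda_j)w_{\max}$ comparison with this per-coordinate matching (or equivalently compare the total penalty $\sum_{j,k}w_{j,k}$ against the total slack $(\tfrac1\alpha-1)\sum_{j,k}\lambda_jw_{j,k}\ge(\tfrac1\alpha-1)(\min_j\lambda_j)\sum_{j,k}w_{j,k}$), your proof closes; the basic-solution machinery is then unnecessary.
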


\subsection{Decentralized Allocation}
\label{sec:prioGreedy}
In this section we develop a simple decentralized scheme with good performance guarantees. As discussed before, often one of the main reasons 
for customers to go to a crowd platform is the ability to choose workers themselves. As such, we propose a simple greedy scheme that allows 
customers the freedom of choice with minimal intervention from platform operators.  This also reduces the platform's operational cost.

In greedy allocation, each step competes against others to find an allocation for all of its tasks. Contention can be resolved arbitrarily, e.g., 
random, pre-ordered, or age-based. 

The \emph{Prioritized Greedy} algorithm below performs greedy allocation among all steps across all types of tasks that are in the same order. It starts with
steps that are in the beginning of the precedence tree and once these steps find an allocation (or cannot be allocated), only then are steps lower in the 
corresponding precedence trees allowed to allocate themselves.

\begin{algorithm}[h]
\caption{{Prioritized Greedy}} \label{alg:prioGreedy}
Define $D=\max_j \mbox{depth of} \ T_j$

\begin{algorithmic}[1]
\STATE $\mc{S}_j=\emptyset \mbox{ for all } j \in [N]$
\FOR{d=1:D}
\STATE $\mc{S}_j = \{k_j: \mbox{ depth of } k_j \mbox{ in } \ T_j = d\}$ 
\STATE Greedy allocation among $\cup_j \{j,k_j: k_j \in \mc{S}_j\}$ steps 
\ENDFOR
\end{algorithmic}
\end{algorithm}

This algorithm can be efficiently implemented on a crowdsourcing platform with minimal intervention from the platform operator.
The operator need only tag unallocated steps in the system based on their depth in the rooted precedence tree and only show available workers
to them after steps at lower depth have exercised their allocation choice. This may be implemented by personalizing the platform's
search results.

The algorithm is fast and has good performance guarantees under certain broadly-used assumptions on arrival and availability processes.
\begin{definition}
A random variable $X$ is \emph{Gaussian-dominated} if $\EX[X^2] \le \EX[X]^2+\EX[X]$ and for all $\theta \in \R$, 
$\EX[e^{\theta \left(X-\EX[X]\right)}] \le e^{\frac{1}{2}(\left(\EX[X^2]-\EX[X]^2\right) \theta^2}$, and \emph{Poisson-dominated} if for all $\theta \in \R$, $\EX[e^{\theta \left(X-\EX[X]\right)}] \le e^{\EX[X] (e^\theta-\theta-1)}$.
\end{definition}
These domination definitions, commonly assumed in 
bandit problems \cite{BubeckC2012}, imply that variation around the mean is dominated in a moment generating function sense  by that of a Gaussian (Poisson) 
random variable. Such a property is satisfied 
by many distributions used to model arrival processes, including in crowdsourcing systems \cite{VukovicS2012}.

\begin{theorem}
\label{thm:prioGreedy}
Consider inflexible agents and flexible steps crowdsourcing systems (size $N$) where for any $s,s'$ $|\sum_m \mu_m h_{m,s} - \sum_m \mu_m h_{m,s'}|$ is sub-poly$(N)$, i.e., $o(N^\delta), \forall \delta>0$, arrival and availability processes are Poisson-dominated (and/or Gaussian-dominated), and system scales as per crowd-scaling.
Then, for any $\alpha \in (0,1)$, $\exists N_\alpha$ s.t. $\forall N\ge N_\alpha$, any arrival rate $\vlambda \in \alpha \mc{C}_{I,F}$ can be stabilized by Prioritized Greedy, and at the steady state the total number of unallocated steps in the system across all types is $O(\log N)$ w.p. $1-O\left(\frac{1}{N^2}\right)$.
\end{theorem}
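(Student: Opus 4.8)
The plan is to show that Prioritized Greedy, run depth-by-depth on the precedence trees, behaves almost like the optimal LP-based centralized scheme of Theorem~\ref{thm:LPrelax} once $N$ is large, and then to pipe that into a Lyapunov/concentration argument to get both stability and the $O(\log N)$ steady-state bound with the claimed probability. The starting observation is structural: for flexible steps and inflexible agents, the allocation constraint \eqref{eq:inflexibleCons} decouples across skills $s$, so at each depth $d$ the greedy phase is essentially $S$ parallel one-dimensional greedy knapsack fillings, one per skill, coupled only through the requirement $a_{j,k,s}=a_{j,k,s'}$ for all $s,s'$ in a step. I would first argue that greedy one-dimensional packing of items whose sizes $r_{j,k,s}$ do not scale with $N$ leaves a residual (unused skill-hours) that is $O(1)$ per skill per depth, hence $O(D S)=o(N^{\delta})$ in total; combined with the hypothesis that the skill-budgets $\sum_m\mu_m h_{m,s}$ differ across $s$ only by a sub-poly$(N)$ amount, this shows the greedy schedule loses only a vanishing (relative to the budget, which is $\Omega(N^c)$ per skill via the crowd-scaling assumption $\lambda^s(N)=\Omega(N^c)$) fraction of capacity compared to the idealized fractional allocation.

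Next I would set up the Lyapunov function. Following the remark after Theorem~\ref{thm:centralized}, use $V(\mathbf{Q})=\sum_{j}\sum_{k}\bigl(\sum_{r:k\to r}l_{j,r}\bigr)Q_{j,k}^2$ (the same weighting by leaf-counts $l_{j,r}$ that makes precedence telescoping work), and compute the one-step drift $\EX[V(\mathbf{Q}(t+1))-V(\mathbf{Q}(t))\mid \mathbf{Q}(t)]$. The cross term is, up to lower-order contributions, $-2\sum_{j,k}\sum_{r:k\to r}l_{j,r}(Q_{j,k}-Q_{j,r})\,\EX[S_{j,k}^{\mathrm{PG}}(t)]$ plus an arrival term controlled by $\vlambda$. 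Because $\vlambda\in\alpha\mc{C}_{I,F}$, there is a fractional allocation in $C_{I,F}^{\mathrm{cvx}}(\bu)$ dominating $\vlambda/\alpha$ in expectation over $\Gamma$; the content of the first paragraph is that Prioritized Greedy achieves, for $N\ge N_\alpha$, at least (say) $\tfrac{1+\alpha}{2}$ of that fractional target in every skill simultaneously on the backlogged steps, which yields a strictly negative drift $\le -\varepsilon\sum_{j,k}(\text{weight})\,Q_{j,k} + B$ outside a bounded set. Here one must be a little careful that depth-by-depth prioritization does not starve deep steps: since any step allocated at epoch $t$ finishes before $t+1$ and $A_{j,k}(t+1)=D_{j,k-1}(t)$, a backlog at a deep node is itself evidence that shallower nodes were served, so the telescoping $\sum_r l_{j,r}(Q_{j,k}-Q_{j,r})$ absorbs this correctly — this is exactly why the leaf-count weights are needed rather than plain queue differences.

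For the high-probability $O(\log N)$ bound I would replace expectations by their empirical one-step values and use the Poisson-/Gaussian-domination hypotheses: the per-epoch arrival $\mathbf{A}(t)$ and availability $\mathbf{U}(t)$ deviate from their means by $O(\sqrt{\lambda(N)\log N})$ and $O(\sqrt{M\log N})$ respectively, w.p. $1-O(N^{-2})$, by a Chernoff bound over the $N+M=\mathrm{poly}(N)$ coordinates. Feeding these into the negative-drift inequality gives a stochastic-dominance / Foster–Lyapunov argument (or a direct supermartingale-plus-Azuma argument on $\sqrt{V}$) showing that in steady state $\sum_{j,k}Q_{j,k}$ concentrates at scale $O(\log N)$ except on an event of probability $O(N^{-2})$; summing the $O(N^{-2})$ failure probabilities over the (polynomially many) steps in a mixing window keeps the bad-event probability $O(N^{-2})$. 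I expect the main obstacle to be the first paragraph made quantitative and uniform: proving that the greedy, contention-resolved, integer packing at each depth really does capture a $(1-o(1))$ fraction of the fractional optimum \emph{simultaneously across all $S$ skills} — the coupling constraint $a_{j,k,s}=a_{j,k,s'}$ means a step can be blocked by a single tight skill even when others have slack, and controlling how much capacity that wastes (it is here that the sub-poly$(N)$ gap between skill-budgets and the $\Omega(N^c)$ scaling of per-skill load are both essential) is the delicate combinatorial heart of the proof; everything after it is a fairly standard Lyapunov-plus-concentration routine.
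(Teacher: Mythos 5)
Your plan rests on a max-weight style Lyapunov drift argument, and that is where it breaks: Prioritized Greedy is oblivious to the queue-differential weights $w_{j,k}=\sum_{r}l_{j,r}(Q_{j,k}-Q_{j,r})$ --- contention is resolved arbitrarily, not by queue length --- so there is no reason it collects a constant fraction of $\max_{\mbf{a}\in C_{I,F}(\bu)}\sum_{j,k}w_{j,k}a_{j,k}$. Your first paragraph shows (correctly, and this matches an ingredient of the paper's proof) that greedy wastes only $O(\bar r)$ skill-hours per skill per epoch, i.e., it captures almost all of the \emph{throughput} in every skill; but capturing a fraction of the per-skill throughput is not the same as capturing a fraction of the \emph{weighted} objective, since adversarial tie-breaking can spend the whole budget on steps whose weights are zero or tiny while a few heavily backlogged types starve. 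The claimed drift bound $-\varepsilon\sum_{j,k}(\text{weight})\,Q_{j,k}+B$ therefore does not follow, and the Foster--Lyapunov/Azuma machinery downstream has nothing to stand on. A secondary issue: even a valid negative drift of a quadratic Lyapunov function yields bounded first moments, not the exponential tails needed for the $O(\log N)$ w.p.\ $1-O(N^{-2})$ claim; you would need a Hajek-type exponential drift condition, which again requires per-step control that greedy's weight-obliviousness denies you.

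The paper avoids the max-weight framework entirely for this theorem. It aggregates the backlog into per-skill workload processes $Q^s_d(t)=\sum_{j,k:d_j(k)\le d}Q_{j,k}r_{j,k,s}$, uses exactly your observation that greedy serves at least $\min_{s}\sum_m U_m h_{m,s}-\bar r$ hours of each skill per epoch (this is where the sub-poly asymmetry of the skill budgets and the $\Omega(N^c)$ scaling enter), and then dominates $\sum_s Q^s_d$ path-wise by a single-server queue whose Loynes construction has an explicitly boundable moment generating function under the Poisson/Gaussian domination hypotheses; the exponential moment bound gives the $O(\log N)$ tail directly by Markov's inequality with $q=\Theta(\log N)$. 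Precedence is handled by induction on depth: arrivals at depth $d+1$ over a window $[\tau,0]$ are bounded by queue lengths at depths $\le d$ at earlier times plus external arrivals, so finiteness of exponential moments propagates from depth $d$ to $d+1$; your remark that ``the telescoping absorbs this correctly'' does not substitute for that explicit bound. To salvage your route you would have to either make the allocation weight-aware (changing the algorithm) or redo the drift analysis on the aggregated workload queues --- at which point you have essentially rederived the paper's argument.
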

This implies Prioritized Greedy can stabilize almost any stabilizable arrival rate 
while ensuring the number of unallocated tasks scales more slowly than the system size.

\section{Flexible Agents and Flexible Steps}
\label{sec:flexAgentsInflexSteps}
Now consider systems with flexible agents and flexible steps $(F,F)$, and characterize capacity regions. For a given availability of agents $\bu$, the 
set of possible step allocations are $C_{F,F}(\bu)$. As for $C(\bu)$ in Sec.~\ref{sec:capacity} this satisfies A1--A3 in the allocation scenario;
A4 for $(F,F)$ systems is as follows.

A certain number of steps $\{a_{j,k}\}$ of each type can be allocated to a set of agents $\{1, 2, \dots, R\}$ of types $\{m_r: r \in [R]\}$ if there exists 
a set of $R$ vectors in $\R^S$, $\tilde{h}_{m_r} = (\tilde{h}_{m_r,s} \ge 0: s \in [S])$, such that:
\begin{align}
&\tilde{h}_{m_r,s}=0 \mbox{ if } s \not\in S_{m_r} \mbox{ for all } s, m_r \mbox{,}\nonumber \\
&\sum_s \tilde{h}_{m_r,s} \le \tilde{h}_{m_r} \mbox{ for all } r \in [R]\mbox{, and} \nonumber  \\
&\sum_{j,k} a_{j,k} r_{j,k,s} \le \sum_{r=1}^R \tilde{h}_{m_r,s}  \mbox{ for all } s \in [S]\mbox{.} 
\label{eq:flexibleCons}
\end{align}
Based on the set of possible allocations $C_{F,F}(\bu)$, the capacity region $\mc{C}_{F,F}$ can be characterized just as in Sec.~\ref{sec:capacity}.

Similar to Sec.~\ref{sec:inflexAgentsFlexSteps}, if we replace $C(\mbf{U}(t))$ by $C_{F,F}(\mbf{U}(t))$ in the centralized allocation routine we obtain an 
optimal policy for the $(F,F)$ system. It is not hard to see that for the instance where each agent has exactly one skill, the problem is again a 
multi-dimensional knapsack problem and therefore NP-hard. We develop a computationally-efficient scheme. 

If there are $R$ agents of type $\{m_r: r \in R\}$ available, then the centralized allocation problem at time $t$ is to optimize:
\begin{align}
&\max_{(s_{j,k} \in \Z_+), (\tilde{h}_{m_r,s} \in \R_+)} \sum_{j,k} w_{j,k} s_{j,k} \mbox{ s.t. constraints in (\ref{eq:flexibleCons})}. 
\label{eq:RvarOpti}
\end{align}
This is a mixed ILP with $\sum_{j=1}^N K_j$ integer variables and $RS$ real variables. The complexity of this problem scales with the number
of available agents in the system, $R$. We would like to avoid such a scaling as $R$ may be much larger than $M$ and $N$ in a crowd system. 
Hence, we pose another optimization problem where the number of variables scales with $M$ and $N$. 

Given $U(t)=\bu$,
\begin{align}
&\max_{(s_{j,k} \in \Z_+), (\alpha_{m,s} \in \R_+)} \sum_{j,k} w_{j,k} s_{j,k} \nonumber \\
&\mbox{s.t. } {\alpha}_{m,s}=0 \mbox{ if } s \not\in S_{m} \mbox{ for all } s, m\mbox{,}\nonumber \\
&\quad\sum_s {\alpha}_{m,s} \le 1 \mbox{ for all } m \in [M]\mbox{, and} \nonumber  \\
&\quad \sum_{j,k} s_{j,k} r_{j,k,s} \le \sum_{m=1:M} u_m h_m \alpha_{m,s}  \mbox{ for all } s \in [S]\mbox{.} 
\label{eq:MvarOpti}
\end{align}
Note that this optimization problem yields an allocation satisfying all constraints for flexible agent allocation. This is because $\alpha_{m,s}$
is the fraction of time of an agent of type $m$ that has been given to skill $s$, which can be positive only when agent of type $m$ has skill $s$. 
The last inequality ensures that the skill-hour constraint per skill is satisfied. Hence, this is a feasible allocation procedure.

This is again a mixed ILP, but with $M+N$ variables. Note that this problem is also NP-hard, corresponding to a multi-dimensional knapsack
problem if $|S_m|=1, \mbox{ for all } m \in [M]$. We design a centralized scheme that allocates steps based on
the following LP relaxation. Given $U(t)=\bu$,
\begin{align}
&(\hat{s}_{j,k}:j,k) = \arg\max_{(s_{j,k} \in \R_+), (\alpha_{m,s} \in \R_+)} \sum_{j,k} w_{j,k} s_{j,k} \nonumber \\
& \mbox{s.t. constraints in (\ref{eq:MvarOpti})} 
\mbox{ and allocate } \{\lfloor \hat{s}_{j,k} \rfloor\} \mbox{ steps.} 
\label{eq:MvarOptiRela}
\end{align}
This scheme has the following performance guarantee.
\begin{theorem}
\label{thm:LPrelaxFlex}
Under crowd scaling, for any $\alpha < 1$ there is an $N_0$ s.~t.\ for any system with $N \ge N_0$, the LP-based scheme \eqref{eq:MvarOptiRela} 
stabilizes any arrival rate in  $\alpha \ \mc{C}_{F,F} = \{\mbf{a}: \frac{\mbf{a}}{\alpha} \in \mc{C}_{F,F}\}$.
\end{theorem}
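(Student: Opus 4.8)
The plan is to follow the template of the proof of Theorem~\ref{thm:LPrelax}: use the optimal centralized routine of Theorem~\ref{thm:centralized} (specialized to $C_{F,F}$) as a benchmark and show that rounding the LP relaxation down costs almost nothing under crowd scaling. First I would check that the mixed ILP \eqref{eq:MvarOpti} is an exact reformulation of the centralized objective \eqref{eq:centAllo} restricted to $C_{F,F}(\mathbf U(t))$: any integral step vector in $C_{F,F}(\bu)$ comes with physical per-agent splits $\tilde h_{m_r,s}$ obeying \eqref{eq:flexibleCons}, and aggregating them via $\alpha_{m,s}=(u_m h_m)^{-1}\sum_{r}\tilde h_{m_r,s}$ produces a feasible $(\mathbf s,\alpha)$ of \eqref{eq:MvarOpti} with the same objective; conversely any feasible $(\mathbf s,\alpha)$ disaggregates back to physical agents, so in particular the floored vector $\{\lfloor\hat s_{j,k}\rfloor\}$ output by \eqref{eq:MvarOptiRela} is a legal allocation in $C_{F,F}(\bu)$ (precedence is respected automatically through the queue dynamics $A_{j,k}(t+1)=D_{j,k-1}(t)$, as in the centralized routine). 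Hence the LP optimum $\sum_{j,k}w_{j,k}\hat s_{j,k}$ dominates $\max_{\mathbf s\in C_{F,F}(\bu)}\sum_{j,k}w_{j,k}s_{j,k}$, and the scheme actually realizes at least this maximum minus the rounding loss $\sum_{j,k}w_{j,k}(\hat s_{j,k}-\lfloor\hat s_{j,k}\rfloor)<\sum_{j,k}w_{j,k}$, where the weights are the non-negative truncations $w_{j,k}=\max(v_{j,k},0)$ of the tree-difference terms.

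With this in hand I would reuse the Lyapunov function and drift computation from the proof of Theorem~\ref{thm:centralized} (the quadratic built from the leaf-counts $l_{j,r}$ and queue lengths). Since $\vlambda\in\alpha\,\mc{C}_{F,F}$ with $\alpha<1$, convexity of $\mc{C}_{F,F}$ together with $\mathbf 0\in\mc{C}_{F,F}$ gives $\tfrac1\alpha\vlambda\in\mc{C}_{F,F}$, so $\EX_{\bu}\!\big[\max_{\mathbf s\in C_{F,F}(\bu)}\sum_{j,k}w_{j,k}s_{j,k}\big]\ge\sum_{j,k}w_{j,k}\big(\tfrac1\alpha\vlambda\big)^{E}_{j,k}=\tfrac1\alpha\sum_{j}\lambda_j\sum_{k}w_{j,k}$. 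Thus the benchmark beats the arrival load by the \emph{multiplicative} margin $\gamma:=\tfrac1\alpha-1>0$ times $\sum_j\lambda_j\sum_k w_{j,k}$, and the expected drift of the LP-relaxation scheme equals the Theorem~\ref{thm:centralized} drift plus the extra term $2\,\EX_\bu\big[\sum_{j,k}w_{j,k}(\hat s_{j,k}-\lfloor\hat s_{j,k}\rfloor)\big]$.

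The crowd-scaling step is where $\lambda_j=\omega(1)$ (uniformly in $j$, i.e.\ $\min_j\lambda_j\to\infty$) enters. Given $\alpha$, choose $N_0$ with $\lambda_j\ge 2/\gamma$ for all $j$ once $N\ge N_0$. Then, since every $w_{j,k}\ge0$,
\[
\sum_{j,k}w_{j,k}\big(\hat s_{j,k}-\lfloor\hat s_{j,k}\rfloor\big)
<\sum_{j,k}w_{j,k}
\le\tfrac{\gamma}{2}\sum_{j,k}\lambda_j w_{j,k}
=\tfrac{\gamma}{2}\sum_{j}\lambda_j\sum_{k}w_{j,k},
\]
so the rounding loss absorbs at most half of the benchmark margin. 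The residual drift is then of exactly the form obtained in the proof of Theorem~\ref{thm:centralized}, with $\gamma$ replaced by $\gamma/2$: $\EX[\Delta V\mid\mathbf Q(t)=\mathbf Q]\le B_N-\tfrac{\gamma}{2}\sum_j\lambda_j\sum_k w_{j,k}$ for a finite, $N$-dependent constant $B_N$ coming from the bounded second moments, and the Foster--Lyapunov argument of that proof then yields $\limsup_{t}\EX[Q_j(t)]<\infty$ for every $j$. Note that, unlike the cruder route through ``at most $M+S$ fractional variables'', this argument needs no bound at all on how many $\hat s_{j,k}$ are non-integral.

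I expect two points to require care. The lesser one is bookkeeping: one must keep the weights equal to the truncations $\max(v_{j,k},0)$ throughout, both so that $\sum_{j,k}w_{j,k}\ge0$ makes the displayed inequality valid and so that the telescoping along the precedence trees in the Theorem~\ref{thm:centralized} drift is unchanged (as already observed before \eqref{eq:centAlgo}). The real obstacle is that ``stabilizes'' demands a finite steady-state mean, not merely positive recurrence, so the negative term $\tfrac{\gamma}{2}\sum_j\lambda_j\sum_k w_{j,k}$ must be shown to dominate a positive multiple of a coercive function of $\mathbf Q$ outside a finite set; because the precedence trees make the map from backlogs $\{Q_{j,k}\}$ to the truncated weights $\{w_{j,k}\}$ nontrivial, this coercivity has to be imported carefully from the proof of Theorem~\ref{thm:centralized} rather than reproved here. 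Everything else --- feasibility of the disaggregated allocation, finiteness of $B_N$, and the fact that $\alpha<1$ places $\vlambda$ in the interior of $\mc{C}_{F,F}$ --- is routine, and in particular the flexible-agent case needs nothing beyond the $(I,F)$ argument except the reformulation in the first paragraph.
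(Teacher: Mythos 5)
Your proposal is correct and follows essentially the same route as the paper: the paper proves Lemma~\ref{lem:equivalentOpti} (feasible allocations of \eqref{eq:RvarOpti} aggregate into feasible points of \eqref{eq:MvarOpti}, and conversely \eqref{eq:MvarOpti} disaggregates to physical agents), and then invokes the proof of Theorem~\ref{thm:LPrelax}, i.e.\ the Lyapunov drift of Theorem~\ref{thm:centralized} plus the observation that flooring the LP optimum costs at most one unit per coordinate, which the $\lambda_j=\omega(1)$ crowd-scaling margin absorbs. Your write-up merely makes explicit two steps the paper leaves terse --- the conversion of the multiplicative margin $1/\alpha-1$ into the additive slack needed to cover the rounding loss, and the coercivity of the weighted queue functional imported from Theorem~\ref{thm:centralized} --- but the substance is identical.
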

Proof of this theorem is based on the equivalence of 
\eqref{eq:RvarOpti} and \eqref{eq:MvarOpti}.

\subsection{Decentralized Allocation}
\label{sec:decentralizedFlex}
Now we develop a decentralized allocation scheme that requires minimal centralized operation, and gives customers the option
to choose from a pool of multiple agents.

\begin{algorithm}[h]
\caption{Prioritized Greedy with Flexibility} \label{alg:flexGreedy}
Initialize: $\{\gamma(t-t_0)\in [0,1], t\ge t_0\}$, 
at starting time $t_0$ $\bar{A}(t_0)=\mbf{1}$, $\epsilon>0$

\begin{algorithmic}[1]
\STATE Update at each $t$:
\begin{align}
\bar{\mbf{A}}(t) & = (1-\gamma(t-t_0)) \bar{\mbf{A}}(t-1) + \gamma(t-t_0) \mbf{A}(t) 
\nonumber \\
\bar{\mbf{U}}(t) & = (1-\gamma(t-t_0)) \bar{\mbf{U}}(t-1) + \gamma(t-t_0) \mbf{U}(t)
\nonumber 
\end{align}
\STATE Solve for $\{\psi_{m,s}(t):s \in [S], m \in [M]\}$
\begin{align}
& \ \ \ \max 1 \ \mbox{s.t.} \nonumber \\
& \sum_{j,k} \bar{A}_j(t) r_{j,k,s} \le (1-\epsilon)
\sum_{m:s \in S_m} \bar{U}_m(t) \psi_{m,s}(t) \nonumber \\
& \psi_{m,s}(t)\ge 0, \sum_s \psi_{m,s}(t) \le 1 \nonumber \\
& \psi_{m,s}(t)>0 \ \mbox{only if} \ s \in S_m, \label{eq:flexGreedyAlg1}
\end{align}
if no solution pick $\psi_m(t)$ randomly from a simplex in 
$\R^S$.

\STATE Initialize sets: $\mc{B}_s = \emptyset, \forall s \in [S]$
\STATE For each type $m$: put each available agent in
one of $\{\mc{B}_s\}$ w. p. $\{\psi_{m,s}(t)\}$ (independent rolls of loaded dices)
\STATE Create {\em inflexible} agents:
an agent of type $m$ in $\mc{B}_s$ has $h_m$ available time only for
skill $s$
\STATE Run {\bf Prioritized Greedy} for this (I,F) system
\end{algorithmic}
\end{algorithm}

This algorithm is amenable to crowdsourcing platform implementation. Note 
$\bar{\mbf{A}}(t)$ is available from recent history. Creating the set $\mc{B}_s$ is simple:
for any agent of type $m$ we just randomly tag (as per $\psi$) with a particular skill and it
is shown only tasks with this particular skill. Similarly customers are
only shown that the agent has only the particular skill. The rest of the algorithm is exactly like
Prioritized Greedy where we create classes of steps and priorities among them and then within
each class the allocation is arbitrarily greedy. 

We can guarantee Alg. \ref{alg:flexGreedy} performance when $\gamma$ satisfies:
$\gamma(x)=O\left({x}^{-1}\right)$ and $\gamma(x)=\Omega\left({x^{-\frac{1}{2}+\epsilon}}\right), \epsilon>0$.
\begin{theorem}
\label{thm:flexGreedy}
Consider a flexible agents and flexible steps crowdsourcing system with availability processes that are
Poisson (and/or Gaussian) dominated with restricted asymmetry, i.e., $\max_{s,s'} |\sum_{j,k} \lambda_j r_{j,k,s}-\sum_{j,k} \lambda_j r_{j,k,s'}|$, 
being $O\left(\mbox{subpoly}(N)\right)$. For any $\alpha \in (0,1)$, $\exists N_\alpha$ s.t. $\forall N\ge N_\alpha$
in such systems of size $N$ that follow crowd scaling any arrival rate $\vlambda \in \alpha \mc{C}_{F,F}$ can be stabilized
by Alg.~\ref{alg:flexGreedy} and at the steady state (i.e., for any finite
$t$ when $t_0=-\infty$) the total number of unallocated steps 
in the system across all types is $O(\log N)$ w.p. $1-O\left(\frac{1}{N^2}\right)$.
\end{theorem}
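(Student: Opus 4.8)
The plan is to reduce the analysis of Algorithm~\ref{alg:flexGreedy} to the $(I,F)$ analysis behind Theorem~\ref{thm:prioGreedy}: I will show that, with probability $1-O(1/N^2)$, the random tagging in Steps~3--5 produces an inflexible-agent population whose per-skill mean service supply dominates the per-skill service demand with a uniform multiplicative margin bounded away from $1$ and is restrictedly asymmetric, so that running Prioritized Greedy on it inherits the stability and steady-state queue guarantees. Three pieces are needed: (i) feasibility and favorable structure of the LP~\eqref{eq:flexGreedyAlg1}; (ii) concentration of the running averages $\bar{\mbf{A}}(t),\bar{\mbf{U}}(t)$ and of the binomial rounding; and (iii) a coupling that transfers the \emph{steady-state} conclusions of the fixed $(I,F)$ system to the time-varying, history-dependent system that is actually run.

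First I would pin down the LP structure. Using the equivalence of \eqref{eq:RvarOpti} and \eqref{eq:MvarOpti} (the basis of Theorem~\ref{thm:LPrelaxFlex}) together with the linearity of the flexible-agent constraints in the agent availability, averaging over $\Gamma$ shows that $\vlambda\in\alpha\,\mc{C}_{F,F}$ is equivalent to the existence of simplex-constrained fractions $\psi^\star_{m,s}\ge 0$ (supported on $S_m$, $\sum_s\psi^\star_{m,s}\le 1$) with $\sum_{j,k}\lambda_j r_{j,k,s}\le\alpha\sum_m\mu_m h_m\psi^\star_{m,s}$ for every $s$. Taking the algorithm's $\epsilon$ small enough that $\epsilon<1-\alpha$, I would then argue that on the high-probability event $\bar{\mbf{A}}(t)\le(1+\eta)\vlambda$, $\bar{\mbf{U}}(t)\ge(1-\eta)\bm\mu$ componentwise with $\eta=\eta(N)\to0$, the LP~\eqref{eq:flexGreedyAlg1} is feasible; moreover its solution can be chosen with every skill constraint essentially tight up to the factor $(1-\epsilon)$, so that the induced mean supply $\sum_m h_m\bar U_m(t)\psi_{m,s}(t)$ is within a constant factor of the demand $\sum_{j,k}\lambda_j r_{j,k,s}$ and hence inherits the restricted-asymmetry hypothesis imposed on the demand.

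Second I would establish the concentration facts. The estimate $\bar{\mbf{A}}(t)$ is an exponentially-weighted average of i.i.d.\ samples, and the conditions $\gamma(x)=O(x^{-1})$ and $\gamma(x)=\Omega(x^{-1/2+\epsilon})$ are exactly the stochastic-approximation conditions under which it stays within a vanishing relative error of its mean uniformly over the relevant $t$ --- the upper bound keeping the effective averaging window wide enough that the fluctuation is $O(\sqrt{\mathrm{Var}(A_j)})=O(\sqrt{\lambda_j})$, negligible against $\lambda_j=\omega(1)$ under crowd scaling, and the lower bound keeping the window from growing so slowly that a finite-$t$ statement fails (with $t_0=-\infty$ no initialization term survives). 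Poisson-/Gaussian-domination then supplies Chernoff tails, and a union bound over the $N+M$ coordinates, again using crowd scaling, yields all of this with probability $1-O(1/N^2)$; the same holds for $\bar{\mbf{U}}(t)$. For the rounding, the number of type-$m$ agents placed in $\mc{B}_s$ is $\mathrm{Binomial}(U_m(t),\psi_{m,s}(t))$, which is Poisson-dominated about its mean $U_m(t)\psi_{m,s}(t)$, so the realized skill-$s$ supply $\sum_m h_m\,\mathrm{Binomial}(U_m(t),\psi_{m,s}(t))$ concentrates around $\sum_m h_m\mu_m\psi_{m,s}$ with vanishing relative error, with probability $1-O(1/N^2)$ after a union over the $S=o(N)$ skills.

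Finally, conditioning on these events, the tagged-agent population at epoch $t$ is stochastically dominated by a fixed $(I,F)$ system that meets the hypotheses of Theorem~\ref{thm:prioGreedy} with some fixed margin $\alpha'<1$, sub-poly$(N)$ supply asymmetry, and arrivals/availabilities that are Poisson/Gaussian-dominated (the tagged counts being Poisson-dominated binomials); since with $t_0=-\infty$ the vector $\psi(t)$ sits at its limiting value up to $o(1)$ at every finite $t$, a monotone coupling transfers the conclusions of Theorem~\ref{thm:prioGreedy} --- stability, and the steady-state bound that the total number of unallocated steps is $O(\log N)$ with probability $1-O(1/N^2)$ --- to Algorithm~\ref{alg:flexGreedy}. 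I expect the main obstacle to be part~(iii) combined with the fine control of the averaging error: one must show \emph{simultaneously} that the $\gamma$-window is wide enough to concentrate yet that $\psi(t)$ tracks its limiting value closely enough that the induced $(I,F)$ system is ``almost stationary'' at every finite time, and propagate three compounding layers of randomness --- arrivals, availabilities, and tagging --- through the nonlinear LP map~\eqref{eq:flexGreedyAlg1} while keeping every failure probability at $O(1/N^2)$; the restricted-asymmetry requirement of Theorem~\ref{thm:prioGreedy} is precisely what forces the LP solution to be taken tight rather than arbitrary.
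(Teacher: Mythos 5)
Your high-level reduction is the same as the paper's: tag flexible agents into inflexible per-skill pools via the LP, observe that $\vlambda\in\alpha\,\mc{C}_{F,F}$ forces the existence of simplex fractions whose induced per-skill mean supply exceeds the per-skill demand (this is exactly Lemma~\ref{lem:outerFF}), and then ride the Theorem~\ref{thm:prioGreedy} analysis. But the way you propose to handle the two extra layers of randomness --- the running averages $\bar{\mbf{A}}(t),\bar{\mbf{U}}(t)$ and the Bernoulli tagging --- has a genuine gap. You condition on per-epoch concentration events holding with probability $1-O(1/N^2)$ and then invoke a ``monotone coupling'' to a fixed $(I,F)$ system. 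A steady-state bound, however, is an infinite-horizon statement: over the infinitely many epochs entering the Loynes construction, the per-epoch good events fail infinitely often almost surely, so the coupling to a dominating fixed system breaks path-wise and the conditioning does not commute with the stationary expectation. You flag this as ``the main obstacle'' but do not resolve it, and resolving it is where the actual proof lives.

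The paper avoids both problems rather than solving them your way. For the averages, it does not prove uniform-in-$t$ finite-sample concentration at all: with $t_0=-\infty$ and the stated $\gamma$ conditions, stochastic-approximation convergence gives that $\psi_{m,s}(t)$ equals its limiting value $p_{m,s}$ at every finite $t$ with probability one (Lemma~\ref{lem:flexGreedy2}), so the LP output is deterministically pinned and no union bound over epochs is needed. For the tagging, it never concentrates the binomial; it carries the thinning exactly through the exponential-moment computation: conditionally on $\psi$, the skill-$s$ supply is a Bernoulli thinning of $U_m$, whose moment generating function is $M_{U_m}\!\left(\log\left(p_{m,s}e^{\theta}+1-p_{m,s}\right)\right)$ (Lemma~\ref{lem:flexGreedy1}), and concavity of the logarithm gives $\log\left(p_{m,s}e^{\theta}+1-p_{m,s}\right)\ge p_{m,s}\theta$, which plugs directly into the negative-drift bound $\EX[e^{-\tilde\theta_{m,s}h_mU_m}]\le e^{-(1-\delta)\theta p_{m,s}h_m\mu_m}$ inside the Loynes sum. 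That is the mechanism by which every realization, including the ``bad'' ones, is accounted for. Your observation that a thinned Poisson-dominated variable is again Poisson-dominated points in the right direction, but to close the argument you must keep the tagging inside the MGF/Loynes machinery rather than condition it away.
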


\section{Flexible Agents and Inflexible Steps}
\label{sec:flexAgentsInflexSteps}
Now consider the setting where agents may split their available service-time across their skills, but a step must be allocated to one agent. 
Multiple agents cannot pool their service time to serve a step.  As before, for an agent availability vector $\bu$, there is a set of possible allocations of 
steps (of different $(j,k)$-types) to agents, denoted $C_{F,I}(\bu)$. Given $C_{F,I}(\bu)$ and the distribution of agent availability $\Gamma(\bu)$, we can 
define a capacity region $\mc{C}_{F,I}$ in the same way as $\mc{C}_{I,F}$ is defined in Sec.~\ref{sec:inflexAgentsFlexSteps} based on $C_{I,F}$.
Similarly, the generic centralized routine can be adapted by changing the optimization over $C(\mbf{U}(t))$ to an optimization over $C_{F,I}(\mbf{U}(t))$ 
while ensuring optimality of the modified scheme for $(F,I)$ system.

Allocation constraint \eqref{eq:taskAlloCons2} is for allocation of steps to a particular agent. For a given set of agents of different types
$\mbf{u}=(u_1, u_2, \dots, u_M)$ the allocation constraint can be written based on \eqref{eq:taskAlloCons2}. 
Note that for inflexible steps agents cannot pool service-times to serve a step. Consider a set of available agents $a_1, a_2, \dots, a_R$, of types $1, 2, \ldots,
m_r$ respectively. An allocation of $\{s_{j,k} \in \Z_+: k \in [K_j], j \in [N]\}$ steps to these agents is 
possible if and only if there are integers $\{z_{j,k,r} \in \Z_+\}$ such that $z_{j,k,r}$ $(j,k)$-steps are allocated to agent $a_r$ and all $\{s_{j,k}\}$ 
steps are allocated to some agent, i.e., for each $r$ there is an $\alpha_r$ 
in an $S$-dimensional simplex so that:
\begin{align}
& \sum_{j,k} \mbf{r}_{j,k} z_{j,k,r} \le \alpha_r h_{m_r}, \alpha_{r,s}=0 \ \mbox{if}
\ s \not\in S_{m_r}, r \in [R] \nonumber \\
& \sum_{r} z_{j,k,r} \ge s_{j,k} \forall j,k. \nonumber 
\end{align}
Hence, the optimization problem in the centralized allocation routine for $(F,I)$ system is an integer LP of the form:
\begin{align}
&\max_{s_{j,k} \in \Z_+} \sum_{j} \sum_{k=1}^{K_{j}} w_{j,k} s_{j,k} \nonumber \\
&\mbox{s.t. }  \sum_{j,k} \mbf{r}_{j,k} z_{j,k,r} \le \alpha_r h_{m_r}, \alpha_{r,s}=0 \
 \mbox{if} \ s \not\in S_{m_r}, r \in [R]\mbox{,} \nonumber \\
&\quad \sum_{r} z_{j,k,r} \ge s_{j,k} \mbox{ for all } j,k \mbox{, and} \nonumber  \\
&\quad z_{j,k,r} \in \Z_+ \mbox{ for all } j,k,r, R=\sum_m u_m \mbox{.} \label{eq:centAlgoFI1}
\end{align}
Note that like \eqref{eq:centAlgo}, the objective can be written as $\sum_{j} \sum_{k=1}^{K_{j-1}} v_{j,k} s_{j,k}$ where $v_{j,k}=\max(w_{j,k},0)$.

This problem has a special structure which leads to a computationally-efficient algorithm. Consider the following.
\begin{align}
&\max_{z_{j,k,r} \in \Z_+} \sum_{j} \sum_{k=1}^{K_{j}} v_{j,k} \sum_r z_{j,k,r} \nonumber \\
&\mbox{s.t. } \sum_{j,k} \mbf{r}_{j,k} z_{j,k,r} \le \alpha_r h_{m_r}, \alpha_{r,s}=0 \
 \mbox{if} \ s \not\in S_{m_r}, r \in [R] \mbox{, and} \nonumber \\
&\quad z_{j,k,r} \in \Z_+ \mbox{ for all } j,k,r, R=\sum_m u_m \mbox{.} \label{eq:centAlgoFI2}
\end{align}

When operating at the optimum of \eqref{eq:centAlgoFI1}, $\sum_{r} z_{j,k,r} = s_{j,k}$, 
and so we see that \eqref{eq:centAlgoFI1} and \eqref{eq:centAlgoFI2} 
have the same optimal value. Hence, we solve problem \eqref{eq:centAlgoFI2} 
instead of problem \eqref{eq:centAlgoFI1}.

Note that as there is no constraint between $\{z_{j,k,r}:j,k\}$ and $\{z_{j,k,r'}: j,k\}$, problem \eqref{eq:centAlgoFI2} decomposes
into $\sum_m u_m$ optimization problems, each for an available agent. Consider the optimization problem for an agent of type $m$.
\begin{align}
&\max_{z_{k,j}} \sum_j \sum_{k=1}^{K_j} v_{j,k} z_{j,k} \nonumber \\
&\mbox{s.t. } \sum_{j,k} \mbf{r}_{j,k} z_{j,k} \le \alpha_r h_{m_r}, \alpha_{r,s}=0
\ \mbox{if} \ s \not\in S_{m_r} \mbox{,} 
\end{align}
which is again equivalent to the following problem, expressed in terms of the set $S_m$ of skills of type $m$ agent:
\begin{align}
&\max_{z_{k,j}} \sum_{j,k:r_{j,k,s}=0 \ \mbox{if} \ s \not\in S_m} v_{j,k} z_{j,k} \nonumber \\
&\mbox{s.t. } \sum_{j,k:r_{j,k,s}=0 \ \mbox{if} \ s \not\in S_m} z_{j,k} \sum_{s} r_{j,k,s} \le h_m \mbox{.}
\label{eq:centAlgoFI3}
\end{align}
This is a one-dimensional knapsack problem, and there are dynamic programming (DP) pseudo-polynomial time algorithms for solving. 
Since the sack size $h_m$ is finite (does not scale with the system), the DP has computational complexity $O(\sum_j K_j)$.
This implies the centralized scheme decomposes into $\sum_m u_m$ problems, each of which can be solved in polynomial time.

Thus, the centralized scheme naturally leads to a decentralized scheme where each available agent solves \eqref{eq:centAlgoFI3} and uses
the optimal solution as its \emph{potential} allocated steps. Agents may use an arbitrary contention mechanism among themselves to decide 
which agent allocates first.  Upon resolving contention, agents pick steps greedily by solving \eqref{eq:centAlgoFI3}. Since the decentralized scheme follows 
directly from the centralized one \eqref{eq:centAlgoFI1}, performance guarantees from Thm.~\ref{thm:centralized} hold.

Although this simple decentralized scheme is optimal, it does not give customers freedom of choosing agents.  Thus, we propose
another decentralized scheme where customers get to pick any agent from a subset of available agents.

\begin{algorithm}
\caption{Restricted Greedy}
\label{alg:restGreedy}
{\em Compute and Store: one-time}
\begin{algorithmic}[1]
\FOR{$d=1:D$}
\STATE $\mc{A}_d=\cup_j \{\mbox{set of required skills for steps at depth} \ $d$ \ \mbox{of} \ T_j\}$ 
\STATE $p_d=0$
\WHILE{$\mc{A}_d \neq \emptyset$}
\STATE $p_d=p_d+1$
\STATE Pick maximal subsets from the collection
$\mc{A}_d$, say $\mc{L}^d_{p_d}$.
\STATE $\mc{A}_d = \mc{A}_d \backslash \mc{L}^d_{p_d}$
\ENDWHILE
\ENDFOR
\end{algorithmic}

{\em Allocation at time $t$}

Define $D=\max_j \ \mbox{depth of} \ T_j$

\begin{algorithmic}[1]
\FOR{$d=1:D$}
\WHILE{$k \le p_d$}
\STATE Steps in $\mc{L}^d_k$ allocate themselves greedily (ties are broken
arbitrarily)
\STATE $k=k+1$
\ENDWHILE
\ENDFOR
\end{algorithmic}
\end{algorithm}
Alg.~\ref{alg:restGreedy} allows the different types of steps to pick agents greedily, but in a restricted
manner. It prioritizes steps with lower depth like Prioritized Greedy.
Among steps with the same priority (in terms of depth), it gives preference
to steps requiring more skills to ensure an agent with multiple 
skills is not used unwisely for a step with
lesser requirements. 
\begin{theorem}
\label{thm:restGreedy}
Consider a flexible agent and inflexible steps crowdsourcing system where each type of agent has
$O(1)$ skills, and arrival as well as availability processes are Poisson (and/or Gaussian) dominated, $\{S_m:m \in [M]\}$ is
a partition of $[S]$ and $\sum_s r_{j,k,s}$ are same for all $(j,k)$. 
For this system for any $\alpha \in (0,1)$, $\exists N_\alpha$ s.t. $\forall N\ge N_\alpha$
in such systems of size $N$ that follows crowd scaling any arrival rate $\vlambda \in \alpha \mc{C}_{F,I} $ can be stabilized
by Alg.~\ref{alg:restGreedy} and at steady-state the total number of unallocated steps 
in the system across all types is $O(\log N)$ w.p.\ $1-O\left(\frac{1}{N^2}\right)$.
\end{theorem}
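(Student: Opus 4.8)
The plan is to reduce this $(F,I)$ system, under the stated partition and uniform-size assumptions, to a collection of essentially independent ``bin'' systems to which the analysis behind Theorem~\ref{thm:prioGreedy} applies almost verbatim, and then to bootstrap the high-probability clearing afforded by crowd scaling. Write $\rho := \sum_s r_{j,k,s}$ (the same for every $(j,k)$) and, since $\{S_m\}$ partitions $[S]$, associate to each step type $(j,k)$ the unique block $m(j,k)$ with $\{s:r_{j,k,s}>0\}\subseteq S_{m(j,k)}$ (a step whose skills straddle two blocks is never allocatable, so $\mc{C}_{F,I}$ forces its rate to zero). First I would observe that because agents of type $m$ are flexible with $|S_m|=O(1)$ and every step has total size $\rho$, an agent of type $m$ with time $h_m$ can serve \emph{any} multiset of at most $c_m:=\lfloor h_m/\rho\rfloor$ block-$m$ steps: taking $\alpha_{m,s}=\big(\sum_{j,k}r_{j,k,s}z_{j,k}\big)/h_m$ makes the per-skill constraints in \eqref{eq:centAlgoFI3} feasible exactly when $\sum_{j,k}z_{j,k}\le h_m/\rho$. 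Hence within a block every available agent is simply a bin of $c_m$ interchangeable slots, the per-agent integer program \eqref{eq:centAlgoFI3} convexifies to the scalar constraint $\sum_{(j,k)\to m}z_{j,k}\le c_m$, and $\mc{C}_{F,I}$ decomposes across blocks into the $l_{j,r}$-weighted scalar conditions of Theorem~\ref{thm:capacity}, one per block, with block-$m$ capacity proportional to $\mu_m c_m$. A useful corollary of interchangeability is that the skill-set prioritization inside a fixed depth in Algorithm~\ref{alg:restGreedy} is immaterial in this setting, so Restricted Greedy acts exactly like a blockwise Prioritized Greedy.

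Next I would handle the precedence cascade exactly as for Algorithm~\ref{alg:prioGreedy}. At epoch $t$ the depth-$d$ steps present are the leftover backlog plus the depth-$(d-1)$ steps \emph{completed} at $t-1$, and since service times are shorter than an epoch, $A_{j,k}(t)=D_{j,k-1}(t-1)$. Processing \emph{all} depth-$(d-1)$ steps across every block before any depth-$d$ step, together with the bounded depth $D=O(1)$ and the $O(1)$ inner rounds $p_d$, lets me unwind this recursion along any ``good run'' of epochs on which every block has cleared: the demand presented to block $m$ at epoch $t$ is then a sum of $O(1)$ appropriately time-shifted copies of the external arrivals restricted to block $m$, and $\vlambda\in\alpha\,\mc{C}_{F,I}$ --- read through the cascade-aware $\mbf{a}^E$ and $l_{j,r}$ description of the region --- bounds its mean by $\alpha\,\mu_m c_m$ up to the $o(N^\delta)$ asymmetry already controlled by the hypotheses. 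This step absorbs the cross-block coupling (a step and its parent may lie in different blocks), because the depth ordering guarantees the parents' completions are available before their children compete.

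The clearing step is then a large-deviation estimate. Poisson/Gaussian domination makes each of the $O(1)$ time-shifted arrival sums feeding block $m$, and the availability $U_m(t)$, concentrate exponentially around their means; and crowd scaling with the partition forces $\mu_m c_m=\Omega(N^c)$ for every block (the supply $\mu_{m(s)}h_{m(s),s}$ of each skill $s$ must match $\lambda^s(N)=\Omega(N^c)$, and $h$ does not scale). Combining concentration with the multiplicative slack $1-\alpha$ gives $\PR[\text{block }m\text{ fails to clear at epoch }t]\le e^{-\Omega((1-\alpha)^2 N^c)}$; choosing $N_\alpha$ large and taking a union bound over the $M=O(N^{\alpha_1})$ blocks and the $O(\log N)$ most recent epochs keeps the total failure probability $O(1/N^2)$. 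On the complementary event every block clears in each of the last $O(1)$ epochs, so the whole system is empty; when a rare block-failure does occur its backlog is $O(\mathrm{poly}\,N)$ but is wiped out within $O(1)$ epochs and contributes negligibly --- this is what yields ``$O(\log N)$ unallocated steps w.p.\ $1-O(1/N^2)$'' (the bound is in fact loose here, but it is stated so as to match the earlier theorems). Stability itself follows from a Foster--Lyapunov argument with the $l_{j,r}$-weighted quadratic used in the proof of Theorem~\ref{thm:centralized}: outside a bounded set the blockwise clearing gives strictly negative drift.

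The main obstacle is the precedence step: translating the single inclusion $\vlambda\in\alpha\,\mc{C}_{F,I}$, which is a statement about the convexified, cascade-aware region, into a per-block, per-epoch inequality ``demand $\le$ capacity'' that is robust to concentration --- in particular tracking how the $l_{j,r}$ weights and the map $\mbf{a}\mapsto\mbf{a}^E$ interact with the block decomposition when a parent step and its child live in different blocks. A secondary, more routine point is verifying that the per-agent integer knapsack \eqref{eq:centAlgoFI3} genuinely convexifies to $\sum z_{j,k}\le c_m$ so that the capacity region splits exactly as claimed; everything after that is the concentration-plus-Lyapunov machinery already used for Theorems~\ref{thm:prioGreedy} and~\ref{thm:flexGreedy}.
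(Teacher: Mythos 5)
Your proposal follows essentially the same route as the paper's proof: the partition $\{S_m\}$ together with the uniform total step size reduces each available agent of type $m$ to a bin of $\lfloor h_m/\rho\rfloor$ interchangeable slots (your $\rho$ being the common step size; this is exactly the paper's Lemma~\ref{lem:outerFI} and its blockwise outer bound $\mc{C}^O_{F,I}$), and the depth-by-depth induction with Chernoff bounds under Poisson/Gaussian domination and crowd scaling is lifted verbatim from the proof of Theorem~\ref{thm:prioGreedy}. The only substantive difference is in the final packaging: you close via per-epoch clearing events, a union bound over blocks and epochs, and a separate Foster--Lyapunov step for stability, whereas the paper constructs dominating block-and-depth queues $\tilde{Q}^l_d$, applies the Loynes construction to bound a stationary exponential moment, and reads off both stability and the $O(\log N)$ tail from Markov's inequality --- the underlying moment-generating-function estimates are the same in both versions.
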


For many systems the total sizes of steps are nearly identical and so 
the assumption on total size is not restrictive, though results can be 
extended to the case where the total sizes are random with the same mean. 
The assumption $\{S_m: m \in [M]\}$ is a partition is required for proving
the performance guarantee, but the algorithm (actually a simpler version)
works well on simulations. The above performance guarantee
can be extended for the following conditions.
$\{S_{i}: i \in \mc{I}\}$ is a partition of $[S]$ 
for some $\mc{I} \subset [M]$ and for any $m$~~$S_m \subset S_{i}$ for 
some $i \in \mc{I}$, $D=1$, and for any $(j,k)$, $(j',k')$ pair, 
$\{s: r_{j,k,s}>0\}$ and $\{s: r_{j',k',s}>0\}$ either have no 
intersection or one is a subset of the other.

\section{Evaluation}
\label{sec:evaluations}

Secs.~\ref{sec:capacity}--\ref{sec:flexAgentsInflexSteps} characterized 
limits of different types of crowdsourcing systems, proposed efficient policies 
for optimal centralized allocation and designed decentralized schemes with
provable bounds on backlog while giving customers freedom of choice. This 
section complements theoretical results by studying 
real data from Samasource, a non-profit crowdsourcing 
company and
realistic Monte Carlo simulations. We study performance of simplified (in implementation and computation) 
versions of proposed decentralized algorithms above.

Let us first describe evaluating allocation using real data. The dataset 
contains $9.3$M tasks and each belongs to a specific project.
Some projects are regarded as \emph{real-time} which means they have 
higher priority. The overall number of tasks that belong to the real-time projects 
is about $4.2$M. Each task comprises 1 or 2 steps which in turn comprises a single substep. 
Some tasks have strict step ordering, i.e., the previous 
step must be completed before the next could be scheduled. Average substep working 
time requirement is $340$ sec. From the data, we calculate the 
turn-around time (TAT) for each task, i.e., the time since the task arrived to 
the system until the time its last step was completed. The cdfs of TAT for all 
projects and for real-time projects only are given in Fig.~\ref{fig:sama_cdf_3_days_tat}. 

SamaHub, the platform of Samasource considers both agents and steps to be flexible. 
We implement a simplified version of the relevant decentralized algorithm,
called $\algo1$, where we prioritize the steps with higher precedence to choose agents greedily with random tie-breaking. 


\begin{figure}
	\centering
	\includegraphics[width=3.5in]{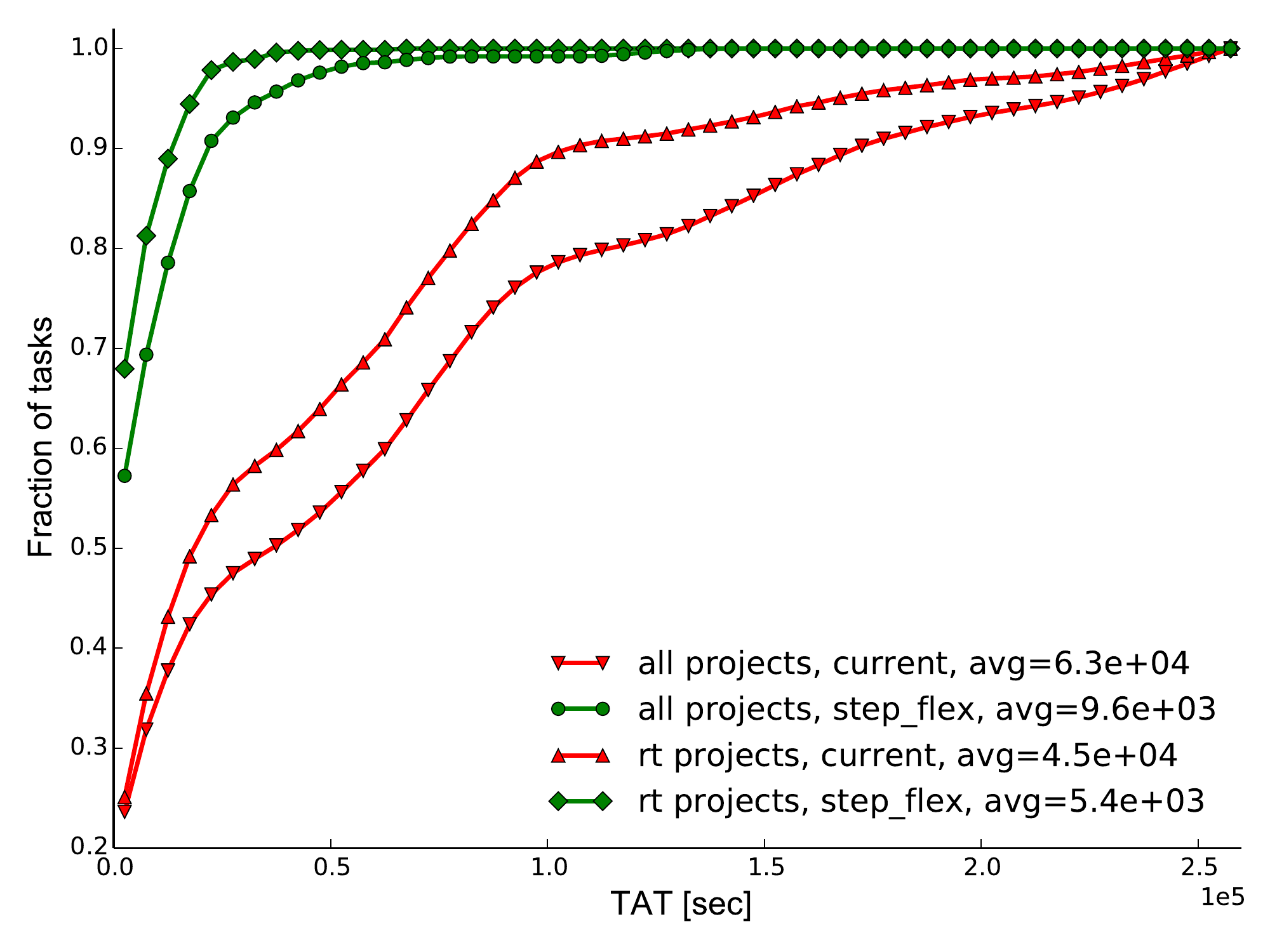}
    \caption{\small CDF of tasks turn-around time (TAT) using real dataset. Current allocation on the platform ``current'' vs our algorithm ``$\algo1$".}
    \label{fig:sama_cdf_3_days_tat}
\end{figure}


\begin{figure}
	\centering
	\subfloat[]{\includegraphics[width=0.48\columnwidth]{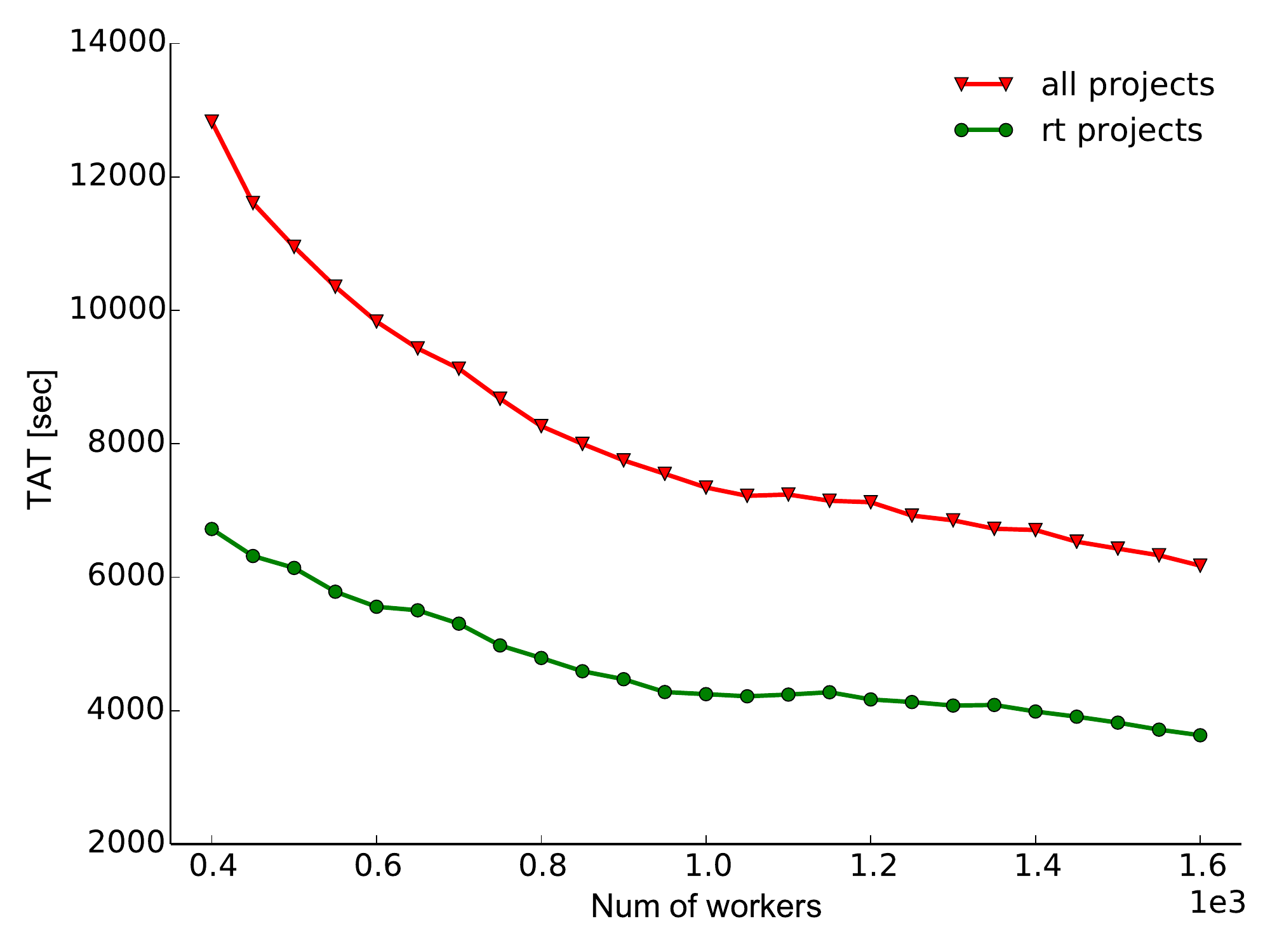}}  $\;$ 
	\subfloat[]{\includegraphics[width=0.48\columnwidth]{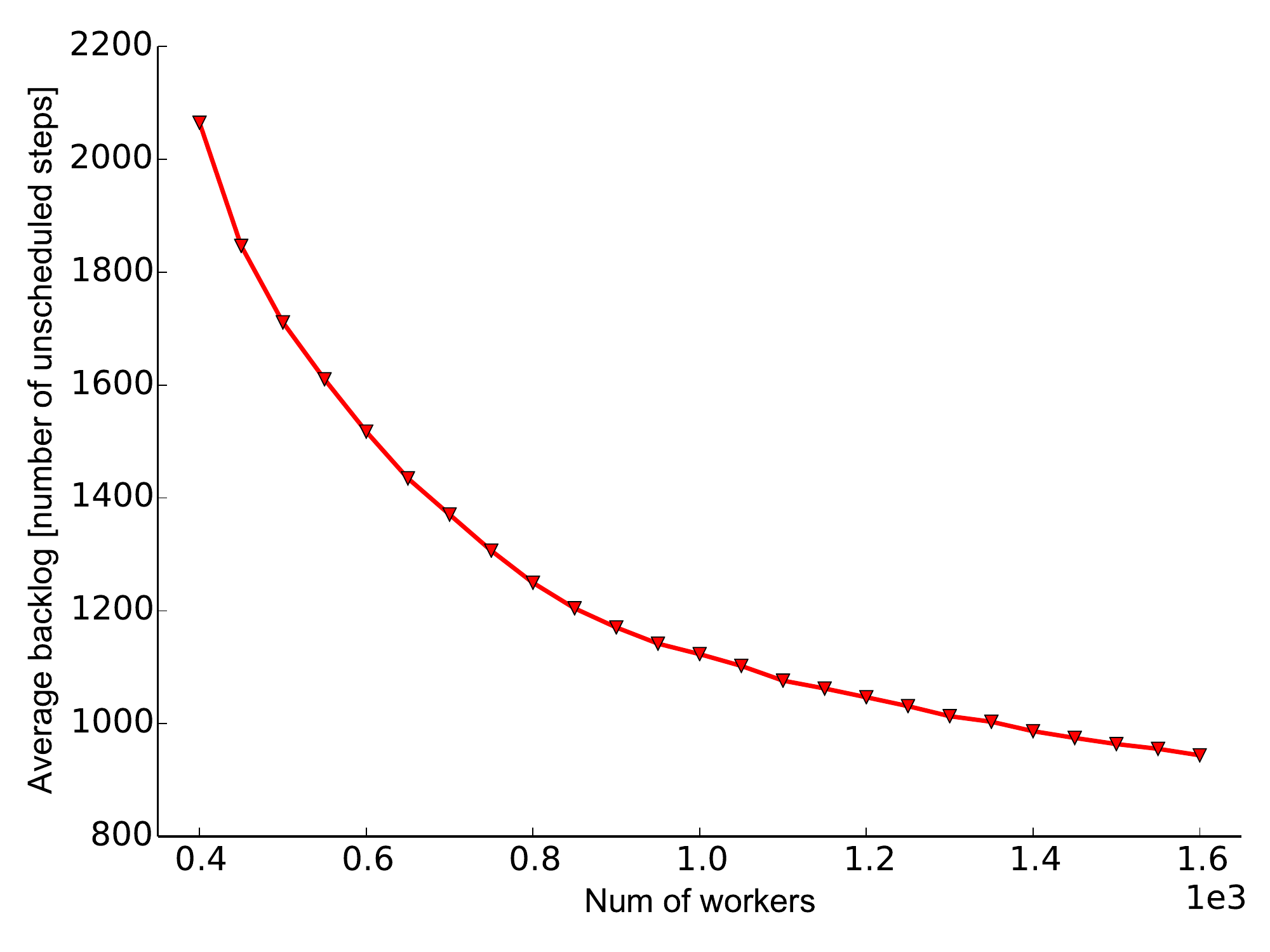}} $\;$
       \caption{\small Performance of our $\algo1$ algorithm on real data, as a function of number of workers. (a) Tasks turn-around time (TAT). (b) Average backlog (number of unallocated steps in the system). }
    \label{fig:sama_workers_algo1}
\end{figure}

To compare current allocation on SamaHub with our approach, we use real 
data as input to $\algo1$. Since we lack exact 
knowledge of worker availability, we make the following assumption in consultation with Samasource. The number of 
active workers in the system is $625$, evenly distributed across four time zones: $-4,0,3,5.5$, where each worker works every day from $9$am to $5$pm. 
Each worker possesses the skills required for any substep in the dataset. Fig.~\ref{fig:sama_cdf_3_days_tat} compares 
the cdf of TAT of our approach $\algo1$ (simulated with the data as input) with currently deployed scheme. Our algorithm substantially outperforms current 
scheme: average TAT for all projects is $\times 6.5$ better and 
more than $\times 8$ better for real-time projects. This improvement is also influenced by our implementation, which is not restricted by the currently-practiced organizational structure.

Fig.~\ref{fig:sama_workers_algo1} shows how $\algo1$ performs as a function of number of workers. As the number of workers grows, 
TAT decreases (see Fig.~\ref{fig:sama_workers_algo1}(a)).  The benefit of adding more workers can be seen even more clearly when 
analyzing backlog, i.e., the average number of steps that entered the system but not yet scheduled, see Fig.~\ref{fig:sama_workers_algo1}(b).


\begin{figure*}[]
	\centering
    \subfloat[]{\includegraphics[width=0.55\columnwidth]{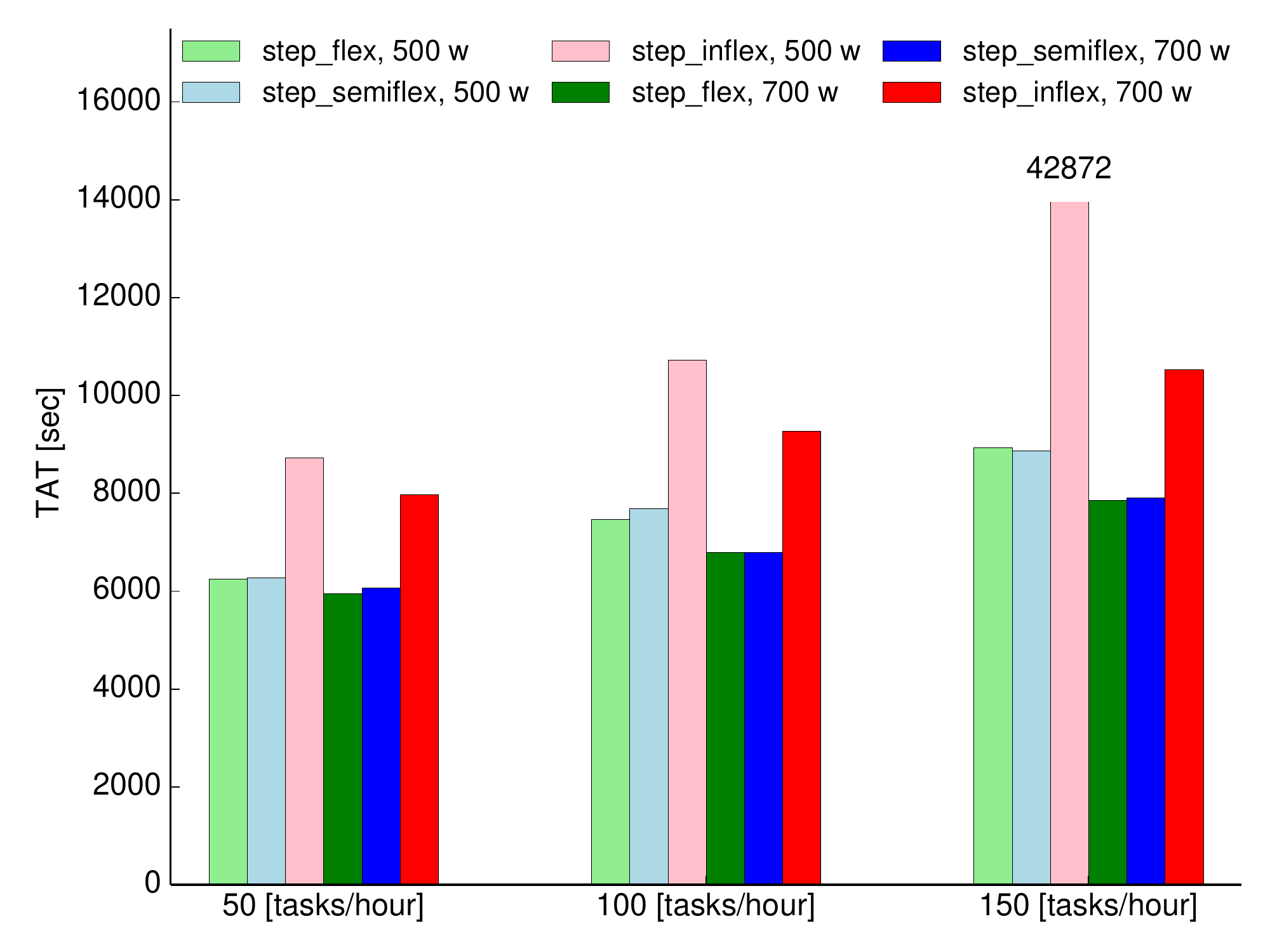}}  $\;$ 
	\subfloat[]{\includegraphics[width=0.55\columnwidth]{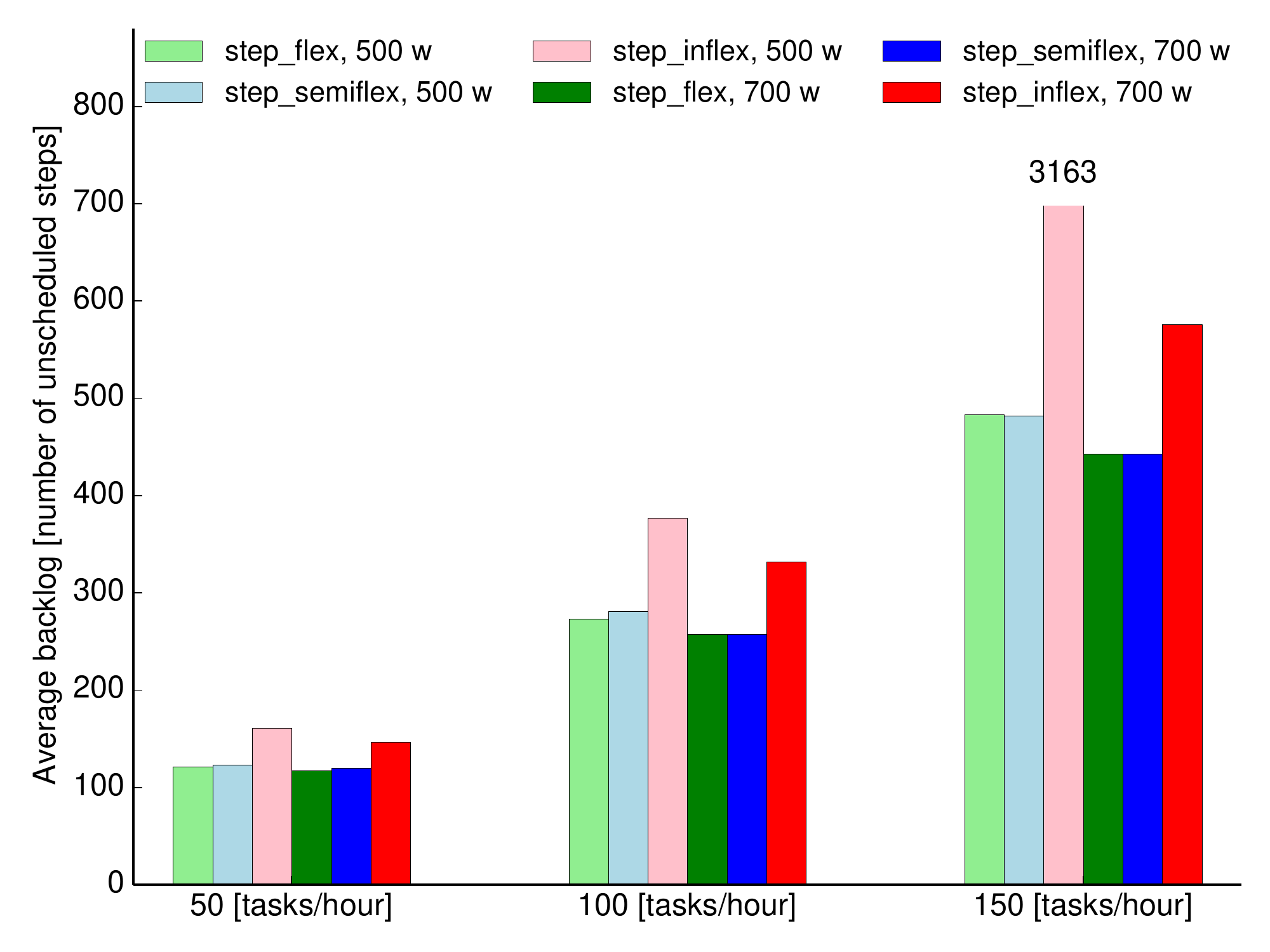}} $\;$
    \subfloat[]{\includegraphics[width=0.55\columnwidth]{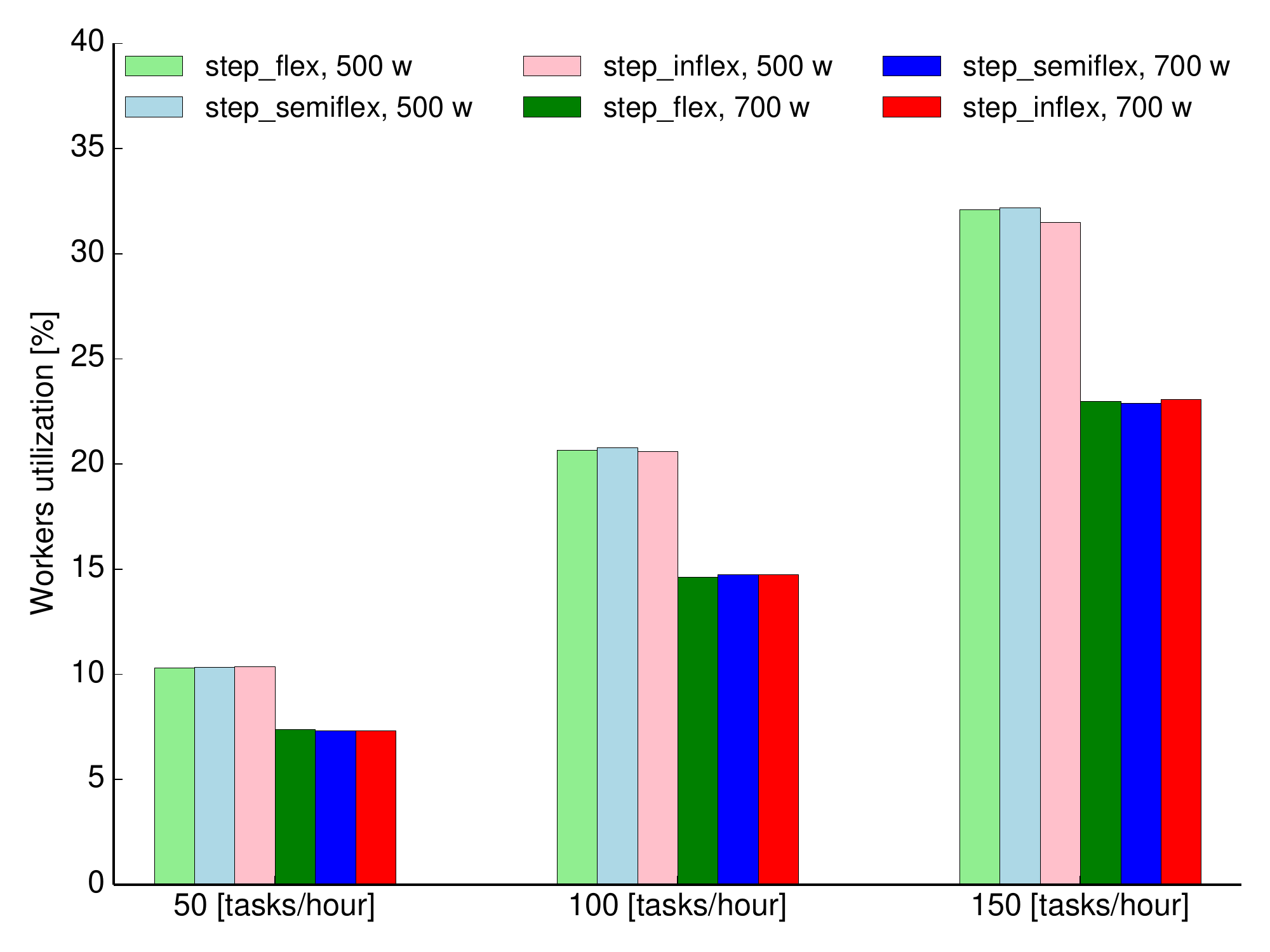}} $\;$
    \caption{\small Performance of our algorithms on synthetic data with \emph{short} sub-steps ($60-600$ sec), as a function of load. (a) Tasks turn-around time (TAT). (b) Average backlog (number of unscheduled steps in the system). (c) Workers utilization.}
    \label{fig:synth_load_bars}
\end{figure*}

\begin{figure*}[]
	\centering
    \subfloat[]{\includegraphics[width=0.55\columnwidth]{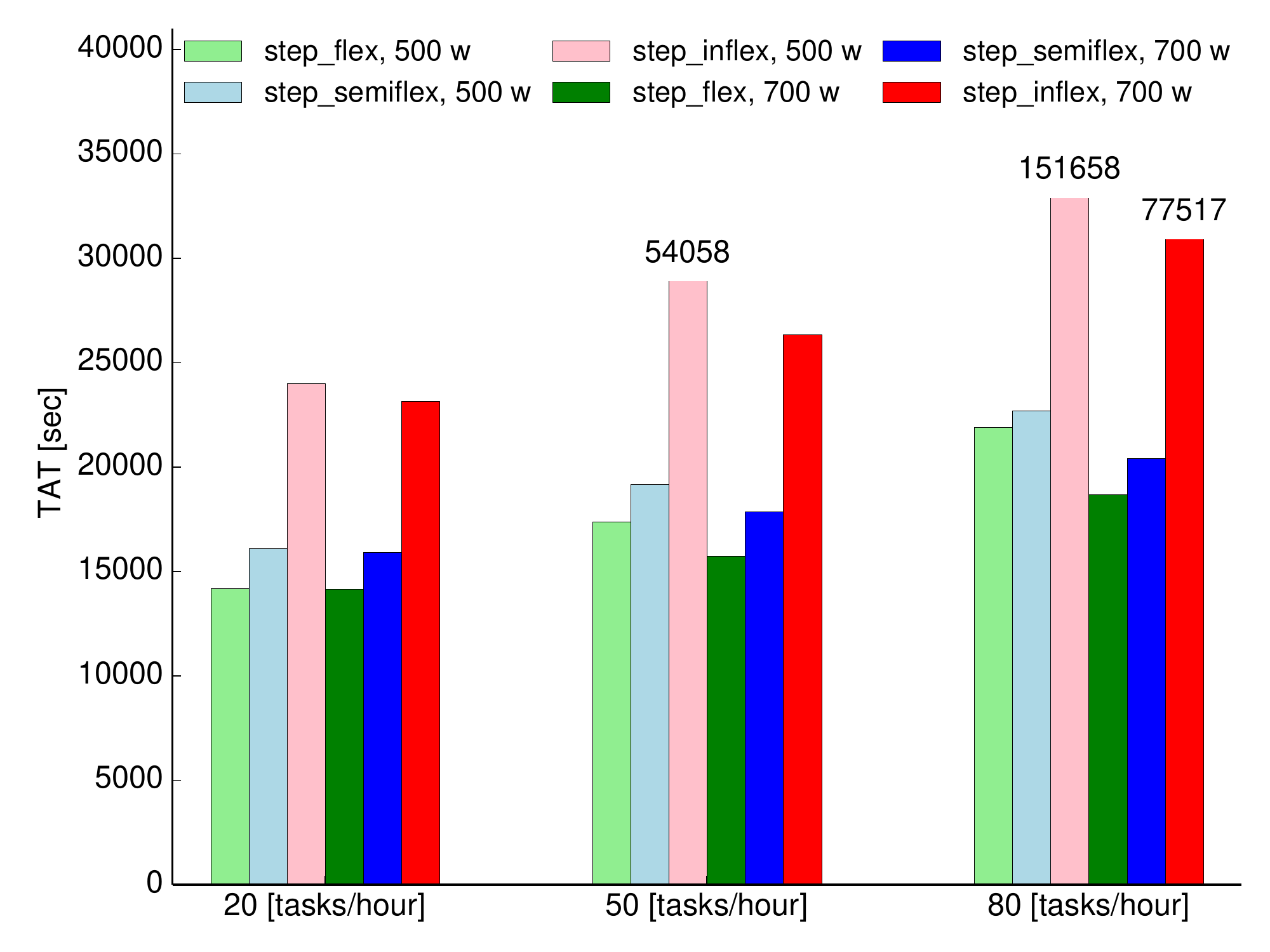}}  $\;$ 
	\subfloat[]{\includegraphics[width=0.55\columnwidth]{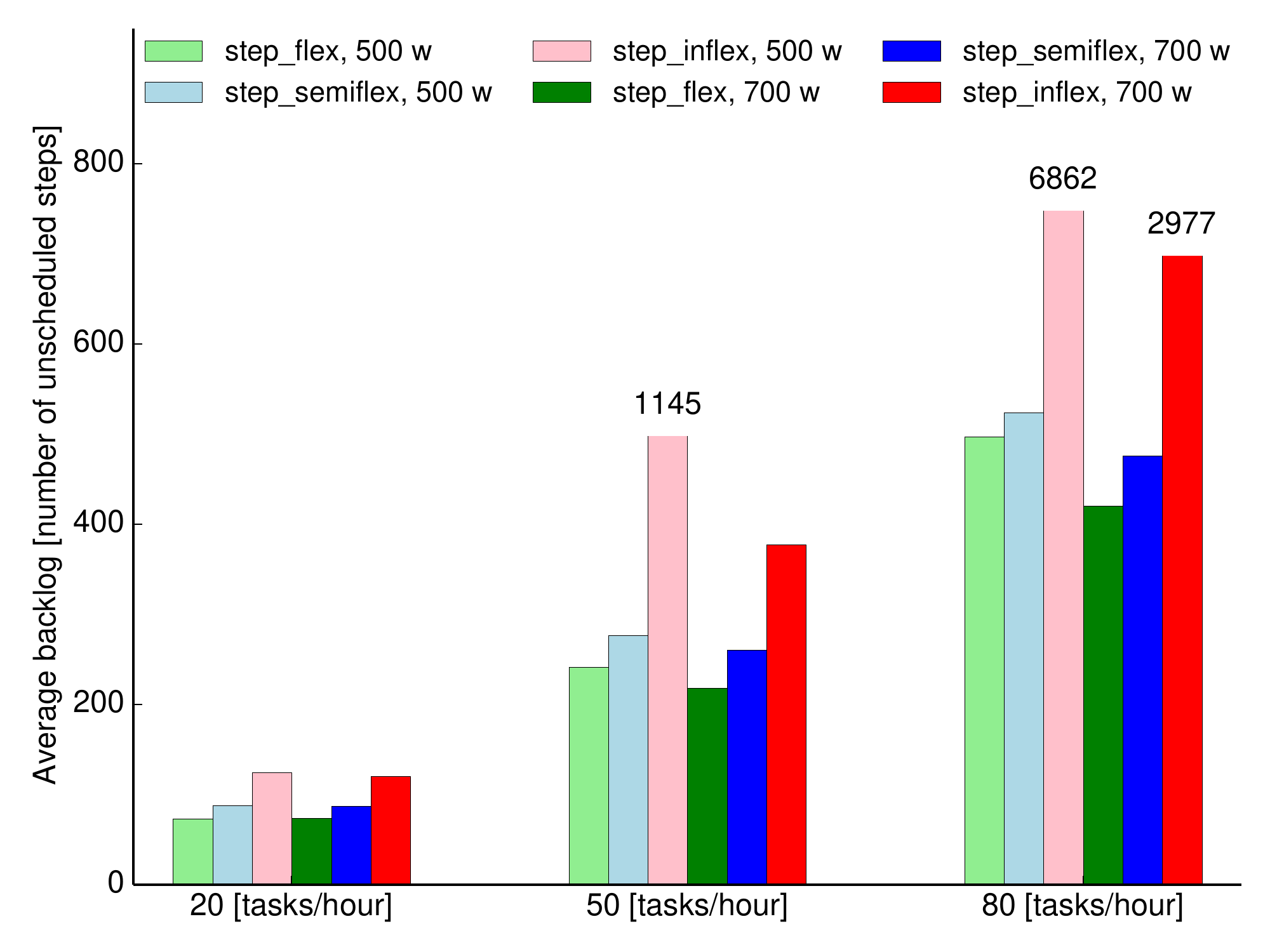}} $\;$
    \subfloat[]{\includegraphics[width=0.55\columnwidth]{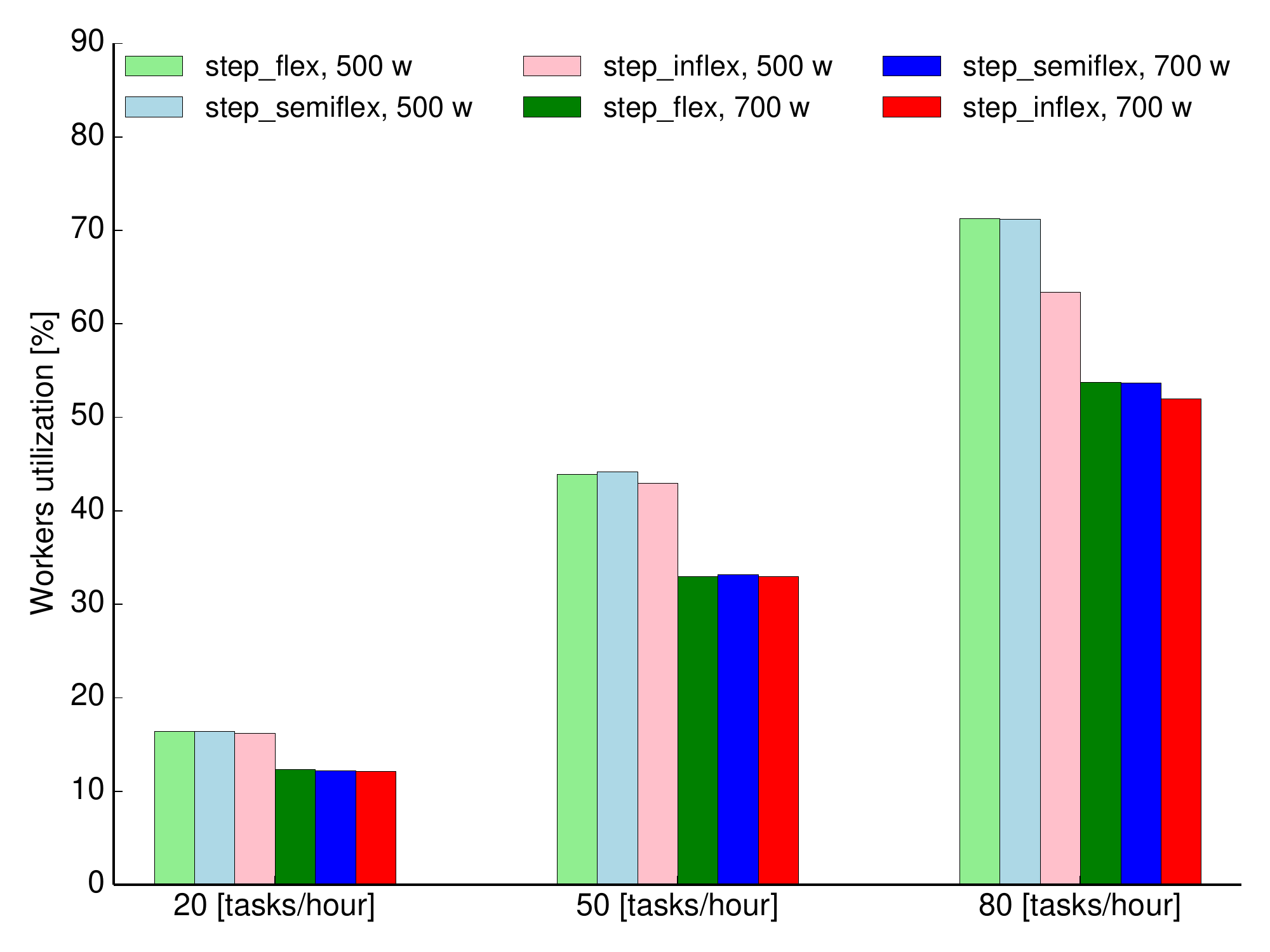}} $\;$
    \caption{\small Performance of our algorithms on synthetic data with \emph{long} sub-steps ($600-6000$ sec), as a function of load. (a) Tasks turn-around time (TAT). (b) Average backlog (number of unscheduled steps in the system). (c) Workers utilization.}
    \label{fig:synth_load_bars_long_skills}
\end{figure*}


We also evaluate our algorithms on synthetic data, considering flexible agents and flexible
steps, and flexible agents and inflexible steps. Algorithm $\algo1$ is used for the first system and a simplified
version of the Restricted Greedy scheme, $\algo3$, where we prioritize steps with
higher skill requirements and allocate among them greedily is used for the second.
We also consider a scenario in between flexible and
inflexible steps, where each substep is allocated to a single agent,
but different substeps of a step can be allocated to different agents. For
this, we develop $\algo2$ where steps allocate themselves greedily
while ensuring a substep gets all service from an agent.
We expected $\algo1$ to outperform $\algo3$, but we found somewhat surprisingly  that $\algo1$ and 
$\algo2$ perform very similarly.

The first set of generated data has tasks with up to three steps in each and with strict 
ordering. Each step comprises one to three random substeps out of five 
possible types. Working time requirement for each substep is uniformly distributed 
between $60$ and $600$ sec. Each worker in the system has daily availability 
from $9$am to $5$pm, evenly distributed across four time zones: 
$-4,0,3,5.5$. A worker possesses a random set of skills that enables her to work 
on up to three (out of five) substep types. For each of our three algorithms we 
compare three metrics: TAT, backlog queue, and worker utilization. The experiment 
simulated a single run over a timespan of $40$ days.

Fig.~\ref{fig:synth_load_bars} shows algorithms $\algo1$ and 
$\algo2$ outperform $\algo3$ for both cases: $500$ workers in the system and 
$700$ workers. When the load on the system is $150$ tasks/hour and the number 
of workers is $500$, algorithm $\algo3$ is substantially worse 
since it becomes unstable for this load. Notice that $\algo1$ and $\algo2$ perform
very similarly, which can be explained by relatively short substep work time 
requirement (in which case splitting becomes a rare event). 
Also note that worker utilization of $\algo3$ is not much worse than of the other algorithms. This can be explained by the 
long backlog queue of $\algo3$. Though it is harder for $\algo3$ to find a 
worker capable of working on the whole step, when the backlog becomes 
large, the probability that a given worker will be assigned to some whole 
step grows.

The last set of results uses the same synthetic data as before, but the working time requirement for each substep is 
now uniformly distributed between $600$ and $6000$ sec.
Fig.~\ref{fig:synth_load_bars_long_skills} shows a slight advantage of $\algo1$ over $\algo2$. Due 
to the longer working time requirements per substep, cases in which a substep may be split to improve allocation are more 
probable. In this scenario, the disadvantage of $\algo3$ is more obvious: 
for a load of $50$ tasks/hour and $1200$ workers, its TAT and backlog are very large and unstable.

To summarize, our approach substantially outperforms Samasource's current allocation 
scheme. While $\algo1$ achieves  
best performance in terms of TAT and backlog, $\algo2$ may be a good 
alternative. Its performance is almost the same but does not require 
splitting substeps among different workers, and is 
computationally lighter.

\section{Conclusion}
\label{sec:conc}

Inspired by skilled crowdsourcing systems, we have developed new algorithms for allocating tasks to agents while handling novel system properties
such as vector-valued service requirements, precedence and
flexibility constraints, random and time-varying resource availability, 
large system size, need for simple decentralized 
schemes requiring minimal actions from the platform provider, and the freedom of customers to choose agents without compromising
system performance.  We have provided capacity regions, asymptotic performance guarantees for decentralized algorithms, and demonstration
of efficacy in practical regimes, via large-scale data from a non-profit 
crowdsourcing company.

\bibliographystyle{IEEEtran}
\bibliography{abrv,conf_abrv,crowd}

\newpage

\begin{appendices}
\section{Proofs}
\label{sec:proofs}
In this section we present proofs of the main results.

\subsection{Proof of Theorem \ref{thm:capacity}}
Here we only prove that any $\vlambda$ outside the closure of
$\mc{C}$ cannot be stabilized by any policy. To prove achievability, 
it is sufficient to show that there exists a policy that stabilizes any
$\vlambda$ in the interior of $\mc{C}$. Hence, it is sufficient to prove
Thm.~\ref{thm:centralized}, which we do later.

This proof consists of the following steps. We first compare two systems,
the original system in question and another in which there is no precedence 
constraint among different steps of a job. We claim that on any sample path
under any policy for the first system, there exists a policy in the second
system so that the total number of incomplete jobs across all job types
in the second is a lower bound (sample path-wise) for that in the first. Then 
we show that the second system cannot be stabilized for a $\vlambda$ outside 
the closure of $\mc{C}$ and so the result follows for the first system.

Note that the claim regarding the number of incomplete jobs across all types
in the second system being a lower bound on the first
system follows by considering the \emph{same} policy for the second
system as for the first system. 

To proceed, consider the second system, for which we denote the number of
unallocated steps of type $(j,k)$ at epoch $t$ by $\hat{Q}_{j,k}(t)$.
Now consider the set $\mc{C}$. We claim that this set is coordinate convex, i.e.,
it is a convex set and if $\mbf{a}+\mbf{\epsilon} \in \mc{C}$
for some $\mbf{\epsilon}, \mbf{a} \in \R_+^N$ then  $\mbf{a} \in \mc{C}$.
To prove this claim, we first show that the set $\mbf{C}$ is coordinate convex.

First we prove that $\mbf{C}$ is a convex subset of $\R_+^N$.
If $\mbf{a}, \mbf{a}' \in \mbf{C}$, then there exist
$\left(\mbf{a}(\bu) \in C^{\text{cvx}}(\bu): \bu \in \Z_+^{M}\right)$ and 
$\left(\mbf{a}'(\bu)\in C^{\text{cvx}}(\bu): \bu \in \Z_+^{M}\right)$ such that
\[
\sum_{\bu} \Gamma(\bu) \mbf{a}(\bu) = \vlambda\mbox{,}\quad
\sum_{\bu} \Gamma(\bu) \mbf{a}'(\bu) = \vlambda'\mbox{.}
\]

Thus for any $\gamma \in [0,1]$, 
\begin{align*}
\gamma \mbf{a} + (1-\gamma) \mbf{a}' = \sum_{\bu} \Gamma(\bu) (\gamma \mbf{a}(\bu) + (1-\gamma) \mbf{a}'(\bu).
\end{align*}
Note that $C^{\text{cvx}}(\bu)$ is convex since it is the convex hull of $C(\bu)$; hence $\gamma \mbf{a}(\bu) + (1-\gamma) \mbf{a}'(\bu) \in \C(bu)$, which in turn implies $\gamma \mbf{a} + (1-\gamma) \mbf{a}' \in \mbf{C}$. 

For coordinate convexity note that any $\mbf{a}$ is a $\Gamma(\bu)$ combination of some 
$\{\mbf{a}(\bu) \in C^{\text{cvx}}(\bu)\}$ and any $\mbf{a}(\bu)$ is some convex combination of
 elements of $C(\bu)$. Also, from the allocation constraints it is apparent 
that if $\mbf{a} \in C(\bu)$ then also $\mbf{a}' \in C(\bu)$ if $\mbf{a}' \le \mbf{a}$. 
These two imply that for any $\mbf{a}\in \bar{\C}$, if there exists
an $\mbf{a}' \le \mbf{a}$ (component-wise) and $\mbf{a}'\ge \mbf{0}$, 
then $\mbf{a}' \in \bar{\C}$. Hence, $\mbf{C}$ is coordinate convex.

Note that $\mc{C}=\{\mbf{a}: \mbf{a}^E \in \mbf{C}\}$. Note that if $\mbf{a} \le \mbf{a}'$
(coordinate-wise) then the same is true for $\mbf{a}^E$ and $\mbf{a}^{\prime E}$. Also,
if $\mbf{a}''=\gamma \mbf{a} + (1-\gamma) \mbf{a}'$ for any $\gamma \in [0,1]$, then
$\mbf{a}^{\prime\prime E}=\gamma \mbf{a}^E + (1-\gamma) \mbf{a}^{\prime E}$. This proves that
$\mc{C}$ is coordinate convex subset of $\R_+^N$.

As $\mc{C}$ is coordinate convex, for any $\vlambda$ outside the closure of $\mc{C}$ 
there exists $h \in \R_+^N$ s.t. $h^T\vlambda > \sup_{\mbf{x} \in \mc{C}} h^T \mbf{x}$.

Consider the following. Let $\mbf{\hat{Q}}=(\hat{Q}_{j,k})$, 
$\mbf{A}=(A_{j,k}(t))$ and $\mbf{\hat{D}}(t)=(D_{j,k}(t))$. Note that
as jobs in this system do not have precedence constraints,
for all $k$, $A_{j,k}=A_j(t)$.

\begin{align}
\EX\left[h^T \mbf{\hat{Q}}(t+1)\right] 
& = \EX\left[h^T \left(\mbf{\hat{Q}}(t)+\mbf{A}(t)-\mbf{\hat{D}}(t)\right)\right] \nonumber \\
& = \EX\left[h^T \left|\mbf{\hat{Q}}(t)+\mbf{A}(t)-\mbf{\Delta}(t)\right|^{+}\right] \nonumber 
\end{align}
where $\mbf{\Delta}(t)$ is the number of possible departures under the scheme if there were infinite
number of steps of each type, and $|\cdot|^{+}$ is shorthand for $\max(\cdot,0)$. 
As $|x|^{+}$ is a convex function of $x$, 
$h^T \left|\mbf{\hat{Q}}(t)+\mbf{A}(t)-\mbf{\Delta}(t)\right|^{+}$ is a convex function of
$\mbf{Q}(t), \mbf{A}(t)$, and $\mbf{\Delta}(t)$. Thus by Jensen's inequality:
\begin{align*}
&\EX\left[h^T \left|\mbf{\hat{Q}}(t)+\mbf{A}(t)-\mbf{\Delta}(t)\right|^{+}\right] \nonumber \\
&\quad \ge h^T \left|\EX\left[\mbf{\hat{Q}}(t)\right]+\EX\left[\mbf{A}(t)\right]-\EX\left[\mbf{\Delta}(t)\right]\right|^{+} \nonumber \\
& \quad \ge h^T\EX\left[\mbf{\hat{Q}}(t)\right] +  h^T \EX\left[\mbf{A}(t)\right] - 
h^T \EX\left[\mbf{\Delta}(t)\right] \mbox{.}\nonumber
\end{align*}

Note that $h^T \EX\left[\mbf{\Delta}(t)\right] \le \sup_{\mbf{x} \in \mc{C}} h^T \mbf{x}$ and
$h^T\EX\left[\mbf{A}(t)\right]=h^T \vlambda$, hence,
\begin{equation}
\EX\left[h^T \mbf{\hat{Q}}(t+1)\right] \ge h^T\EX\left[\mbf{\hat{Q}}(t)\right] + \epsilon, \ \epsilon > 0\mbox{.}
\end{equation}

So, we have $\EX[h^T \mbf{\hat{Q}}]$ to be unbounded (i.e., for any constant $B$, there
exists a $t$ s.t. $\EX[h^T \mbf{\hat{Q}}(t)]>B$) under any policy. As $h \ge 0$, we have
that $\mbf{\hat{Q}}^T \mbf{1}$ to be unbounded and hence, the system is not stable under any policy.

\subsection{Proof of Theorem \ref{thm:centralized}}
Before proceeding, we state small lemma that will be useful.
\begin{lemma}
\label{lem:xyz1}
For any $x, y, z \ge 0$, $(|x-y|^+ + z)^2 \le  x^2 + y^2 + z^2 + 2 x (z-y)$.
\end{lemma}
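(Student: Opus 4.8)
\textbf{Proof proposal for Lemma~\ref{lem:xyz1}.}
The plan is to dispose of the $|\cdot|^+$ by splitting into the two cases determined by the sign of $x-y$, and in each case reduce the claim to an obviously nonnegative quantity, using $x,y,z\ge 0$ only at the very end.

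First I would handle the case $x \ge y$, where $|x-y|^+ = x-y$. Expanding the left-hand side gives
\[
(x-y+z)^2 = x^2 + y^2 + z^2 + 2xz - 2xy - 2yz = x^2 + y^2 + z^2 + 2x(z-y) - 2yz .
\]
Since $y,z\ge 0$ we have $-2yz \le 0$, so the left-hand side is at most $x^2+y^2+z^2+2x(z-y)$, which is exactly the right-hand side.

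Next I would treat the case $x < y$, where $|x-y|^+ = 0$, so the left-hand side is just $z^2$. The claimed inequality becomes $0 \le x^2 + y^2 + 2xz - 2xy = (x-y)^2 + 2xz$, and this holds because $(x-y)^2 \ge 0$ and $2xz \ge 0$ (here using $x,z\ge 0$). Combining the two cases proves the lemma.

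There is no real obstacle here: the statement is elementary, and the only point requiring any care is correctly resolving the $\max(\cdot,0)$ through the two cases and making sure the sign conditions $x,y,z\ge 0$ are invoked exactly where the discarded cross terms ($-2yz$ in the first case, $2xz$ in the second) need to be controlled. One could alternatively give a one-line argument by noting $|x-y|^+ \le \max(x-y,0)$ and bounding uniformly, but the case split is the cleanest and most transparent route.
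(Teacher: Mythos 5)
Your proof is correct. It differs from the paper's in mechanics, though not in difficulty: you split into the cases $x\ge y$ and $x<y$ and verify the inequality in each, discarding $-2yz$ in the first case and $(x-y)^2+2xz$ in the second. The paper instead avoids any case analysis by expanding $(|x-y|^+ + z)^2 = (|x-y|^+)^2 + z^2 + 2z|x-y|^+$ and then applying two uniform bounds: $2z|x-y|^+ \le 2xz$ (since $|x-y|^+ \le x$ when $x,y\ge 0$) and $(|x-y|^+)^2 \le (x-y)^2$; expanding $(x-y)^2 + z^2 + 2xz$ then gives exactly the right-hand side. This is essentially the ``one-line uniform bound'' you allude to at the end of your write-up. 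Neither route has an advantage beyond taste --- yours makes the slack in each case explicit, the paper's is marginally shorter --- and both invoke the nonnegativity hypotheses only where the discarded terms need a sign.
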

\begin{IEEEproof}
\begin{align*}
(|x-y|^+ + z)^2 &= (|x-y|^+)^2 + z^2 + 2 z |x-y|^+ \\ \nonumber 
&\le (|x-y|^+)^2 + z^2 + 2 x z \\ \nonumber 
&\le (x-y)^2 + z^2 + 2 x z, \\ \nonumber 
&= x^2 + y^2 + z^2 + 2 x (z-y)\mbox{,}
\end{align*}
where the last inequality follows because $(\max(0,a))^2 \le a^2$.
\end{IEEEproof}
Now for the proof of Thm.~\ref{thm:centralized}

The process $\{Q_{j,k}(t)\}$ is a discrete-time Markov chain on $\Z_+^{\sum_j K_j}$
under the centralized scheme. This is because arrival and availability processes are
i.i.d.\ and the centralized allocation at epoch $t$ does not depend on process values before 
$t$. We show that for this chain, all closed classes are positive recurrent and that 
the chain enters one of the closed classes almost surely. 
Note that this implies that starting with any initial distribution, the Markov chain 
reaches a stationary distribution (which may depend on the initial condition).  This is in the
sense that there exists a $d \in \{1, 2, \dots \}$ (as there may be a closed
class which is not aperiodic), and a distribution $\pi$ on $\Z_+^{\sum_j K_j}$
such that $\mbf{Q}(td) \to \pi$ in distribution. 

To show stability we need $\lim \sup_{t \to \infty} \EX[Q_{j,k}(t)] < 
\infty$, for all $(j,k)$. Towards this, note it is sufficient to show
$\EX_{\pi} [\sum_{j,k} Q_{j,k}]$ is finite, because this implies 
$\lim_{t\to \infty} \EX [\sum_{j,k} Q_{j,k}(td)]$ is finite. Note that for any
$1<\tau<d$:
\[
\sum_{j,k} Q_{j,k}(td+\tau) \le  \sum_{j,k} [Q_{j,k}(t)+\sum_{t'=1}^d A_j(t')] \mbox{.}
\]
Since arrivals have finite expectation, $\lim \sup_{t \to \infty} \EX[Q_{j,k}(t)] < 
\infty$, for all $(j,k)$.

Now, it is sufficient to prove that
starting with any initial distribution, there exists a $d \in \{1, 2, \cdots\}$ 
such that $\mbf{Q}(td) \to \pi$ in distribution and $\EX_{\pi} [\sum_{j,k} Q_{j,k}]$ 
is finite. To prove the convergence in distribution we use a variation 
of the Foster-Lyapunov theorem presented in \cite{TassiulasE1992}.


When $T_j$ is a directed rooted tree, we have to consider a 
Lyapunov function:
\[
L(\mbf{Q})=\sum_{j} \sum_{k} l_{j,k} Q^2_{j,k}\mbox{,}
\]
where $l_{j,k}$ is the number of leaves in the subtree of $T_j$ rooted at $k$.

Before proceeding with the proof of this case, we prove a reordering lemma for the Lyapunov function.
\begin{lemma}
\label{lem:reordering}
For any allocation $\{S_{j,k}\}$,
\begin{align}
&\sum_j\Bigg(l_{j,1} Q_{j,1} (A_{j,1}-S_{j,1}) + \sum_{k>1}l_{j,k} \left(Q_{j,k}(t)S^*_{j,p_j(k)}(t)-Q_{j,k}(t) S_{j,k}(t)\right)\Bigg) \nonumber \\
&\quad = \sum_{j} \sum_{k=1}^{K_j} 
\sum_{r \in c_j(k)} l_{j,r}(Q_{j,k}(t)-Q_{j,r}(t))(A_{j,1} - S_{j,k})\mbox{.} \label{eq:centralizedT4} 
\end{align}
\end{lemma}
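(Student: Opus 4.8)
Identity \eqref{eq:centralizedT4} is a purely algebraic statement relating a sum indexed by the \emph{nodes} of each precedence tree $T_j$ (left side) to a sum indexed by its \emph{directed edges} (right side), so the plan is to prove it by direct manipulation, passing between the per-node and per-edge viewpoints. Only two elementary facts about a rooted tree are needed: (i) for every non-leaf node $k$ the leaves of the subtree rooted at $k$ are partitioned among the subtrees of its children, hence $l_{j,k}=\sum_{r\in c_j(k)} l_{j,r}$; and (ii) the map $r\mapsto (p_j(r)\to r)$ is a bijection from the non-root nodes of $T_j$ onto its edges. Throughout I treat $S^{*}$ and $S$ as the same allocation, as in the application of the lemma, and I would first make explicit the boundary convention under which a leaf $k$ contributes weight $l_{j,k}Q_{j,k}$ to \eqref{eq:centAllo} — equivalently, that each leaf carries a virtual child with $l=1$ and $Q=0$ — since this is what lets the right side ``see'' the leaf allocations.

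Fixing $j$ (the sum over $j$ passes through unchanged), I would start from the right side and split each edge summand $l_{j,r}(Q_{j,k}-Q_{j,r})(A_{j,1}-S_{j,k})$ into a parent piece $l_{j,r}Q_{j,k}(A_{j,1}-S_{j,k})$ and a child piece $-l_{j,r}Q_{j,r}(A_{j,1}-S_{j,k})$. Summing the parent pieces over the children of a fixed $k$ and using (i) collapses $\sum_{r\in c_j(k)} l_{j,r}$ to $l_{j,k}$, giving $\sum_{k} l_{j,k}Q_{j,k}(A_{j,1}-S_{j,k})$ (over all real nodes $k$ under the virtual-child convention). Summing the child pieces and reindexing by the child via the bijection (ii) gives $-\sum_{r\neq 1} l_{j,r}Q_{j,r}\bigl(A_{j,1}-S_{j,p_j(r)}\bigr)$, in which the virtual children drop out because they carry $Q=0$.

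It then remains to recombine and recognize the left side. The $S$-dependent parts match directly: the $-l_{j,k}Q_{j,k}S_{j,k}$ from the parent pieces are the corresponding terms of the left side, and the $+l_{j,r}Q_{j,r}S_{j,p_j(r)}$ from the child pieces are, after renaming $r$ to $k$, the $+l_{j,k}Q_{j,k}S^{*}_{j,p_j(k)}$ terms. The $A_{j,1}$-dependent remainder is $A_{j,1}\bigl(\sum_{k}l_{j,k}Q_{j,k}-\sum_{r\neq 1}l_{j,r}Q_{j,r}\bigr)$; applying (i) once more to each non-leaf node rewrites $\sum_{k}l_{j,k}Q_{j,k}$ as $l_{j,1}Q_{j,1}$ plus a sum over non-root nodes that cancels the second sum, leaving exactly $l_{j,1}Q_{j,1}A_{j,1}$, the first summand on the left side. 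Reassembling over $j$ completes the identity.

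The main obstacle is this last bookkeeping: keeping careful track of which index set each partial sum ranges over — non-leaf versus leaf, root versus non-root — and handling the boundary contributions (the external arrival $A_{j,1}$ at the root, the allocations at the leaves) consistently; the virtual-child convention is precisely what makes all of this uniform. If the direct reindexing becomes unwieldy, an alternative I would fall back on is induction on $|T_j|$ by peeling off a leaf $r$ with parent $k$, treating separately the case where $r$ is the unique child of $k$ (so $k$ becomes a leaf and no $l$-value changes) and the case where $k$ retains other children; in each case one checks that the two sides change by the same amount, with the base case $|T_j|=1$ immediate.
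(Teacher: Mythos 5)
Your proof is correct and follows essentially the same route as the paper's: both are direct algebraic reindexings between node-indexed and edge-indexed sums driven by the identity $l_{j,k}=\sum_{r\in c_j(k)}l_{j,r}$ together with the parent--child bijection, with your one-line cancellation for the $A_{j,1}$ term playing the role of the paper's recursive telescoping down the tree. One genuine merit of your write-up is that you make the leaf boundary convention explicit (each leaf carrying a virtual child with $l=1$ and $Q=0$); without that convention the stated identity fails by exactly the leaf terms $\sum_{\ell\ \text{leaf}} Q_{j,\ell}(A_{j,1}-S_{j,\ell})$, a point the paper's proof leaves implicit.
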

\begin{IEEEproof}
First we claim that for any $j$,
\begin{align}
\sum_{k>1}l_{j,k} \left(Q_{j,k}(t)S^*_{j,p_j(k)}(t)-Q_{j,k}(t) S_{j,k}(t)\right) = \sum_{k=2}^{K_j} \sum_{r \in c_j(k)} l_{j,r}(Q_{j,k}(t)-Q_{j,r}(t)) S_{j,k}
\end{align}

This can be seen by comparing coefficients of $Q_{j,k}, k>1$ on both sides of
the expression. Note that $\sum_{r \in c_j(k)} l_{j,r}=l_{j,k}$. Also, note 
that in the sum in the right side, $Q_{j,k}$ appears twice, once in the sum
$-\sum_{r \in c_j(k)} l_{j,r}(Q_{j,k}(t)-Q_{j,r}(t)) S_{j,k}$ where the 
coefficient is $-S_{j,k}$ and again in the sum $-\sum_{r \in c_j(p_j(k))} l_{j,r}(Q_{j,p_j(k)}(t)-Q_{j,r}(t)) S_{j,p_j(k)}$
where the coefficient is $S_{j,p_j(k)}$.

This implies that for any $j$,
\begin{align}
& -l_{j,1} Q_{j,1} S_{j,1}  + \sum_{k>1}l_{j,k} \left(Q_{j,k}(t)S^*_{j,p_j(k)}(t)-Q_{j,k}(t) S_{j,k}(t)\right)\ \nonumber \\
&\quad = - \sum_{k=1}^{K_j} \sum_{r \in c_j(k)} l_{j,r}(Q_{j,k}(t)-Q_{j,r}(t)) S_{j,k} \mbox{.} \label{eq:reordering1} 
\end{align}

Note that since $l_{j,k}= \sum_{r \in c_j(k)} l_{j,r}$, 
\[
l_{j,1} Q_{j,1} = \sum_{r \in c_j(1)} l_{j,r}(Q_{j,1}-Q_{j,r}) + \sum_{r \in c_j(1)} l_{j,r} Q_{j,r}\mbox{.}
\]
Again applying the same restructuring of the terms for the subtrees rooted at $r$, we
eventually obtain:
\[
l_{j,1} Q_{j,1} = \sum_{r \in c_j(k)} l_{j,r}(Q_{j,k}-Q_{j,r})\mbox{.}
\]
The result follows by combining this (multiplied by $A_{j,1}$) with \eqref{eq:reordering1}.
\end{IEEEproof}

Now for the main proof.
\begin{align*}
\EX[L(\mbf{Q}(t+1))|\mbf{Q}(t)] = \EX\left[\sum_{j,k}(Q_{j,k}(t)-D_{j,k}(t) + A_{j,k}(t))^2|\mbf{Q}(t)\right]\mbox{.}
\end{align*}

Let $p_j(k)$ denotes the parent of $k$ in $T_j$ and $c_j(k)$ denote the set of
children of $k$ in $T_j$.
For all $t$ and $k=1$, $A_{j,1}(t)=A_j(t)$. For $k>1$, $A_{j,k}(t)=D_{j,p_j(k)}(t)$.
Note that $S^*_{j,k} \ge D_{j,k}$ for all $t$. So,
\begin{align*}
&\EX\left[\sum_{j,k}l_{j,k}(Q_{j,k}(t)-D_{j,k}(t) + A_{j,k}(t))^2|\mbf{Q}(t)\right] \nonumber \\
&= \EX\left[\sum_{j,k}l_{j,k}
(|Q_{j,k}(t)-S^*_{j,k}(t)|^+ + A_{j,k}(t))^2|\mbf{Q}(t)\right] \nonumber \\
&\le \EX\left[\sum_{j}\left(l_{j,1} (|Q_{j,1}(t)-S^*_{j,1}(t)|^+ + A_{j,1}(t))^2  + \sum_{k>1}l_{j,k}(|Q_{j,k}(t)-S^*_{j,k}(t)|^+ + S^*_{j,p_j(k)}(t))^2\right)|\mbf{Q}(t)\right]\mbox{.}
\end{align*}

By Lem.~\ref{lem:xyz1},
\begin{align}
&\EX[L(\mbf{Q}(t+1))|\mbf{Q}(t)] \nonumber \\
&=\EX\bigg[\sum_{j}l_{j,1}\left(A^2_{j,1}(t) + (S^*_{j,1}(t))^2 + Q^2_{j,1} + 2 Q_{j,1}(A_{j,1} - S^*_{j,1}) \right. \nonumber\\
&\left. \qquad + \sum_{k>1}l_{j,k}((S^*_{j,k}(t))^2 + (S^*_{j,p_j(k)}(t))^2  + Q^2_{j,k} + 2 Q_{j,k}(S^*_{j,p_j(k)} - S^*_{j,k})) \right)|\mbf{Q}(t)\bigg]  \nonumber \\
&\le C_1 + 2 \EX\left[\sum_{j,k} l_{j,1} (S^*_{j,k}(t))^2|\mbf{Q}(t)\right] \nonumber \\ 
&\quad  + 2 \EX\left[\sum_j\left(l_{j,1} Q_{j,1}(A_{j,1}-S^*_{j,1})  +\sum_{k>1} l_{j,k} Q_{j,k}(S^*_{j,p_j(k)} - S^*_{j,k})\right)|\mbf{Q}(t)\right] \label{eq:cetralizedT1} \\
&\le  C_2 + 2 \EX\left[\sum_j\left(l_{j,1} Q_{j,1}(A_{j,1}-S^*_{j,1})  +\sum_{k>1} l_{j,k} Q_{j,k}(S^*_{j,p_j(k)} - S^*_{j,k})\right)|\mbf{Q}(t)\right] \label{eq:centralizedT2}
\end{align}
Eq.~\eqref{eq:cetralizedT1} follows because arrival processes have bounded second moments and are
i.i.d.\ and the fact $(S^*_{j,k}(t))^2\ge 0$ (so, over-counting them gives an upper bound).
Eq.~\eqref{eq:centralizedT2} is due to the following (where $K=\max_j K_j$):
\begin{align}
\EX\left[\sum_{j,k}l_{j,k} (S^*_{j,k}(t))^2|\mbf{Q}(t)\right] 
&\le K \EX\left[\left(\sum_{j,k}S^*_{j,k}(t)\right)^2|\mbf{Q}(t)\right] \nonumber \\
&\le K \frac{1}{\max_{j,k} (\sum_s r_{j,k,s})^2} \EX\left[\left(\sum_{j,k,s}S^*_{j,k}(t)r_{j,k,s}\right)^2|\mbf{Q}(t)\right] \nonumber \\
&\le K \frac{1}{\max_{j,k} (\sum_s r_{j,k,s})^2} \EX\left[\left(\sum_{m,s}U_{m}(t)h_{m,s}\right)^2|\mbf{Q}(t)\right] \label{eq:centralizedT3} \\
&\le K \frac{\max_{m} (\sum_s r_{m,s})^2}{\max_{j,k} (\sum_s r_{j,k,s})^2} \EX\left[\sum_m U_m^2\right] \nonumber \\
&< \infty \mbox{.} \nonumber
\end{align}
Eq.~\eqref{eq:centralizedT3} comes from the task allocation constraint and the last step
follows as availability processes have bounded second moment.

Consider the last term of \eqref{eq:centralizedT2}, as $C_2$ plus this is the upper bound
for Lyapunov drift $\EX[L(\mbf{Q}(t+1))|\mbf{Q}(t)] - L(\mbf{Q}(t))$. Then by Lem.~\ref{lem:reordering} 
and the fact that $\{A_j(t)\}$ are i.i.d.,
\begin{align*}
& \EX\left[\sum_j\left(l_{j,1} Q_{j,1} (A_{j,1}-S^*_{j,1})  + \sum_{k}l_{j,k} \left(Q_{j,k}(t)S^*_{j,p_j(k)}(t)-Q_{j,k}(t) S^*_{j,k}(t)\right)\right)|\mbf{Q}(t)\right] \nonumber \\
&= \sum_{j} \sum_{k=1}^{K_j} \sum_{r \in c_j(k)} l_{j,r}(Q_{j,k}(t)-Q_{j,r}(t)) \lambda_j  - \EX\left[\sum_{j} \sum_{k=1}^{K_j} \sum_{r \in c_j(k)} l_{j,r}(Q_{j,k}(t)-Q_{j,r}(t)) S^*_{j,k}|\mbf{Q}(t)\right]\mbox{.} \nonumber
\end{align*}

Note that for any $\mbf{Q}(t)$ and $\mbf{U}(t)$:
\begin{align*}
\sum_{j} \sum_{k=1}^{K_j} \sum_{r \in c_j(k)} l_{j,r}(Q_{j,k}(t)-Q_{j,r}(t)) S^*_{j,k} \ge \max_{\mbf{a}\in C(\mbf{U}(t))}\sum_{j} \sum_{k=1}^{K_j} \sum_{r \in c_j(k)} l_{j,r}(Q_{j,k}(t)-Q_{j,r}(t))a_{j,k} \mbox{.}
\end{align*}

Note that in the optimal allocation $\{S^*_{j,k}\}$,  $S^*_{j,k} \ge 0$ only if
$\sum_{r \in c_j(k)} l_{j,r}(Q_{j,k}(t)-Q_{j,r}(t)))\ge 0$ (otherwise, 
just setting them to $0$ gives a better allocation). So,
\begin{align*}
\sum_{j} \sum_{k=1}^{K_j} \sum_{r \in c_j(k)} l_{j,r}(Q_{j,k}(t)-Q_{j,r}(t))S^*_{j,k} \ge \max_{\mbf{a}\in C(\mbf{U}(t))}\sum_{j} \sum_{k=1}^{K_j} |\sum_{r \in c_j(k)} l_{j,r} (Q_{j,k}-Q_{j,r})|^+ a_{j,k}\mbox{.}
\end{align*}

Hence, 
\begin{align*}
& \EX\left[\sum_{j} \sum_{k=1}^{K_j} \sum_{r \in c_j(k)} l_{j,r} (Q_{j,k}(t)-Q_{j,r}(t))S^*_{j,k}|\mbf{Q}(t)\right] \nonumber \\
&= \EX\left[\sum_{j} \sum_{k=1}^{K_j} |\sum_{r \in c_j(k)} l_{j,r} Q_{j,k}-Q_{j,r}|^+ S^*_{j,k}|\mbf{Q}(t)\right] \nonumber \\ 
&\ge \sup_{\mbf{a} \in \mbf{C}} \EX\left[\sum_{j} \sum_{k=1}^{K_j} |\sum_{r \in c_j(k)} l_{j,r} (Q_{j,k}-Q_{j,r})|^+ a_{j,k}|\mbf{Q}(t)\right] \nonumber \\
&\ge\sup_{\begin{subarray}{c}\mbf{a} \in \mbf{C}: \\
 a_{j,k}=a_j 1 \le k \le K_j\end{subarray}} 
\EX\left[\sum_{j} \sum_{k=1}^{K_j} 
|\sum_{r \in c_j(k)} l_{j,r}
Q_{j,k}-Q_{j,r}|^+ a_{j,k}|\mbf{Q}(t)\right] \nonumber \\
&= \sup_{\mbf{a} \in \mc{C}}\EX\left[\sum_{j} \sum_{k=1}^{K_j} 
|\sum_{r \in c_j(k)} l_{j,r}Q_{j,k}-Q_{j,r}|^+ a_{j}|\mbf{Q}(t)\right] \nonumber \\
&\ge \EX\left[\sum_{j} \sum_{k=1}^{K_j} 
|\sum_{r \in c_j(k)} l_{j,r}Q_{j,k}-Q_{j,r}|^+ \lambda_{j}|\mbf{Q}(t)\right]  + \epsilon \sum_{j}\sum_{k=1}^{K_j} |\sum_{r \in c_j(k)} l_{j,r} (Q_{j,k}-Q_{j,r})|^+, \nonumber
\end{align*}
because $\vlambda+\epsilon \in \mc{C}$.

As, 
\begin{align*}
&\sum_{j} \sum_{k=1}^{K_j} |\sum_{r \in c_j(k)} l_{j,r} Q_{j,k}-Q_{j,r}|^+\lambda_{j} \ge \sum_{j} \sum_{k=1}^{K_j} \sum_{r \in c_j(k)} l_{j,r}(Q_{j,k}(t)-Q_{j,r}(t))\lambda_{j}
\end{align*}
we have
\begin{align*}
\EX[L(\mbf{Q}(t+1)) - L(\mbf{Q}(t)) | \mbf{Q}(t)] \le C_2 - \epsilon \sum_{j} \sum_{k=1}^{K_j} |\sum_{r \in c_j(k)} l_{j,r}(Q_{j,k}-Q_{j,r})|^+\mbox{.} \nonumber
\end{align*}

Note that for $\{Q_{j,k}\}$ sufficiently large, 
$\sum_{j} \sum_{k=1}^{K_j} |\sum_{r \in c_j(k)} l_{j,r}(Q_{j,k}(t)-Q_{j,r}(t))|^+$ is also large.
This is because if $\{Q_{j,k}\}$ is larger than $B$ (in max-norm) then 
there exists a $j$ such that $\max_k Q_{j,k}>B$. Now consider
two cases, if $Q_{j,L_j}\ge\frac{B}{2}$ for some leaf node $L_j$
then we have the drift $\le C_2 - \epsilon \frac{B}{2}$
which can be made strictly negative by choosing $B$ appropriately.

If $Q_{j,L_j}<\frac{B}{2}$ for all leaf nodes, then there exists a $k_0$ such that $Q_{j,k_0}>B$.
Note the following for the set of nodes $T_{k_0}$ in the subtree rooted at $k_0$ and
$L_{k_0}$ being the leaves of $T_{k_0}$:
\begin{equation}
\sum_{k\in T_{k_0}} \sum_{r \in c_j(k)} l_{j,r}(Q_{j,k}-Q_{j,r}) = \sum_{l \in L_{k_0}} (Q_{j,k_0}-Q_{j,l})\mbox{.} \label{eq:centralizedT5}
\end{equation}

Hence, we have that $\sum_{k\in T_{k_0}} 
\sum_{r \in c_j(k)} l_{j,r}(Q_{j,k}-Q_{j,r}) \ge l_{j,k_0} \frac{B}{2} \ge \frac{B}{2}$.

Thus we show strictly negative drift for sufficiently large $\{Q_{j,k}\}$ and
the drift is bounded by $C_2 < \infty$. Hence, by the Foster-Lyapunov theorem
in \cite{TassiulasE1992} we have that for any initial distribution, there
exists a $d \in \{1, 2, \ldots\}$ 
such that $\mbf{Q}(td) \to \pi$ in distribution. 

To prove finite expectation we consider
the following.
\begin{align*}
&\EX[L(\mbf{Q}(t+1)) - L(\mbf{Q}(t)) | \mbf{Q}(t)] \le C_2 - \epsilon \sum_{j} \sum_{k=1}^{K_j} |\sum_{r \in c_j(k)} l_{j,r} (Q_{j,k}(t)-Q_{j,r}(t))|^+
\end{align*}
which implies that
\begin{align*}
\EX[L(\mbf{Q}(t+1)) - L(\mbf{Q}(t))] \le C_2 - \epsilon \EX\left[\sum_{j} \sum_{k=1}^{K_j} |\sum_{r \in c_j(k)} l_{j,r}(Q_{j,k}(t)-Q_{j,r}(t))|^+\right]\mbox{.} 
\end{align*}

Summing both sides from $0$ to $T$, we get:
\begin{align*}
\frac{1}{T} \sum_{t=1}^T \EX\left[\sum_{j} \sum_{k=1}^{K_j} |\sum_{r \in c_j(k)} l_{j,r} Q_{j,k}(t)-Q_{j,r}(t)|^+\right]\le \frac{1}{\epsilon} \left(C_2 - \frac{1}{T} \EX[L(\mbf{Q}(T+1))] + \EX[L(\mbf{Q}(0))]\right)\mbox{.} 
\end{align*}

As $\EX[L(\mbf{Q}(0))]$ finite, for any initial condition we have
\[
\frac{1}{T} \sum_{t=1}^T \EX\left[\sum_{j} \sum_{k=1}^{K_j} |\sum_{r \in c_j(k)} l_{j,r}(Q_{j,k}(t)-Q_{j,r}(t))|^+\right]<C_3\mbox{,}
\]
for all $T$.

As all terms are positive, for any $d\in\{1, 2, \dots\}$, 
\begin{align*}
\lim_{T\to \infty} \frac{d}{T} \sum_{t=1}^T \EX\left[\sum_{j} \sum_{k=1}^{K_j} 
|\sum_{r \in c_j(k)} l_{j,r}(Q_{j,k}(td)-Q_{j,r}(td))|^+
\right] < C_3\mbox{.}
\end{align*}

By the ergodicity of a Markov chain in a positive recurrent class this implies that 
\[
\EX_{\pi}\left[\sum_{j} \sum_{k=1}^{K_j} 
|\sum_{r \in c_j(k)} l_{j,r}(Q_{j,k}(td)-Q_{j,r}(td))|^+\right] < C_3\mbox{.}
\]

This proves that $\EX_{\pi}\left[Q_{j,L_j}\right]<C_3$ for any leaf node $l_j$.

By \eqref{eq:centralizedT5} we have that for any $k \in T_j$,
\begin{align*}
l_{j,k} Q_{j,k} 
&\quad= \sum_{l_j} Q_{j,l_j} + \sum_{k' \in T_k} \sum_{r \in c(k')} l_{j,r} (Q_{j,k'} - Q_{j,r}) \nonumber \\
&\quad\le\sum_{l_j} Q_{j,l_j} + \sum_{k' \in T_k} |\sum_{r \in c(k')} l_{j,r} (Q_{j,k'} - Q_{j,r})|^+ \mbox{.} 
\end{align*}
Hence, it follows that $\EX_{\pi}\left[Q_{j,k}\right] < \infty$.
This implies that $\EX_{\pi}\left[\sum_{j} \sum_{k=1}^{K_j} Q_{j,k}\right] < \infty$
and so the proof is complete.

\subsection{Proof of Theorem \ref{thm:LPrelax}}
In deriving \eqref{eq:centralizedT2} we did not use any property of the allocation
$\{S^*_{j,k}\}$ other than the fact that it has to satisfy the step allocation constraint.
Hence, this upper bound for Lyapunov drift is valid for any arbitrary feasible allocation
$\{S_{j,k}\}$.

Hence, under the LP-relaxation base allocation $\{{S}^R_{j,k}\}$ by Lem.~\ref{lem:reordering} we have:
\begin{align}
\EX[L(\mbf{Q}(t+1)) - L(\mbf{Q}(t)) | \mbf{Q}(t)] \le C_2 + 2 \EX\left[\sum_j\sum_{r\in c_j(k)} l_{j,r} \left(Q_{j,k}-Q_{j,r}\right) {S}^R_{j,k}|\mbf{Q}(t)\right]\mbox{.} 
\label{eq:centralizedRelax1}
\end{align}

Note that for the optimum of the problem in \eqref{eq:centAlgoLP}, $\{{S}^R_{j,k}(t)\}$,
the following is true.
\begin{align*}
&\sum_{j} \sum_{k=1}^{K_j} \sum_{r \in c_j(k)} l_{j,r}(Q_{j,k}(t)-Q_{j,r}(t)) S^*_{j,k} \nonumber \\
&\quad= \sum_{j} \sum_{k=1}^{K_j} |\sum_{r \in c_j(k)} l_{j,r}(Q_{j,k}(t)-Q_{j,r}(t))|^+ S^*_{j,k} \nonumber \\
&\quad\le \sum_{j} \sum_{k=1}^{K_j} |\sum_{r \in c_j(k)} l_{j,r}(Q_{j,k}(t)-Q_{j,r}(t))|^+ \hat{S}_{j,k}
\end{align*}
This is because \eqref{eq:centAlgoLP} solves a relaxed problem and the
optimal allocation has $S^*_{j,k}=\hat{S}_{j,k}=0$ for 
$\sum_{r \in c_j(k)} l_{j,r}(Q_{j,k}(t)-Q_{j,r}(t))$. As
$S^R_{j,k} = \lfloor \hat{S}_{j,k} \rfloor$ we have that
\begin{align*}
\sum_{j} \sum_{k=1}^{K_j} |\sum_{r \in c_j(k)} l_{j,r}(Q_{j,k}(t)-Q_{j,r}(t))|^+ S^*_{j,k} \le \sum_{j} \sum_{k=1}^{K_j} |\sum_{r \in c_j(k)} l_{j,r}(Q_{j,k}(t)-Q_{j,r}(t))|^+ ({S}^R_{j,k}+1)
\end{align*}

Hence, for any $\vlambda$ such that $\vlambda+ \mbf{1}(1+\epsilon) \in \mc{C}$ using the
same proof as above we can show that the system is stable.

\subsection{Proof of Theorem \ref{thm:prioGreedy}}

This proof has the following structure. As the total number of incomplete 
jobs is equal to the total number of unallocated steps, we first show that the 
total number of unallocated steps at depth $0$ (i.e., at the root of
each $T_j$) across all types have the desirable property. Then 
we show that this property propagates.

\noindent \emph{Proof for Depth-$0$ Steps:}

This part is same as the proof \cite{ChatterjeeVV2015_arXiv} of performance guarantee of GreedyJob algorithm in \cite{ChatterjeeVV2015}. We present it here for the sake of completeness.

Consider the different types of unallocated steps at depth $0$.
These are given by $\{Q_{j,1}(t): j \in [N]\}$. 

Consider the following processes: for each $s \in [S]$,
$Q^s_1(t)=\sum_{j:r_{j,1,s}>0} Q_{j,1} r_{j,1,s}$ which represent
the number of unserved hours of skills $s$ for all steps at
depth $0$. 

We now construct another process $\tilde{Q}_1$ such that it dominates
the process $\sum_s Q^s_1$. So, if we can upper bound $\tilde{Q}_1$,
then the same bound applies for $\sum_s Q^s_1$. Hence, in turn we get a
bound for $\{Q_{j,1}(t)\}$ (since $\min\{r_{j,k,s}>0\}=\Theta(1)$ by the
assumption that $\{r_{j,k,s}\}$ do not scale with the system size).

To construct a suitable $\tilde{Q}_1$, we make the following observation
about the dynamics of $Q^s_1$ and $\{Q_{j,1}\}$. 
At each time $t$,
$\sum_j A_{j,1}(t) r_{j,1,s}$ amount of $s$ skill-hour is brought to add
to $Q^s_1$. Also, this queue gets some service depending on the available
agent hours.

At time $t$, $\sum_m U_m(t) h_{m,s}$ $s$-skill hour of service is brought
by the agents. 

For a step to be allocated, all of its tasks must find an allocation.
Hence, for a step in type $j$-job to find an allocation it must get $r_{j,1,s}$
hours of service from each skill $s$. Thus at any time $t$ any skill $s$ queue
gets a service of at least
\[
\min_{s \in [S]} \sum_m U_m h_{m,s} - \bar{r}\mbox{,}
\]
where $\bar{r}=\max\{r_{j,k,s}\}$, due to the following. 
For each skill, $\sum_s U_m h_{m,s}$ hours are available. Note that a
step can be allocated if all its tasks find allocations, the converse of which is also
true. That is, if all tasks of a step find allocation, then the step can be 
allocated. As $\min_{s \in [S]} \sum_s U_m h_{m,s}$
hours of service are brought by the agents for each skill, at 
least $\min_{s \in [S]} \sum_s U_m h_{m,s} - \bar{r}$ of $s$-skill
hours are served (because a maximum of $\bar{r}$ can be wasted, as no
task is of size more than $\bar{r}$). Note that as depth $d$ steps
have priority in Priority Greedy algorithm over steps at depth $\ge d+1$, 
they do not have to share resource with higher-depth steps. 
So at depth $d$, $\min_{s \in [S]} \sum_s U_m h_{m,s}$
is available for service to steps at depth $\le d$.

Also, note that the amount of required service brought to the 
queue $Q^s_1$ at time $t$ is upper-bounded by
\[
\max_{s \in [S]} \sum_j A_{j,1}(t) r_{j,1,s}\mbox{.}
\]

Consider a process $\tilde{Q}^s_1$ with evolution:
\begin{align*}
\tilde{Q}^s_1(t+1) & = \max(\tilde{Q}^s_1(t) +  \max_{s \in [S]} \sum_j A_{j,1}(t) r_{j,1,s}  - \min_{s \in [S]} \sum_m U_m h_{m,s} + \bar{r}, 0)\mbox{.} 
\end{align*}


Note that given $\tilde{Q}^s_1(t_0)\ge Q^s_1(t_0)$ at some $t_0$, the same holds true
for all $t \ge t_0$. This is because for $x, a, b \ge 0$ and $x', a', b' \ge 0$, with
$x\ge x'$, $a\ge a'$ and $b\le b'$
\[
\max(x+a-b,0) \ge \max(x'+a'-b',0)\mbox{,}
\]
and so the monotonicity propagates over time.

To bound $\sum_s Q^s_1$, it is sufficient to bound $\sum_s \tilde{Q}^s_1(t)$.
Note that each $\tilde{Q}^s_1$ has exactly the same evolution, so let us
consider 
\[
\tilde{Q}_1:=S\tilde{Q}^1_1\mbox{,}
\] 
which bounds $\sum_s Q^s_1$.

From the evolution,
\begin{align*}
\tilde{Q}_1(t+1) &= \max(\tilde{Q}_1(t) +  S \max_{s \in [S]} \sum_j A_{j,1}(t) r_{j,1,s} - S \min_{s \in [S]} \sum_m U_m h_{m,s} + \bar{r}, 0)\mbox{,} 
\end{align*}
and we can write the Loynes' construction for this process which has the same distribution as
the following process (and for simplicity we use the same notation, as we are interested in the distribution):
\begin{align*}
\tilde{Q}^1_1(0) &= \max_{\tau\le 0}
\sum_{\tau\le t \le 0}(S \max_{s \in [S]} \sum_j A_{j,1}(t) r_{j,1,s} - S \min_{s \in [S]} \sum_m U_m h_{m,s} + \bar{r})\mbox{,}
\end{align*}
assuming that the process started at $-\infty$.

Let us define $X_s(t)$ and $Y_s(t)$ as follows:
$X_s(t):=\sum_j A_{j,1}(t) r_{j,1,s}$ and 
$Y_s(t):=\sum_m U_m h_{m,s}$.
Then, 
\[
\tilde{Q}^1_1(0) = \max_{\tau\le 0} \sum_{\tau\le t \le 0}S(\max_{s} X_s(t) - \min_s Y_s(t) + \bar{r})\mbox{.}
\]

Now, for any $\theta>0$:
\begin{align}
\PR(\sum_{j} Q_{j,1} > \bar{r} q) &\le \PR(\sum_s Q^s_1 > q) \nonumber \\
&\le \PR(\tilde{Q}_1(0) > q)  \nonumber \\
&= \PR( \theta \tilde{Q}_1(0) > \theta q) \label{eq:prioGreedy1} \\
&= \PR(\exp(\theta \tilde{Q}_1(0)) > \exp(\theta q)) \nonumber \\
&\le \EX[\exp(-\theta q)] \EX[\exp(\theta \tilde{Q}_1(0))]\mbox{.} \nonumber
\end{align}
Now,
\begin{align}
\EX[\exp(\theta \tilde{Q}_1(0))] &= \EX\left[\exp(\theta S \left(\max_{\tau\le 0}
\sum_{\tau\le t \le 0}(\max_{s} X_s(t) - \min_s Y_s(t) + \bar{r})\right))\right] \nonumber \\
&\le\sum_{\tau\le 0} \EX\left[\exp(\theta S 
\sum_{\tau\le t \le 0}(\max_{s}X_s(t) - \min_s Y_s(t) + \bar{r}))\right],\label{eq:prioGreedy2}
\end{align}
where inequality \eqref{eq:prioGreedy2} 
follows because for any random variables $\{Z_j\}$,
$\exp(\theta \Z_j)$ are positive random variables and the sum of positive values is
more than their maximum.

Next, we bound the term within the summation over $\tau\le 0$ in \eqref{eq:prioGreedy2}.
\begin{align}
\EX[\exp(\theta S
\sum_{\tau\le t \le 0}(\max_{s} X_s(t) - \min_s Y_s(t) + \bar{r}))] \le \prod_{\tau \le t \le 0} \EX[\exp(\theta (\max_{s}X_s(t) - \min_s Y_s(t) + \bar{r})))] \mbox{,}
\label{eq:prioGreedy3}
\end{align}
which follows because $X_s(t)$, $Y_s(t)$ are i.i.d.\ over time.

Next we bound the term within the product in \eqref{eq:prioGreedy3},
\begin{align}
\EX\left[e^{\theta S  \left(\max_s X_s(t) - \min_s Y_s(t) + \bar{r}\right)}\right] \le \sum_{s,s'} \EX \left[e^{\theta S \left(X_s(t) - Y_{s'}(t) + \bar{r}\right)}\right] \mbox{,} \label{eq:prioGreedy4x} 
\end{align}
where this follows for the same reason as \eqref{eq:prioGreedy2}.

The following lemma regarding an outer bound to the capacity region will be useful later.
\begin{lemma}
\label{lem:outerIF}
Let $\mc{C}_{I,F}^O=\{\vlambda: \sum_{j,k} \lambda_j  r_{j,k,s} \le \sum_m u_m h_{m,s}\}$. Then $\mc{C}_{I,F} \subset \mc{C}_{I,F}^O$
\end{lemma}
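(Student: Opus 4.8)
The plan is to derive the inequalities defining $\mc{C}_{I,F}^O$ as a necessary condition for stabilizability, by pushing the skill-hour constraint \eqref{eq:inflexibleCons} through the set-theoretic construction of the capacity region. Since $\mc{C}_{I,F}^O$ is closed (an intersection of half-spaces) and, by Theorem~\ref{thm:capacity} applied to the $(I,F)$ system, the capacity region $\mc{C}_{I,F}$ is the closure of the set $\{\mbf{a}:\mbf{a}^E\in\mbf{C}_{I,F}\}$ (the $(I,F)$ analog of $\mc{C}$), it is enough to show $\mbf{a}^E\in\mbf{C}_{I,F}\Rightarrow\mbf{a}\in\mc{C}_{I,F}^O$. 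So first I would unwind the definition of $\mbf{C}_{I,F}$: a point $\vlambda$ with $\vlambda^E\in\mbf{C}_{I,F}$ has the form $\vlambda^E=\sum_{\bu}\Gamma(\bu)\mbf{a}(\bu)$ with each $\mbf{a}(\bu)\in C^{\text{cvx}}_{I,F}(\bu)$, and the coordinate of $\vlambda^E$ indexed by any pair $(j,k)$ equals $\lambda_j$.

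The crux is that \eqref{eq:inflexibleCons} is a \emph{linear} inequality in the step-count vector $\mbf{a}$: every $\mbf{a}\in C_{I,F}(\bu)$ obeys $\sum_{j,k}a_{j,k}r_{j,k,s}\le\sum_m u_m h_{m,s}$ for all $s$ (the ``integral-step / all substeps together'' rule forces the $s$-components to coincide, collapsing the constraint to this form). Because the constraint is linear, the polyhedron it cuts out already contains the convex hull $C^{\text{cvx}}_{I,F}(\bu)$, so the bound survives passing to $C^{\text{cvx}}_{I,F}(\bu)$; and since $\sum_{\bu}\Gamma(\bu)u_m=\mu_m$, it survives the $\Gamma$-average as well. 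Interchanging the finite sum over $(j,k)$ with the sum over $\bu$ gives, for each $s$,
\[
\sum_{j,k}\lambda_j r_{j,k,s}=\sum_{\bu}\Gamma(\bu)\sum_{j,k}a_{j,k}(\bu)\,r_{j,k,s}\le\sum_{\bu}\Gamma(\bu)\sum_m u_m h_{m,s}=\sum_m\mu_m h_{m,s},
\]
which is exactly the defining inequality of $\mc{C}_{I,F}^O$. Taking closures then finishes it.

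I expect the step needing the most care --- though it is bookkeeping rather than genuine difficulty --- to be keeping straight the passage between the $\sum_jK_j$-dimensional objects ($\mbf{a}(\bu)$ and $\vlambda^E$) and the $N$-dimensional rate $\vlambda$ (crucially, that the $(j,k)$-coordinate of $\vlambda^E$ is $\lambda_j$ for every $k$, so $\sum_{j,k}(\vlambda^E)_{j,k}r_{j,k,s}$ is literally $\sum_{j,k}\lambda_j r_{j,k,s}$), together with the routine check that $C_{I,F}(\bu)$ really is the set cut out by \eqref{eq:inflexibleCons}. As an alternative I would note that one can bypass $\mbf{C}_{I,F}$ entirely and argue straight from stability: for any stabilizing policy, take expectations and time-average the per-epoch constraint $\sum_{j,k}D_{j,k}(t)r_{j,k,s}\le\sum_m U_m(t)h_{m,s}$; use $\EX[Q_{j,k}(T)]/T\to0$ (valid since $Q_{j,k}(t)\le Q_j(t)$ is stable) with the flow identities $A_{j,1}(t)=A_j(t)$ and $A_{j,k}(t+1)=D_{j,k-1}(t)$ to show $\frac{1}{T}\sum_{t=1}^{T}\EX[D_{j,k}(t)]\to\lambda_j$ for every step; and conclude $\sum_{j,k}\lambda_j r_{j,k,s}\le\sum_m\mu_m h_{m,s}$. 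On that route, propagating the throughput identity up the precedence tree is the only mildly delicate point.
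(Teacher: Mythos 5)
Your main argument is exactly the paper's proof: both unwind $\vlambda^E\in\mbf{C}_{I,F}$ into a $\Gamma$-weighted combination of convex combinations of points of $C_{I,F}(\bu)$, observe that the linear constraint \eqref{eq:inflexibleCons} holds for each such point and hence survives both averaging steps, and read off $\sum_{j,k}\lambda_j r_{j,k,s}\le\sum_m\mu_m h_{m,s}$. This is correct and essentially identical to the paper's proof (your alternative time-averaging route is not what the paper does, but you did not rely on it).
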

\begin{proof}
Consider any $\vlambda \in \mc{C}_{I,F}$. Then by definition of $\mc{C}_{I,F}$, 
$\vlambda^E \in \mbf{C}$ and
there exists $\mbf{c}(\bu) \in conv {C}_{I,F}(\bu)$ such that
\[
\vlambda^E \le \sum_{\bu} \Gamma(\bu) \mbf{c}(\bu)\mbox{.}
\]

Also, note that for each $\mbf{c}(\bu) \in conv {C}_{I,F}(\bu)$, there exists an
$\{\alpha_v(\bu)\ge 0, 1 \le v \le V_{\bu}: \sum_{v=1}^{V_{\bu}} \alpha_k=1\}$ and
$a_v(\bu) \in C_{I,F}(\bu), 1 \le v \le V_{\bu}$ such that
\[
\sum_{v=1}^{V_{\bu}}a_k(\bu) \alpha_v(\bu) = \mbf{c}(\bu)\mbox{.}
\]

Note that as $a_v(\bu) \in C_{I,F}(\bu)$ so by allocation constraint
\[
\sum_{j,k} a_{v,j,k}(\bu) r_{j,k,s} \le \sum_m u_m h_{m,s}\mbox{,}
\]
where $a_{v,j,k}$ are the number of $(j,k)$ steps that have been allocated under
$a_v(\bu)$ allocation. This in turn implies that:
\begin{align*}
& \sum_{j,k} a_{v,j,k}(\bu) \alpha_v(\bu) r_{j,k,s} \le \sum_m u_m h_{m,s} \nonumber \\
&\implies \sum_{j,k} \mbf{c}_{j,k}  r_{j,k,s} \le \sum_m u_m h_{m,s} \nonumber \\
&\implies \sum_{j,k} \lambda_j  r_{j,k,s} \le \sum_m u_m h_{m,s}\mbox{.}
\end{align*}

Hence, $\vlambda \in \mc{C}_{I,F}$ implies that $\vlambda \in \mc{C}^O_{I,F}$. 
\end{proof}


Let $\mc{A}_j(\theta)=\EX\left[e^{\theta A_j(t)}\right]$
and $\mc{U}_m(\theta)=\EX\left[e^{\theta U_m(t)}\right]$ for $j \in [N]$ and $m \in [M]$.
For $\theta \in \R$, then,
\begin{align}
\EX\left[e^{\theta (X_s(t)-Y_{s'}(t))}\right]  
&\quad= \EX\left[e^{\theta X_s(t)}\right] \EX\left[e^{-\theta Y_{s'}(t)}\right] \nonumber \\
&\quad=\EX\left[e^{\theta \sum_j A_j(t) r_{j,1,s} + \bar{r}}\right]   
\EX\left[e^{-\theta \sum_i U_i(t) h_{i,s}}\right]   \nonumber \\
&\quad= e^{\theta \bar{r}} \prod_j\EX\left[e^{\theta A_j(t) r_{j,1,s}}\right]  \prod_i\EX\left[e^{-\theta U_i(t) h_{i,s'}}\right] \label{eq:GA1} \\
&\quad= e^{\theta \bar{r}} \prod_j \mc{A}_j(\theta r_{j,1,s}) \prod_i \mc{U}_i(-\theta h_{i,s'}) \nonumber \\
&\quad= \exp \left(\theta \bar{r} + \sum_j \log \mc{A}_j(\theta r_{j,1,s}) + \sum_i \log \mc{U}_i(-\theta h_{i,s'}) \right)\mbox{.} \nonumber
\end{align}

Note that as $\vlambda \in \alpha \mc{C}$, by the definition of $\mc{C}^O_{I,F}$,
$\sum_j \lambda_j r_{j,1,s} <\alpha \sum_m \mu_m h_{m,s}$ and by assumption
$|\sum_m \mu_m h_{m,s} - \sum_m \mu_m h_{m,s'}|\le \mbox{subpoly}(N)$ which is used
in the following.

First consider the Gaussian-dominated case. Since the process variance is no more than the mean and
the moment generating function of the variance is upper-bounded by that of a zero-mean Gaussian:
\begin{align*}
\log \mc{A}_j(\theta r_{j,s}) &\le \lambda_j \theta r_{j,1,s} + \lambda_j \frac{\left(\theta r_{j,1,s}\right)^2}{2} \\
\log \mc{U}_i(-\theta h_{i,s}) &\le - \mu_j \theta h_{i,s} + \mu_j \frac{\left(\theta h_{i,s}\right)^2}{2}\mbox{.} 
\end{align*}

Note that for any two functions $k_1 x^2$ and $k_2 x$, $\lim_{x \to 0} k_2 x/k_1 x^2 = \infty$, and
hence for any $\epsilon\in(0,1)$ there exists $x^*>0$ such that for all $x<x^*$, $k_1 x^2/k_2 x <
\epsilon$. Hence for any $\epsilon \in (0,1)$,
there exist $\theta^*_{j,s}, \theta^*_{i,s}>0$, for all $i, j, s$ such that for all 
$\theta < \theta^*=\min_{i,j,s} (\theta^*_{j,s},\theta^*_{i,s})$, 
\begin{align}
\log \mc{A}_j(\theta r_{j,1,s}) &\le \lambda_j \theta^* r_{j,1,s} (1+\epsilon) \label{eq:GA2}\\
\log \mc{U}_i(-\theta h_{i,s}) &\le - \mu_i \theta h_{i,s} (1-\epsilon)\mbox{.} \label{eq:GA2A}
\end{align}

Note that since $N$, $S$, and $M$ are finite and $\theta^*_{j,s}, \theta^*_{i,s}>0$, 
for all $i, j, s, \theta^*>0$. Moreover, note that $\theta^*$ does not depend on $\vlambda, \bm{\mu}$ since the ratio of 
the linear and quadratic terms in the log moment generating functions are independent 
of $\vlambda$ and $\bm{\mu}$.

As $e^{\theta} - 1 = \sum_{k=1}^\infty \frac{\theta^k}{k!}$, for the Poisson-dominated case
we have:
\begin{align*}
\log \mc{A}_j(\theta r_{j,s}) &\le \lambda_j \sum_k \frac{\left(\theta r_{j,1,s}\right)^k}{k!} \\
\log \mc{U}_i(-\theta h_{i,s}) &\le \mu_j \sum_k \frac{\left(-\theta h_{i,s}\right)^k}{k!} \mbox{.}
\end{align*}

Again, by the same argument, we can have a $\theta^*$ for which (\ref{eq:GA2}) and (\ref{eq:GA2A})
are satisfied. As $|\sum_i \mu_i h_{i,s} - \sum_i \mu_i h_{i,s'}|=o(N^{\delta}),\forall \delta>0$,
and $\sum_i \mu_i h_{i,s}=\Omega(N^c), c>0$, for all $\theta<\theta^*$ we have:
\begin{align}
\EX\left[e^{\theta (X_s(t)-Y_{s'}(t))}\right] 
&\le \exp \left(\theta^* \bar{r} + \sum_j \lambda_j \theta^* r_{j,1,s} (1+\epsilon)  - \sum_i \mu_i \theta h_{i,s} (1-\epsilon) + \theta^* o(\mu_i \theta h_{i,s})\right) \nonumber \\
& \le \exp \left(\theta \left(\bar{r} - \sum_i \mu_i h_{i,s}(\alpha - 2 \epsilon)\right)\right) \mbox{.}
\label{eq:GA3} 
\end{align}
Note \eqref{eq:GA3} follows from the fact $\lambda \in (1-\alpha) \C^{O}$. As $\epsilon>0$ can be chosen
arbitrarily small, we can have $\alpha-2\epsilon>0$. Since $\sum_i \mu_i h_{i,s}>\sum_j (1-\alpha)\lambda_j r_{j,s} $ and $\sum_j \lambda_j r_{j,1,s}$ scales with $\lambda(N)$, for sufficiently large $\lambda_{\alpha}$
with $\lambda_j \ge \lambda_{\alpha}$ for all $j$, we have 
$\bar{r} - \sum_i \mu_i h_{i,s}(\alpha - \epsilon) \le - \gamma \sum_i \mu_i h_{i,s}(\alpha - \epsilon)$,
for some $\gamma>0$. Thus, we have for some $\theta>0$,
\begin{equation}
\EX\left[e^{\theta S (X_s(t)-Y_{s'}(t))}\right] \le \exp \left(-\theta S K(N) \right) \mbox{,}
\end{equation}
where $K(N)$ scales with $N$ no slower than 
$\sum_{s:r_{j,1,s}>0} \lambda_j(N) =\Omega(N^c)$, $c>0$.

Thus,
\[
\EX\left[e^{\theta S (\max_s X_s(t) - \min_s Y_s(t))}\right] \le S^2 \exp \left(-\theta S K(N) \right) \mbox{.}
\]
Hence, from \eqref{eq:prioGreedy2}, \eqref{eq:prioGreedy3}, and \eqref{eq:prioGreedy4x}
we have that
\begin{align*}
\EX[\exp(\theta^* \sum_s \tilde{Q}^s_1(0))] &= \EX[\exp(\theta^* S \tilde{Q}^1_1(0))] \nonumber \\
&= \EX[\exp(\theta^* S \tilde{Q}_1(0))] \nonumber \\
&\le \sum_{\tau \le 0} S^{2|\tau|}\exp(-\theta^* S K(N) |\tau|) \nonumber \\
&\le c'\mbox{,} \nonumber
\end{align*}
because $S^2 < \exp(\theta^* S K(N))$ for sufficiently large $N$.

Note that though we proved $\EX[\exp(\theta {Q}_1(t))]<c'$ for $t=0$, this holds for
any finite $t$ (exactly the same proof). This in turn implies that the number of unallocated
steps in depth $0$ have bounded exponential moment for some $\theta>0$. This will be used
in the remainder of the proof where we show that the same is true for all depths.


\noindent {\em Induction over Depths, $d$ to $d+1$:}

Now we show that if the total number of unallocated steps at depth $d$ satisfies
$\EX[\exp(\theta {Q}(0))]<c'$, then the same is true for $d+1$.
To show the same result for steps at all depths we consider the following process.
Let $d_j(k)$ be the depth of $k$ in $T_j$, then
$Q^s_{d+1}(t)=\sum_{j,k:d_j(k) \le {d+1} } Q_{j,k} r_{j,k,s}$ represents
the number of unserved hours of skills $s$ for all steps in
the system.

Like in the case of the proof for depth $0$, we 
construct  process $\tilde{Q}^s_{d+1}$ such that $\sum_s \tilde{Q}^s_{d+1}$
dominates the process $\sum_s Q^s_{d+1}$. 
Using the same argument as before, at any time 
$t$ any skill $s$ queue gets a service of at least
\[
\min_{s \in [S]} \sum_m U_m h_{m,s} - \bar{r}\mbox{,}
\]
and the amount of required service brought to the 
queue $Q^s$ at time $t$ is upper-bounded by
\[
\max_{s \in [S]} \sum_{j,k:d_j(k) \le {d+1}} A_{j,k}(t) r_{j,k,s}\mbox{.}
\]

Then using the same argument, the process
\begin{align*}
\tilde{Q}_{d+1}(t+1) 
= \left|(\tilde{Q}_{d+1}(t) +  S \max_{s \in [S]} \sum_{j,k:d_j(k) \le {d+1}}  A_{j,k}(t) r_{j,k,s} -  \min_{s \in [S]} \sum_m U_m h_{m,s} + \bar{r}\right|^+ 
\end{align*}
upper-bounds the process $\sum_s Q^s_{d+1}$. Then we can follow the steps that we followed
using $X_s$ and $Y_s$ previously.
Let $X'_s:=\sum_{j,k:d_j(k) \le {d+1}} A_{j,k}(t) r_{j,k,s}$ and $Y'_s:=\sum_m U_m h_{m,s}$, respectively.
But note that $A_{j,k}$ for $k>1$ is not an external i.i.d.\ process, rather it is the
number of steps of type $A_{j,p_j(k)}$ that were completed. Hence, we cannot 
follows the exactly same steps. Note that
\begin{align*}
\EX[\exp(\theta S \sum_{\tau\le t \le 0}(\max_{s} X'_s(t) - \min_s Y'_s(t) + \bar{r}))] 
&\le\EX[\exp(\theta \sum_{\tau\le t \le 0}\max_{s,s' \in [S]}(X'_s(t) - Y'_{s'}(t) + \bar{r}))] \nonumber \\
&\le \EX[\sum_{s,s' \in [S]} \exp(\theta \sum_{\tau\le t \le 0}(X'_s(t) - Y'_{s'}(t) + \bar{r}))] \nonumber \\
&= \sum_{s,s' \in [S]} \EX[\exp(\theta S \sum_{\tau\le t \le 0}(X'_s(t) - Y'_{s'}(t) + \bar{r}))] \mbox{.}
\end{align*}
Also note that,
\begin{align*}
\sum_{\tau\le 0} \EX[\exp(\theta S \sum_{\tau\le t \le 0}\max_{s,s' \in [S]} (X'_s(t) - Y'_{s'}(t) + \bar{r}))] \le \sum_{s,s' \in [S]} \sum_{\tau\le t \le 0} \EX[\exp(\theta S \sum_{\tau\le t \le 0}(X'_s(t) - Y'_{s'}(t) + \bar{r}))]\mbox{.}
\end{align*}

So, we investigate $\EX[\exp(\theta \sum_{\tau\le t \le 0}(X'_s(t) - Y'_{s'}(t) + \bar{r}))]$.
\begin{align*}
&\EX[\exp(\theta \sum_{\tau\le t \le 0}(X'_s(t) - Y'_{s'}(t) + \bar{r}))] \nonumber \\
&= \exp(\bar{r}\theta)\EX[\exp(\theta \sum_{\tau\le t \le 0}(\sum_{j,k} A_{j,k}(t) r_{j,k,s} - \sum_m U_m(t) h_{m,s'}))] \nonumber \\
&= \exp(\bar{r}\theta) \EX[\exp(\theta (\sum_{j,k: d_j(k)\le d+1} r_{j,k,s} \sum_{\tau\le t \le 0}A_{j,k}(t)  - \sum_m \sum_{\tau\le t \le 0} U_m(t) h_{m,s'}))]\mbox{.}
\end{align*}
Note that $\sum_{\tau\le t \le 0}A_{j,k}(t)$ represent the creation (or appearance/arrival) of 
steps of type $(j,k)$ between time $\tau$ and $0$ (with a similar interpretation for
agents in case of $\sum_{\tau\le t \le 0} U_m(t)$), which we denote by
$A_{j,k}(\tau:0)$ (and $U_m(\tau:0)$), respectively.

Now there is an important observation about $A_{j,k}(\tau:0)$:
\begin{equation}
A_{j,k}(\tau:0) \le Q_{j,p_j(k)}(\tau-1) + A_{j,p_j(k)}(\tau-1:-1)\mbox{,} \label{eq:prioGreedy5} 
\end{equation}
where $p_j(k)$ is the parent of $k$ in $T_j$, due to the following. As each job takes one slot to 
be served, no job whose step $(j,p_j(k))$ completed after 
$-1$ can have its step $(j,k)$ be available for service at or before $0$.
Thus by induction on the function $p_j$ we can write
\begin{align}
A_{j,k}(\tau:0) &\le \sum_{w=1}^{d} Q_{j,w}(\tau-1-d+w) + A_{j,1}(\tau-d_j(k):-d_j(k))\mbox{,} \label{eq:prioGreedy5}
\end{align}
as $d_j(k)=d+1$ by the inductive assumption. Note that from $1$ to $k$ (at depth $d+1$) there
is a unique $d$-length path and hence, on that path w.l.o.g.\ we denote the respective steps
by $(j,w)$ where $w$ is its depth on that path.
 
Hence,
\begin{align*}
&\EX[\exp(\theta (\sum_{j,k:d_j(k)\le d+1} r_{j,k,s} \sum_{\tau\le t \le 0}A_{j,k}(t)  - \sum_m \sum_{\tau\le t \le 0} U_m(t) h_{m,s'}))] \nonumber \\
&\le \EX[\exp(\theta (\sum_{j,k:d_j(k)\le d+1} r_{j,k,s} (\sum_{w=1}^{d} Q_{j,w}(\tau-1-d+w) \nonumber \\
& \quad + A_{j,1}(\tau-d_j(k):-d_j(k))) - \sum_m \sum_{\tau\le t \le 0} U_m(t) h_{m,s'}))]
\end{align*}

Note that $\sum_m \sum_{\tau\le t \le 0} U_m(t) h_{m,s'}$ is independent of 
\begin{align*} 
& \sum_{j,k:d_j(k)\le d+1} r_{j,k,s} \Bigg(\sum_{w=1}^{d} Q_{j,w}(\tau-1-d+w) + A_{j,1}(\tau-d_j(k):-d_j(k))\Bigg)\mbox{,}
\end{align*}
because $A_{j,1}$ are i.i.d.\ (independent of $U_m$) 
and $Q_{j,w}(\tau-d+w)$ does not depend on $U_m(\tau:0)$ for $d\ge w \ge 1$.
Hence, 
\begin{align*}
&\EX[\exp(\theta (\sum_{j,k:d_j(k)\le d+1} r_{j,k,s} \sum_{\tau\le t \le 0}A_{j,k}(t)  - \sum_m \sum_{\tau\le t \le 0} U_m(t) h_{m,s'}))] \nonumber \\
&\le \EX[\exp(\theta (\sum_{j,k:d_j(k)\le d+1} r_{j,k,s} \Bigg(\sum_{w=1}^{d} Q_{j,w}(\tau-1-d+w) + A_{j,1}(\tau-d_j(k):-d_j(k))\Bigg)))] \nonumber \\ 
&\qquad \times \EX[\exp(-\theta \sum_m \sum_{\tau\le t \le 0} U_m(t) h_{m,s'})]\mbox{.} \nonumber
\end{align*}

We use the previously derived bound for $\EX[\exp(-\theta \sum_m 
\sum_{\tau\le t \le 0} U_m(t) h_{m,s'})]$. So, we only concern
ourselves with 
\begin{align*}
&\EX[\exp(\theta (\sum_{j,k:d_j(k)\le d+1} r_{j,k,s} \bigg(\sum_{w=1}^{d} Q_{j,w}(\tau-1-d+w) + A_{j,1}(\tau-d_j(k):-d_j(k))\bigg) ))]\mbox{.} 
\end{align*}
Consider any $Q_{j,w}(\tau-1-d+w)$ at depth $w$, then $A_{j,1}(\tau-d-1)$ is
independent of it. As $A_{j,1}$ are i.i.d.\ and future arrivals in a queue
are independent of present and past queue-lengths, we have
\begin{align*}
& \EX[\exp(\theta (\sum_{j,k:d_j(k)\le d+1} r_{j,k,s} 
(\sum_{w=1}^{d} Q_{j,w}(\tau-1-d+w) + A_{j,1}(\tau-d_j(k):-d_j(k)))] \nonumber \\
&= \EX[\exp(\theta(\sum_{j,k:d_j(k)\le d+1} r_{j,k,s}\sum_{w=1}^{d} Q_{j,w}(\tau-1-d+w)))] \times \EX[\exp(\sum_{j,k:d_j(k)\le d+1} r_{j,k,s} A_{j,1}(\tau-d_j(k):-d_j(k)))] \nonumber
\end{align*}

For the second term we obtain a bound using previous techniques and note that
since $\vlambda \in \alpha \mc{C}$, 
\[
\sum_{j,k:d_j(k)\le d+1} r_{j,k,s} \EX[A_{j,1}] \le \sum_m \mu_m h_{m,s'}\mbox{,}
\]
which in the same way as above will imply that for some $K(N)$ and some $\theta>0$,
\begin{align*}
& \EX[\exp(\theta (\sum_{j,k:d_j(k)\le d+1} r_{j,k,s} A_{j,1}(\tau-d_j(k):-d_j(k)) - \sum_m \sum_{\tau\le t \le 0} U_m(t) h_{m,s'}))] \nonumber \\
&\quad\le \exp(-\theta K(N) \tau)\mbox{.}
\end{align*}

Note that 
\[
\EX[\exp(\theta(\sum_{j,k:d_j(k)\le d+1} r_{j,k,s}\sum_{w=1}^{d} Q_{j,w}(\tau-1-d+w)))] < \infty
\]
by the inductive assumption that the number of unallocated steps at depth $\le d$ have
finite exponential moments.

So we have that 
\[
\EX[\exp(\theta \sum_{\tau\le t \le 0}(X_s(t) - Y_s(t) + \bar{r}))]< c_1 \exp(-\theta K(N) \tau)\mbox{,}
\]
and so, in turn (using the same steps as above) $Q^s$ has finite exponential moment for 
some $\theta$. The rest of the steps are similar to above and we get the desired result
that 
\[
\EX[\exp(\theta \sum_{j,k: d_j(k) \le d+1} Q_{j,k})] < \infty\mbox{.}
\]
By induction on $d$, we have proven that the total number of unallocated steps over all types of jobs
have finite exponential moment (say $c'$).

Therefore,
\begin{align*} 
\PR(\sum_{j,k} Q_{j,k} > q) &\le \exp(-\theta q) \EX[\exp(\theta \sum_{j,k} Q_{j,k})] \nonumber\\
&\le c' \exp(-\theta q)\mbox{.} \nonumber
\end{align*}
So for $q = \frac{3\log N}{\theta}$, we have the result (as $c'$ is constant). 

\subsection{Proof of Theorem \ref{thm:LPrelaxFlex}}

The following lemma is useful for the proof.
\begin{lemma}
\label{lem:equivalentOpti}
Any feasible solution of problem \eqref{eq:RvarOpti} is a feasible solution of problem \eqref{eq:MvarOpti}.
\end{lemma}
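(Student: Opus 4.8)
The plan is to produce an explicit, objective-preserving map from feasible solutions of \eqref{eq:RvarOpti} to feasible solutions of \eqref{eq:MvarOpti}; since the integer variables $\{s_{j,k}\}$ are common to both problems and will be left untouched, only feasibility needs checking. So let $(\{s_{j,k}\},\{\tilde h_{m_r,s}\})$ be feasible for \eqref{eq:RvarOpti}. Recall that the $R$ available agents carry types $\{m_r:r\in[R]\}$, that $u_m=|\{r\in[R]:m_r=m\}|$ is the number of available type-$m$ agents, and that in the flexible-agent model of Sec.~\ref{sec:model} the total time of agent $r$ is $\tilde h_{m_r}=h_{m_r}$. For every type $m$ with $u_m\ge 1$ I would set
\[
\alpha_{m,s}:=\frac{1}{u_m h_m}\sum_{r:\,m_r=m}\tilde h_{m_r,s},\qquad s\in[S],
\]
and $\alpha_{m,s}:=0$ for types with $u_m=0$ (such types do not occur in \eqref{eq:RvarOpti} at all); intuitively $\alpha_{m,s}$ is the average fraction of time the available type-$m$ agents devote to skill $s$ under the given solution.

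The next step is to verify the three constraint groups of \eqref{eq:MvarOpti} for this choice. Non-negativity and the support condition $\alpha_{m,s}=0$ for $s\notin S_m$ are immediate, because each term $\tilde h_{m_r,s}$ with $m_r=m$ is non-negative and vanishes when $s\notin S_{m_r}=S_m$ by the first line of \eqref{eq:flexibleCons}. For the simplex constraint, summing over $s$ and using $\sum_s\tilde h_{m_r,s}\le\tilde h_{m_r}=h_m$ gives
\[
\sum_s\alpha_{m,s}=\frac{1}{u_m h_m}\sum_{r:\,m_r=m}\sum_s\tilde h_{m_r,s}\le\frac{1}{u_m h_m}\sum_{r:\,m_r=m}h_m=1 .
\]
For the skill-hour balance, observe that $\sum_{m}u_m h_m\alpha_{m,s}=\sum_m\sum_{r:\,m_r=m}\tilde h_{m_r,s}=\sum_{r=1}^{R}\tilde h_{m_r,s}$, so the third line of \eqref{eq:flexibleCons} (with $a_{j,k}=s_{j,k}$) reads exactly $\sum_{j,k}s_{j,k}r_{j,k,s}\le\sum_{m}u_m h_m\alpha_{m,s}$. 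Hence $(\{s_{j,k}\},\{\alpha_{m,s}\})$ is feasible for \eqref{eq:MvarOpti}, which is the claim.

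I do not expect a genuine obstacle here: the statement is an aggregation/bookkeeping fact, and the only points requiring care are the degenerate case $u_m=0$ and the identification $\tilde h_{m_r}=h_{m_r}$ taken from the model. The companion direction needed for the full equivalence of \eqref{eq:RvarOpti} and \eqref{eq:MvarOpti} invoked in the proof of Thm.~\ref{thm:LPrelaxFlex} is the equally routine disaggregation $\tilde h_{m_r,s}:=h_{m_r}\alpha_{m,s}$, which turns a feasible $\alpha$ back into per-agent allocations with the same $s_{j,k}$; I would record it as a remark alongside this lemma so that the two inequalities between the optimal values of \eqref{eq:RvarOpti} and \eqref{eq:MvarOpti} are both in hand.
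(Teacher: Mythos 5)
Your proposal is correct and follows essentially the same route as the paper's proof of Lemma~\ref{lem:equivalentOpti}: both aggregate the per-agent allocations $\tilde h_{m_r,s}$ over all available agents of each type $m$ and set $\alpha_{m,s}$ to the resulting fraction of the total type-$m$ time $u_m h_m$ devoted to skill $s$, then check the simplex and skill-hour constraints of \eqref{eq:MvarOpti} exactly as you do. Your writeup is in fact slightly more careful than the paper's (explicitly handling $u_m=0$ and the support condition), but there is no substantive difference in the argument.
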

\begin{IEEEproof}
Among the total $R$ available agents, let $R_m$ be of type $m$,
and let them be denoted $i_1, i_2, \dots, i_{R_m}$.
For an allocation $\mbf{a}$ in the formulation
\eqref{eq:RvarOpti}, let $\tilde{h}_{i_k,s}$ be the 
time that agent $i_1$ is assigned for skill $s$. Then among all these type $m$ agents,
the total contribution to skill $s$ is $\sum_{i=1}^{R_m} \tilde{h}_{i_k,s}$.

Note that the total time for these $R_m$ agents is $R_m h_m$. Now by the allocation
constraint we have $\sum_s \sum_{i=1}^{R_m} \tilde{h}_{i_k,s} \le R_m h_m$.
If we choose $\alpha_{m,s} \ge \sum_{i=1}^{R_m} \tilde{h}_{i_k,s}/R_m h_m$,
this is a valid allocation in formulation \eqref{eq:MvarOpti}, as 
$\sum_s \alpha_{m,s} \le 1$ and it also meets the allocation constraint. 
So, for this $\alpha_{m,s}$ the allocation $\mbf{a}$ is a valid
allocation in problem \eqref{eq:MvarOpti}.

This proves that every valid allocation in \eqref{eq:RvarOpti} is also
a valid allocation in \eqref{eq:MvarOpti}.
\end{IEEEproof}

Since solving \eqref{eq:MvarOpti} yields a feasible allocation, the lemma implies the two problems are actually 
alternate formulations of one another.
The rest of the proof follows the same steps as the proof of Thm.~\ref{thm:LPrelax}.

\subsection{Proof of Theorem \ref{thm:flexGreedy}}
The result can be derived in the same way as Thm.~\ref{thm:prioGreedy}, through the
use of the following lemmata.

\begin{lemma}
\label{lem:flexGreedy1}
Let $\{R_i\}$ be i.i.d.\ Bernoulli random variables with $\PR(R_i=1)=p \in (0,1)$ and
$N$ be a random variable independent of $\{R_i\}$ with a moment generating function
$M_N(\theta)$. Then
\[ 
\EX[\exp(\theta \sum_{i=1}^N R_i)] =  M_N(\log\left(p\exp(\theta) + (1-p)\right))\mbox{.}
\]
\end{lemma}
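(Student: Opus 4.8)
The plan is to condition on $N$ and then use the independence between $N$ and the i.i.d.\ sequence $\{R_i\}$. Since $N$ takes values in $\Z_+$ and is independent of the $R_i$, for each fixed $n$ in the support of $N$ we have
\[
\EX\!\left[\exp\!\left(\theta \sum_{i=1}^{N} R_i\right)\ \middle|\ N=n\right] = \EX\!\left[\exp\!\left(\theta \sum_{i=1}^{n} R_i\right)\right].
\]
That is, freezing the number of summands reduces the inner object to the moment generating function of a fixed-length sum of i.i.d.\ Bernoulli variables, which is straightforward to evaluate.

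Next I would compute that inner expectation explicitly. By independence of the $R_i$ and the elementary identity $\EX[e^{\theta R_1}] = p\,e^{\theta} + (1-p)$,
\[
\EX\!\left[\exp\!\left(\theta \sum_{i=1}^{n} R_i\right)\right] = \prod_{i=1}^{n} \EX[e^{\theta R_i}] = \bigl(p\,e^{\theta} + (1-p)\bigr)^{n} = \exp\!\Bigl(n \log\bigl(p\,e^{\theta} + (1-p)\bigr)\Bigr),
\]
where the last rewriting is legitimate because $p\,e^{\theta} + (1-p) > 0$ for every $\theta \in \R$ (as $p \in (0,1)$ and $e^{\theta}>0$), so the logarithm is well defined. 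Taking the expectation over $N$ by the tower property then gives
\[
\EX\!\left[\exp\!\left(\theta \sum_{i=1}^{N} R_i\right)\right] = \EX\!\left[\exp\!\Bigl(N \log\bigl(p\,e^{\theta} + (1-p)\bigr)\Bigr)\right] = M_N\!\Bigl(\log\bigl(p\,e^{\theta} + (1-p)\bigr)\Bigr),
\]
by the definition $M_N(s)=\EX[e^{sN}]$ with $s = \log(p\,e^{\theta}+(1-p))$, which is exactly the claimed identity.

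The argument is entirely routine, so no substantive obstacle is expected; the only points worth a line of care are the well-definedness of the logarithm (handled by the strict positivity above) and, for the statement to be useful in the application, the finiteness of $M_N$ at the argument $\log(p\,e^{\theta}+(1-p))$. The latter is automatic whenever $M_N(\theta)$ itself is finite for $\theta>0$, since in that range $\log(p\,e^{\theta}+(1-p)) < \theta$, so finiteness of the moment generating function is inherited at the smaller argument. I would mention this remark when invoking the lemma in the proof of Theorem~\ref{thm:flexGreedy}.
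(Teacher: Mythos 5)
Your proof is correct and follows exactly the same route as the paper's: condition on $N$, evaluate the Bernoulli product to get $(p e^{\theta}+(1-p))^{N}$, rewrite as $\exp(N\log(p e^{\theta}+(1-p)))$, and apply the tower property to recognize $M_N$. The additional remarks on positivity of the logarithm's argument and finiteness of $M_N$ at the smaller argument are sound but not needed for the identity itself.
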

\begin{IEEEproof}
\begin{align*}
\EX[\exp(\theta \sum_{i=1}^N R_i)] &=\EX\left[\EX[\exp(\theta \sum_{i=1}^N R_i)|N]\right] \nonumber \\
&= \EX\left[\left(p\exp(\theta) + (1-p)\right)^N\right] \nonumber \\
&= \EX\left[\exp(\log\left(p\exp(\theta) + (1-p)\right) N) \right] \nonumber \\
&= M_N(\log\left(p\exp(\theta) + (1-p)\right))
\end{align*}
\end{IEEEproof}

\begin{lemma}
\label{lem:outerFF}
Let $\mc{C}_{F,F}^O=\{\vlambda: \exists b_m \in [0,1]^S \mbox{ for all } m,
\mbox{s.t.} \ \sum_{s} b_{m,s} \le 1, b_{m,s}>0$ \ \mbox{only if}
$s \in S_m, \sum_{m: s \in S_m} b_{m,s} h_m \mu_m > \sum_{j,k} \lambda_j r_{j,k,s}
\mbox{ for all } s\}$. Then,
$\mc{C}_{F,F} \subset \mc{C}_{F,F}^O$.
\end{lemma}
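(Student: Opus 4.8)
The plan is to mirror the proof of Lemma~\ref{lem:outerIF}, the $(I,F)$ outer bound, and adapt it to the flexible-agent allocation constraint~\eqref{eq:flexibleCons}. The only genuinely new ingredient is that each agent's per-skill time split can be \emph{aggregated} into a per-type fractional vector lying in the unit simplex, which is precisely the structure appearing in the definition of $\mc{C}_{F,F}^O$; this is the same lossless aggregation already used (in the discussion around Lemma~\ref{lem:equivalentOpti}) to pass from the $R$-variable formulation~\eqref{eq:RvarOpti} to the $M$-variable formulation~\eqref{eq:MvarOpti}. So I expect no deep obstacle: the work is in carrying the aggregation through the two layers of averaging that define $\mc{C}_{F,F}$ and in a small amount of bookkeeping about strict versus non-strict inequalities.

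First I would take an arbitrary $\vlambda \in \mc{C}_{F,F}$. By definition of the capacity region, $\vlambda^E \in \mbf{C}$, so there is a family $\{\mbf{c}(\bu)\}$ with $\mbf{c}(\bu)$ in the convex hull of $C_{F,F}(\bu)$ and $\vlambda^E \le \sum_{\bu}\Gamma(\bu)\,\mbf{c}(\bu)$ coordinate-wise; write $\mbf{c}(\bu)=\sum_v \alpha_v(\bu)\,\mbf{a}_v(\bu)$ as a finite convex combination of points $\mbf{a}_v(\bu)\in C_{F,F}(\bu)$. Fix $\bu$ and a point $\mbf{a}_v(\bu)$: by~\eqref{eq:flexibleCons}, for each of the $u_m$ available agents of type $m$ there is a split $\tilde h^{(v)}_{m,i,s}\ge 0$, $i\in[u_m]$, with $\tilde h^{(v)}_{m,i,s}=0$ for $s\notin S_m$, $\sum_s\tilde h^{(v)}_{m,i,s}\le h_m$, and $\sum_{j,k}a_{v,j,k}(\bu)\,r_{j,k,s}\le\sum_m\sum_{i=1}^{u_m}\tilde h^{(v)}_{m,i,s}$ for every $s$. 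For $u_m>0$ I would set $\beta^{(v)}_{m,s}(\bu):=\frac{1}{u_m h_m}\sum_{i=1}^{u_m}\tilde h^{(v)}_{m,i,s}$ (and $\beta^{(v)}_{m,s}(\bu):=0$ when $u_m=0$): then $\beta^{(v)}_{m,s}(\bu)\ge 0$, it vanishes unless $s\in S_m$, $\sum_s\beta^{(v)}_{m,s}(\bu)\le 1$, and $\sum_{j,k}a_{v,j,k}(\bu)\,r_{j,k,s}\le\sum_m u_m h_m\,\beta^{(v)}_{m,s}(\bu)$.

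Next I would push this through the two averagings. Put $\bar\beta_{m,s}(\bu):=\sum_v\alpha_v(\bu)\,\beta^{(v)}_{m,s}(\bu)$; as a convex combination of simplex points it again satisfies the sign, support, and $\sum_s(\cdot)\le 1$ constraints, and $\sum_{j,k}c_{j,k}(\bu)\,r_{j,k,s}\le\sum_m u_m h_m\,\bar\beta_{m,s}(\bu)$. Using $\mu_m=\sum_\bu\Gamma(\bu)u_m$, define
\[
b_{m,s}:=\frac{1}{\mu_m}\sum_{\bu}\Gamma(\bu)\,u_m\,\bar\beta_{m,s}(\bu);
\]
then $b_m$ lies in the simplex appearing in $\mc{C}_{F,F}^O$ (nonnegative, supported on $S_m$, $\sum_s b_{m,s}\le 1$), and since $\vlambda^E\le\sum_\bu\Gamma(\bu)\mbf{c}(\bu)$ gives $\lambda_j\le\sum_\bu\Gamma(\bu)c_{j,k}(\bu)$ for every $(j,k)$, multiplying by $r_{j,k,s}\ge 0$, summing over $(j,k)$, and applying the per-$\bu$ bound yields
\[
\sum_{j,k}\lambda_j r_{j,k,s}\le\sum_\bu\Gamma(\bu)\sum_{j,k}c_{j,k}(\bu)r_{j,k,s}\le\sum_\bu\Gamma(\bu)\sum_m u_m h_m\bar\beta_{m,s}(\bu)=\sum_{m:s\in S_m}b_{m,s}\,h_m\,\mu_m
\]
for every $s\in[S]$.

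The one subtlety I would flag is strict versus non-strict inequality: the computation above produces ``$\ge$'' where $\mc{C}_{F,F}^O$ asks for ``$>$''. This is resolved exactly in the regime where the lemma is applied, namely for $\vlambda\in\alpha\,\mc{C}_{F,F}$ with $\alpha<1$: applying the above to $\vlambda/\alpha\in\mc{C}_{F,F}$ produces $b_m$ with $\sum_{j,k}\lambda_j r_{j,k,s}\le\alpha\sum_{m:s\in S_m}b_{m,s}h_m\mu_m<\sum_{m:s\in S_m}b_{m,s}h_m\mu_m$, the strict inequality (equivalently, the clean statement is $\mc{C}_{F,F}\subset\overline{\mc{C}_{F,F}^O}$). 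The remaining bookkeeping is harmless: realizations $\bu$ with $u_m=0$ contribute $0$ to the sum defining $b_{m,s}$, so positivity of $\mu_m$ — guaranteed under crowd scaling — is all that the normalization needs.
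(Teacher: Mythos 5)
Your proposal is correct and follows essentially the same route as the paper: the paper's own proof is a two-sentence remark that the per-realization constraints in \eqref{eq:MvarOpti} imply the corresponding constraints in expectation, and your argument is the fully worked-out version of that observation, mirroring the aggregation used for Lemma~\ref{lem:outerIF} and around Lemma~\ref{lem:equivalentOpti}. Your explicit handling of the strict-versus-non-strict inequality (and the observation that the lemma is only invoked for $\vlambda\in\alpha\,\mc{C}_{F,F}$ with $\alpha<1$) is a detail the paper glosses over, and is worth keeping.
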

\begin{IEEEproof}
This follows from the constraints in \eqref{eq:MvarOpti}. Because constraints
in \eqref{eq:MvarOpti} are per sample realization, and the above constraints are 
in expectation. So for constraints in \eqref{eq:MvarOpti} to be satisfied, the
above constraints must be satisfied.
\end{IEEEproof}

\begin{lemma}
\label{lem:flexGreedy2}
For any $\tau$, $\PR\{\cap_{t=\tau}^\infty \cap_s \{\psi_{m,s} (t)=p_{m,s}
\}\}=1$ such that $p_{m,s}$ solves \eqref{eq:flexGreedyAlg1}, assuming ties
between multiple solutions are broken deterministically.\footnote{Extends to
random tie breaking also, but involves more details.}
\end{lemma}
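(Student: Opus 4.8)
\emph{Proof proposal.} The statement is to be read in the steady-state regime $t_0=-\infty$ of Algorithm~\ref{alg:flexGreedy}, and the whole argument reduces to showing that in this regime the running averages coincide almost surely with the true means: $\bar{\mbf{A}}(t)=\vlambda$ and $\bar{\mbf{U}}(t)=\bm{\mu}$ for every finite $t$. Given this, the lemma is immediate: the feasibility LP in Step~2 then receives exactly the deterministic input $(\vlambda,\bm{\mu})$ at every $t$, so --- with ties broken by a fixed rule --- it returns the same vector $p_{m,s}$ (namely the tie-broken solution of \eqref{eq:flexGreedyAlg1} with $\bar A_j(t),\bar U_m(t)$ replaced by $\lambda_j,\mu_m$) at every $t$, provided the ``pick randomly'' fallback is never invoked, i.e.\ provided this LP is feasible. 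So the plan has three parts: (i) the running averages degenerate to the means in steady state; (ii) the limiting LP is feasible, so $p_{m,s}$ is well defined and actually used; (iii) assemble.

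For (i), solve the linear recursion for $\bar{\mbf{A}}$: writing $P_n=\prod_{j=1}^n(1-\gamma(j))$ one gets $\bar{\mbf{A}}(t)=P_{t-t_0}\mbf{1}+\sum_{i=1}^{t-t_0}\omega_i\,\mbf{A}(t_0+i)$ with nonnegative weights $\omega_i=\gamma(i)\prod_{j=i+1}^{t-t_0}(1-\gamma(j))=P_{t-t_0}(1/P_i-1/P_{i-1})$, so that $\sum_i\omega_i=1-P_{t-t_0}$. The hypotheses on $\gamma$ supply the two facts actually used: $\sum_k\gamma(k)=\infty$ (hence $P_{t-t_0}\to0$, so the initial-condition term vanishes and $\sum_i\omega_i\to1$) and $\gamma(k)\to0$ with $\sum_k\gamma(k)^2<\infty$. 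Since $\omega_i\le\gamma(i)\exp(-\sum_{j>i}\gamma(j))$, the largest weight sits near the index $i^\star$ with $\sum_{j>i^\star}\gamma(j)\approx1$, and $i^\star\to\infty$ as $t-t_0\to\infty$ because $\sum_k\gamma(k)=\infty$; hence $\max_i\omega_i\to0$ and $\sum_i\omega_i^2\le\max_i\omega_i\to0$. Using the bounded second moments of $\mbf{A}$ this gives $\EX\|\bar{\mbf{A}}(t)-\vlambda\|^2=\|P_{t-t_0}(\mbf{1}-\vlambda)\|^2+\big(\sum_i\omega_i^2\big)\sum_j\mathrm{Var}(A_j(t))\to0$ as $t-t_0\to\infty$, i.e.\ $L^2$ (hence weak) convergence of $\bar{\mbf{A}}(t;t_0)$ to the constant $\vlambda$; since the steady-state law is the $t_0\to-\infty$ weak limit of these laws, it is the point mass at $\vlambda$, so $\bar{\mbf{A}}(t)=\vlambda$ a.s.\ in steady state. (Equivalently, a Robbins--Siegmund argument applied to $\bar{\mbf{A}}(t)-\vlambda=(1-\gamma(t-t_0))(\bar{\mbf{A}}(t-1)-\vlambda)+\gamma(t-t_0)(\mbf{A}(t)-\vlambda)$ using $\sum\gamma=\infty$, $\sum\gamma^2<\infty$ gives a.s.\ convergence directly.) The identical reasoning gives $\bar{\mbf{U}}(t)=\bm{\mu}$ a.s., and independence of the two sequences is inherited from that of $\mbf{A}$ and $\mbf{U}$.

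For (ii), because $\vlambda\in\alpha\,\mc{C}_{F,F}$ with $\alpha<1$ we may apply Lemma~\ref{lem:outerFF} to $\vlambda/\alpha\in\mc{C}_{F,F}$ to obtain a valid profile $\{b_{m,s}\}$ ($b_{m,s}\ge0$, $\sum_s b_{m,s}\le1$, $b_{m,s}>0$ only if $s\in S_m$) with $\sum_{m:s\in S_m}b_{m,s}h_m\mu_m>\tfrac1\alpha\sum_{j,k}\lambda_j r_{j,k,s}$ for every $s$; choosing the algorithm's parameter $\epsilon\le 1-\alpha$ then makes $\psi_{m,s}=b_{m,s}$ a feasible point of \eqref{eq:flexGreedyAlg1} at input $(\vlambda,\bm{\mu})$, so the LP is feasible, the random fallback is never used, and the deterministic tie-break pins down a unique $p_{m,s}$. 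Combining with (i): a.s.\ the LP in Step~2 is solved with input $(\vlambda,\bm{\mu})$ at every $t$ and returns $\psi_{m,s}(t)=p_{m,s}$ for all $t$, which is stronger than the asserted $\PR\{\cap_{t\ge\tau}\cap_s\{\psi_{m,s}(t)=p_{m,s}\}\}=1$ for every $\tau$. The main obstacle is part (i): making rigorous the collapse of a running average with vanishing-but-not-summable step sizes to the deterministic mean in steady state, and, secondarily, extracting from the stated (loosely written) conditions on $\gamma$ the three properties genuinely needed --- $\sum_k\gamma(k)=\infty$, $\gamma(k)\to0$, $\sum_k\gamma(k)^2<\infty$. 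Part (ii) is routine given Lemma~\ref{lem:outerFF}.
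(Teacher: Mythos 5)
Your proposal is correct and follows essentially the same route as the paper: the paper's proof simply invokes almost-sure convergence of the stochastic-approximation updates $\bar{\mbf{A}},\bar{\mbf{U}}$ to $\vlambda,\bm{\mu}$ (citing standard stochastic-approximation results) and then deduces feasibility of the limiting LP from Lemma~\ref{lem:outerFF} together with $\vlambda\in\alpha\,\mc{C}_{F,F}$ and $t_0=-\infty$, which is exactly your parts (i) and (ii). The only difference is that you make the convergence step self-contained (via the explicit weight decomposition and the Robbins--Siegmund alternative) and correctly observe that the stated conditions on $\gamma$ must be read as the standard requirements $\sum_k\gamma(k)=\infty$, $\sum_k\gamma(k)^2<\infty$, $\gamma(k)\to0$.
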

\begin{IEEEproof}
For the choice of $\gamma(t)$ (and $t_0 = -\infty$) 
it follows from the convergence of stochastic 
approximation update equations \cite{Borkar2008} and the facts
that $\vlambda \in \alpha \mc{C}$ and $t_0=-\infty$. 

By stochastic approximation updates $\bar{A}$ and $\bar{U}$ converges
almost surely to $\vlambda$ and $\mbf{\mu}$ respectively at any finite $\tau$.
The rest follows from Lem.~\ref{lem:outerFF}.
\end{IEEEproof}

Let $\mc{B}_{s,m}$ be the set of agents of type $m$ that has been put into $\mc{B}_s$.
$\mc{B}_{s,m}$ is Bernoulli sampling from $U_m$ agents with probability 
$\psi_{m,s}(t)$.

We follow the same steps as in the proof of Thm.~\ref{thm:prioGreedy}. 
Consider the work and service time brought at time $t$ (as before). 
Note that work brought for skill $s$ is $\sum_j A_{j,1}(t) r_{j,1,s}$ and
service time brought by agents for skill $s$ is 
$\sum_{m: s \in S_m}  h_m \mc{B}_{s,m}(t)$. Hence, we can construct
a queue $\tilde{Q}_1$ (as before)
\begin{align}
Q^s_1(t+1) = (Q^s_1(t) + S \max_s \sum_j A_{j,1}(t) r_{j,1,s} - S \min_s \sum_{m: s \in S_m}  h_m \mc{B}_{s,m}(t))^+ \nonumber
\end{align}

Following the same steps to obtain \eqref{eq:prioGreedy2} we can have
\begin{align}
\EX[\exp(\theta \tilde{Q}_1)] \le \sum_{\tau \le 0} \EX[\exp(\theta S \sum_{\tau \le t \le 0}
(\max_s \sum_j A_{j,1}(t) r_{j,1,s} - \min_s \sum_{m: s \in S_m}  h_m \mc{B}_{s,m}(t))] \nonumber
\end{align}
But a result like \eqref{eq:prioGreedy2} does not follow immediately,
since $\mc{B}_{s,m}(t)$ are not independent over time and $\mc{B}_{s,m}(t)$
depends on $A_j(t)$ via $\{\psi_s(t)\}$. 

Consider the following. Let $X_s(t)=\sum_j A_{j,1}(t) r_{j,1,s}$ and
$Y_s(t)=\sum_{m: s \in S_m}  h_m \mc{B}_{s,m}(t)$, then
\begin{align}
&\EX[\exp(\theta S \sum_{\tau \le t \le 0}
(\max_s X_s(t) - \min_s Y_s(t))] \nonumber \\
&\quad\le \EX[\EX[\exp(\theta S \sum_{\tau \le t \le 0}
(\max_s X_s(t) - \min_s Y_s(t)) |\{\psi_{m,s}(t'),s,m t' \ge \tau\}]] \nonumber
\end{align}

Now by Lem.~\ref{lem:flexGreedy2}, for any finite $\tau_1$,
\[
\PR\{\{\psi_{s}(t')=p_s, \forall t' \ge \tau_1, \forall s,m\} = 1\mbox{.}
\]

Hence, for any finite $\tau_1, \tau_2$, 
$\{\psi_{m,s}(t'),s,m t' \ge \tau_1\}$ are independent of 
$\{A_{j,1}(t): t \tau_2\}$. 

Also, by the above argument, 
$X_s(t)$ and $Y_s(t)$ are independent of each other,
given $\{\psi_{m,s}(t'),s,m t' \ge \tau\}$ and they are also
independent over time. 
\begin{align*}
&\EX[\prod_{\tau \le t \le 0}\EX[\exp(\theta S (\max_s X_s(t) - \min_s Y_s(t)) |\{\psi_{m,s}(t'),s,m t' \ge \tau\})]] \nonumber \\
&= \EX[\sum_{\tau \le t \le 0} \EX[\exp(\theta S \max_{s,s'} (X_s(t)-Y_{s'}(t)) |\{\psi_{m,s}(t'),s,m t' \ge \tau\})]] \nonumber \\
&= \EX[\sum_{\tau \le t \le 0} \sum_{s,s'} \EX[\exp(\theta S (X_s(t)-Y_{s'}(t)) |\{\psi_{m,s}(t'),s,m t'\ge \tau\})]] \nonumber \\
&= \sum_{\tau \le t \le 0} \sum_{s,s'} \EX[\EX[\exp(\theta S X_s(t)|\{\psi_{m,s}(t'),s,m t'\ge \tau\})] \nonumber \\
& \quad\times \EX[\exp(-\theta S Y_{s'}(t)|\{\psi_{m,s}(t'),s,m t'\ge \tau\})]] \nonumber \\
& = \sum_{\tau \le t \le 0} \sum_{s,s'} \EX[\exp(\theta S X_s(t)|\{\psi_{m,s}(t')=\psi_{m,s}(\tau),s,m t'\ge \tau\})]  \nonumber \\
& \quad\times \EX[\exp(-\theta S Y_{s'}(t)|\{\psi_{m,s}(t')=\psi_{m,s}(\tau),s,m t'\ge \tau\})] \nonumber \\
& = \sum_{\tau \le t \le 0} \sum_{s,s'} \EX[\exp(\theta S X_s(t)] \times \EX[\exp(-\theta S Y_{s'}(t)|\{\psi_{m,s}(t')=\psi_{m,s}(\tau),s,m t'\ge \tau\})]
\end{align*}
The first equality follows because $\max_{x \in \mc{X}} f(x) - \min_{x \in \mc{X}}
g(x) = \max_{x,x' \in \mc{X}}(f(x)-g(x))$, for finite $\mc{X}$.
The second equality follows due to independence of $X_s(t)-Y_s(t)$ from
$t\ge \tau$ which is due to Lem.~\ref{lem:flexGreedy2}. The third
equality follows due to independence of $X_s(t)$ and $Y_s(t)$ given
$\{\psi_s(t)\}$. The fourth equality is again due to Lem.~\ref{lem:flexGreedy2},
as $\{\psi_{s}(t')=p_s,s,m t'\ge \tau\}$ is an almost sure event. The last
equality follows because $A_{j}(t), t\ge \tau_1$ are independent of 
$\{\psi_{s}(t')=p_s,s,m t'\ge \tau_2\}$ for any finite $\tau_1$ and $\tau_2$.

Note that $\EX[\exp(\theta S X_s(t)|]$ can be evaluated exactly as in
the proof of Thm.~\ref{thm:prioGreedy}.

Consider for $\tau \le t \le 0$,
\begin{align}
& \EX[\exp(-\theta S Y_{s'}(t)|\{\psi_{m,s}(t')=\psi_{m,s}(\tau),s,m t'\ge \tau\})] \nonumber \\
& = \EX[\exp(-\theta S \sum_{m: s \in S_m} h_m \mc{B}_{s,m}|\{\psi_{m,s}(t')=\psi_{m,s}(\tau),s,m t'\ge \tau\})] \nonumber \\
& = \prod_{m: s \in S_m} \EX[\exp(-\theta S h_m \mc{B}_{s,m}|\{\psi_{m,s}(t')=\psi_{m,s}(\tau),s,m t'\ge \tau\})] \label{eq:flexGreedy6}.
\end{align}

Now $\mc{B}_{s,m}$ is Bernoulli sampling of $U_m$ agents with probability
$\psi_{m,s}(t')=\psi_{m,s}(\tau):=p_{m,s}$. 

Then, by Lem.~\ref{lem:flexGreedy1}, for a $\tilde{\theta}_{m,s} = \log(\exp(\theta) p_{m,s} + 1-p_{m,s})$:
\begin{align*}
\EX[\exp(-\theta S h_m \mc{B}_{s,m}|\{\psi_{s}(t')=p_s,s,m t'\ge \tau\})] = \EX[\exp(-\tilde{\theta}_{m,s} h_m U_m)] \mbox{.}
\end{align*}

Now following the same steps as in the proof of Thm.~\ref{thm:prioGreedy}, we can
show that (for Poisson and Gaussian dominated cases) for a sufficiently large
$N_{\alpha}$, for all $N\ge N_{\alpha}$, and $\delta<\frac{\alpha}{2}$, 
\[
\EX[\exp(-\tilde{\theta}_{m,s} h_m U_m)] \le \exp(-(1-\delta)\tilde{\theta}_{m,s} h_m \mu_m)\mbox{.}
\]

Now, by concavity of logarithms,
\begin{align*}
\tilde{\theta}_{m,s} &= \log\left(p_{m,s}\exp(\theta) + (1-p_{m,s})\right) \nonumber \\
&\ge p_{m,s} \theta \mbox{.}
\end{align*}
This implies
\begin{equation}
\EX[\exp(-\tilde{\theta}_{m,s} h_m U_m)] \le \exp(-(1-\delta) \theta \ p_{m,s} h_m \mu_m)\mbox{.}
\label{eq:flexGreedy5}
\end{equation}

As $\vlambda \in \alpha \mc{C}_{F,F}$, by
Lem.~\ref{lem:outerFF} and \ref{lem:flexGreedy2}, we have
$\sum_{m: s \in S_m} p_{m,s} h_m \mu_m > (1-\alpha)
\sum_{j,k} \lambda_j r_{j,k,s}$ for all
$s$ and $\epsilon < 1-\alpha$. Also, as by assumption 
$|\sum_{j,k} \lambda_j r_{j,k,s} - \sum_{j,k} \lambda_j r_{j,k,s'}|$
is sub-poly$(N)$, hence
$$\sum_{m: s \in S_m} p_{s',m} h_m \mu_m > (1-\alpha)
\sum_{j,k} \lambda_j r_{j,k,s}, \forall s,s'$$

This along with \eqref{eq:flexGreedy5} and \eqref{eq:flexGreedy6} gives the final
result by following the same steps as the proof of Thm.~\ref{thm:prioGreedy}.

\subsection{Proof of Theorem \ref{thm:restGreedy}}
The assumption $\{S_m: m \in [M]\}$ is a partition implies that there exists
a partition of $[S]$, say $\{\mc{K}_l:1 \le l \le L\}$ such that
for $m \in [M]$, $S_m = \mc{K}_l$ for some $l \in [L]$.
Note that $L \le S$. 

As $\vlambda \in \mc{C}_{F,I}$, for any step $(j,k)$ with $\lambda_j>0$,
the set of required skills is a subset of some $\mc{K}_l$. Otherwise,
due to inflexibility of the steps, that step can never be allocated which
contradicts that $\vlambda \in \mc{C}_{F,I}$.

\begin{lemma}
\label{lem:outerFI}
Let $\mc{C}^O_{F,I} = \{\vlambda: \mbox{ for all } l \ \sum_{j:(j,k)-\mbox{skills} 
\ \subset \mc{K}_l} \lambda_j < \sum_{m: S_m = \mc{K}_l} \mu_m 
\lfloor \frac{h_m}{r_{1,1,1}}\rfloor \}$. Then, under the conditions
in Thm.~\ref{thm:restGreedy}, $\mc{C}_{F,I} 
\subset \mc{C}^O_{F,I}$.
\end{lemma}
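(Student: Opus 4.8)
The plan is to follow the same template as the proof of Lemma~\ref{lem:outerIF}; the only new ingredient is that step-inflexibility together with the equal-total-size hypothesis lets one extract an \emph{integer} per-agent capacity $\lfloor h_m/r_{1,1,1}\rfloor$ rather than the relaxed value $h_m/r_{1,1,1}$. Fix $\vlambda\in\mc{C}_{F,I}$. Unwinding the definition of $\mc{C}_{F,I}$ (built from $C_{F,I}(\bu)$ and $\Gamma$ exactly as $\mc{C}_{I,F}$ is built from $C_{I,F}(\bu)$), there exist $\mbf{c}(\bu)\in\mathrm{conv}\,C_{F,I}(\bu)$ with $\vlambda^E\le\sum_{\bu}\Gamma(\bu)\,\mbf{c}(\bu)$, and each $\mbf{c}(\bu)=\sum_v\alpha_v(\bu)\,a_v(\bu)$ with $\alpha_v(\bu)\ge 0$, $\sum_v\alpha_v(\bu)=1$, $a_v(\bu)\in C_{F,I}(\bu)$. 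For a block $\mc{K}_l$ put $P_l=\{(j,k):\{s:r_{j,k,s}>0\}\subseteq\mc{K}_l\}$. As noted just before the lemma, since $\{S_m\}$ is a partition every step with $\lambda_j>0$ lies in exactly one $P_l$, and an agent of type $m$ can serve such a step only if $S_m\supseteq\{s:r_{j,k,s}>0\}$; being a block, $S_m$ must then equal $\mc{K}_l$.

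Next I would fix $\bu$, an extreme allocation $a_v(\bu)$, and a block $\mc{K}_l$, and bound the number of steps of $P_l$ it serves. Membership $a_v(\bu)\in C_{F,I}(\bu)$ supplies integers $z_{j,k,r}\ge 0$ and simplex points $\alpha_r$ with $\alpha_{r,s}=0$ for $s\notin S_{m_r}$, $\sum_s\alpha_{r,s}\le 1$, $\sum_{j,k}\mbf{r}_{j,k}z_{j,k,r}\le\alpha_r h_{m_r}$, and $\sum_r z_{j,k,r}\ge a_{v,j,k}(\bu)$. Summing the vector inequality over $s$ and using that $\sum_s r_{j,k,s}=r_{1,1,1}$ for every $(j,k)$ gives $r_{1,1,1}\sum_{j,k}z_{j,k,r}\le\big(\sum_s\alpha_{r,s}\big)h_{m_r}\le h_{m_r}$; since $\sum_{j,k}z_{j,k,r}$ is an integer, $\sum_{j,k}z_{j,k,r}\le\lfloor h_{m_r}/r_{1,1,1}\rfloor$. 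Moreover $z_{j,k,r}>0$ with $(j,k)\in P_l$ forces $S_{m_r}=\mc{K}_l$, so
\[
\sum_{(j,k)\in P_l}a_{v,j,k}(\bu)\ \le\ \sum_{(j,k)\in P_l}\sum_r z_{j,k,r}\ =\ \sum_{r:\,S_{m_r}=\mc{K}_l}\ \sum_{(j,k)\in P_l}z_{j,k,r}\ \le\ \sum_{r:\,S_{m_r}=\mc{K}_l}\ \sum_{j,k}z_{j,k,r}\ \le\ \sum_{m:\,S_m=\mc{K}_l}u_m\lfloor h_m/r_{1,1,1}\rfloor .
\]
The left side is a linear functional of $a_v(\bu)$ and the right side depends only on $\bu$.

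Finally I would average. Linearity lets this bound survive the convex combination $\sum_v\alpha_v(\bu)a_v(\bu)=\mbf{c}(\bu)$, and then the $\Gamma$-average over $\bu$ (using $\sum_\bu\Gamma(\bu)u_m=\mu_m$) gives $\sum_{(j,k)\in P_l}\big(\sum_\bu\Gamma(\bu)\mbf{c}_{j,k}(\bu)\big)\le\sum_{m:\,S_m=\mc{K}_l}\mu_m\lfloor h_m/r_{1,1,1}\rfloor$. Since $\lambda_j$ is the $(j,k)$-entry of $\vlambda^E$ for every $k$ and $\vlambda^E\le\sum_\bu\Gamma(\bu)\mbf{c}(\bu)$, this yields $\sum_{(j,k)\in P_l}\lambda_j\le\sum_{m:\,S_m=\mc{K}_l}\mu_m\lfloor h_m/r_{1,1,1}\rfloor$; the sum $\sum_{j:\,(j,k)\text{-skills}\subset\mc{K}_l}\lambda_j$ in the definition of $\mc{C}^O_{F,I}$ counts each task type at most once whereas $\sum_{(j,k)\in P_l}\lambda_j$ counts it once per step of it in $P_l$, so it is no larger and the non-strict form of the claimed inequality follows a fortiori. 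The strict inequality is obtained where the lemma is actually applied, namely to $\vlambda\in\alpha\,\mc{C}_{F,I}$ with $\alpha<1$: then $\vlambda/\alpha\in\mc{C}_{F,I}$ gives $\sum_{j}\lambda_j\le\alpha\sum_{m:\,S_m=\mc{K}_l}\mu_m\lfloor h_m/r_{1,1,1}\rfloor<\sum_{m:\,S_m=\mc{K}_l}\mu_m\lfloor h_m/r_{1,1,1}\rfloor$. I expect the only delicate step to be the counting bound: obtaining the floor $\lfloor h_m/r_{1,1,1}\rfloor$ instead of the LP value $h_m/r_{1,1,1}$ is exactly where both step-inflexibility (a whole step goes to one agent, so $z_{j,k,r}\in\Z_+$) and the equal-size assumption $\sum_s r_{j,k,s}\equiv r_{1,1,1}$ are essential, and one must use the partition hypothesis carefully to confine the steps of $P_l$ to agents with $S_m=\mc{K}_l$.
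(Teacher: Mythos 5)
Your proposal is correct and follows essentially the same route as the paper's (terse) proof: confine steps of block $\mc{K}_l$ to agents with $S_m=\mc{K}_l$ via the partition and inflexibility assumptions, bound each agent's throughput by $\lfloor h_m/r_{1,1,1}\rfloor$ using the equal-total-size and integrality of per-agent step counts, and pass from the per-sample-path bound to expectation via the convex-hull/$\Gamma$-averaging template of Lemma~\ref{lem:outerIF}. Your write-up simply makes explicit the averaging steps and the strict-versus-non-strict inequality issue that the paper leaves implicit.
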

\begin{IEEEproof}
Follows by noting the fact that under the conditions in Thm.~\ref{thm:restGreedy} the steps with skill requirements in
$\mc{K}_l$ has can only be served by agents of type $m$
with $S_m=\mc{K}_l$. Also, note that agents with skills
in $\mc{K}_l$ cannot serve any other kinds of steps.

Also, as each step is of same size $r_{1,1,1}$ a type $m$
agent can serve at most $\lfloor \frac{h_m}{r_{1,1,1}}\rfloor$
steps with skill requirements in $S_m$.

This implies that if the agent availability is $\bu$ then $a_{j,k}$ steps
can be served only if
\[
\sum_{j:(j,k)-\mbox{skills} \ \subset \mc{K}_l} a_{j,k} < \sum_{m: S_m = \mc{K}_l} u_m \lfloor \frac{h_m}{r_{1,1,1}}\rfloor \mbox{.}
\]
The rest follows because the sample path constraint is true only if the constraint is true in expectation.
\end{IEEEproof}

Notice that the condition \[
\sum_{j:(j,k)-\mbox{skills} \ \subset \mc{K}_l} \lambda_j < \sum_{m: S_m = \mc{K}_l} \mu_m  \lfloor \frac{h_m}{r_{1,1,1}}\rfloor \}
\] 
can be written as
\[
\sum_d \sum_{j:(j,k)\mbox{-skill} \ \subset \mc{K}_l\cap \mc{A}_d} \lambda_j < \sum_{m: S_m = \mc{K}_l} \mu_m 
\lfloor \frac{h_m}{r_{1,1,1}}\rfloor\mbox{.}
\]
This will be useful later.

Coming back to the main proof, we consider a queue for each $l \in [L]$ and depth $d$ ($\le D$), 
$\tilde{Q}^l_d(t)$. This queue represents the number of unallocated steps with
skill requirements in $\mc{K}_l$ that are at depth $\le d$ on
respective precedence trees.

Note that such steps with skill requirements in $\mc{K}_l$ 
can only be served by agents of types
$\{m: S_m = \mc{K}_l\}$. Note that just as in Restricted Greedy,
steps at depth $\le d$ have priority (allocate themselves before) 
over steps at higher depth:
\begin{align}
\tilde{Q}^l_d(t+1) = (\tilde{Q}^l_d(t) + \sum_{(j,k)\mbox{-skills} \subset \mc{K}_l} A_{j,k}(t)  - \sum_{m: S_m = \mc{K}_l} \lfloor \frac{h_m}{r_{1,1,1}}\rfloor
U_m)^+\mbox{.}
\end{align}

Note that allocation of steps with skill requirements in $\mc{K}_l$ and $\mc{K}_{l'}$
for $l\neq l'$ are independent. Also, the agents with skills in $\mc{K}_l$ and 
$\mc{K}_{l'}$ for $l\neq l'$ are independent. So, if we define $\tilde{Q}_d(t)$:
\begin{align}
\tilde{Q}_d(t+1) = & (\tilde{Q}_d(t) + L \max_l (\sum_{(j,k)\mbox{-skills} \subset
\mc{K}_l} A_{j,k}(t) - \sum_{m: S_m = \mc{K}_l} \lfloor \frac{h_m}{r_{1,1,1}}\rfloor
U_m))^+\mbox{,}
\end{align}
then $\tilde{Q}_d(t)$ is a path-wise upper-bound on $\sum_l \tilde{Q}^l_d(t)$.

Consider the depth $d=1$ first. Then we can follow the same steps
as in the proof of Thm.~\ref{thm:prioGreedy} for the depth $d=1$ case.
Note that $\vlambda \in \alpha \mc{C}_{F,I}$ implies that for all $l$
\[
\sum_{d=1}^D \sum_{j:(j,k)\mbox{-skill}
\ \subset \mc{K}_l\cap \mc{A}_d} \lambda_j < (1-\alpha) \sum_{m: S_m = \mc{K}_l} \mu_m 
\lfloor \frac{h_m}{r_{1,1,1}}\rfloor\mbox{,}
\]
and hence, for any $d\le D$
\[
\sum_{d'=1}^d \sum_{j:(j,k)\mbox{-skill}
\ \subset \mc{K}_l\cap \mc{A}_{d'}} \lambda_j < (1-\alpha) \sum_{m: S_m = \mc{K}_l} \mu_m 
\lfloor \frac{h_m}{r_{1,1,1}}\rfloor\mbox{.}
\]

This along with the same steps as in the proof of Thm.~\ref{thm:prioGreedy}
gives that for some $\theta_1>0$ and $\theta_1=\Omega(1)$, $\forall \theta<\theta_1$
\[
\EX[\exp(\theta_1 \tilde{Q}_1)] < \infty, \mbox{ for all } l\mbox{.}
\]

Then, like the proof of Thm.~\ref{thm:prioGreedy} we can perform induction over $d$ to
prove that for some $\theta_D>0$ and $\theta_D = \Omega(1)$, 
then for all $\theta<\theta_D$:
\[
\EX[\exp(\theta \tilde{Q}_D)] < \infty, \mbox{ for all } l\mbox{,}
\]
where $\tilde{Q}_D(t)$ is an upper bound on $\sum_l Q^l_d(t)$, which is again
the total number of unallocated steps in the system. We obtain
the desired bound from this.

Induction from $d$ to $d+1$ is similar to the proof of Thm.~\ref{thm:prioGreedy}.

\end{appendices}
\end{document}